\let\Cref\crtCref
\let\cref\crtcref
\DeclareOldFontCommand{\bf}{\normalfont\bfseries}{\mathbf}
\newcommand{\cmark}{\ding{51}}
\newcommand{\xmark}{\ding{55}}
\newcommand{\yes}{\textcolor{green!50!black}{\cmark}}
\newcommand{\no}{\textcolor{red!50!black}{\xmark}}
\newcommand*{\diff}[1]{\mathop{}\!\mathrm{d}#1} 
\DeclarePairedDelimiter\ceiling{\lceil}{\rceil} 
\DeclarePairedDelimiter\floor{\lfloor}{\rfloor} 
\newtheorem{theorem}{Theorem}[section]
\newtheorem{corollary}[theorem]{Corollary}
\newtheorem{proposition}[theorem]{Proposition}
\newtheorem{lemma}[theorem]{Lemma}
\theoremstyle{definition}
\newtheorem{definition}[theorem]{Definition}
\newtheorem{example}[theorem]{Example}
\newtheorem{claim}{Claim}
\newcommand{\EJR}[1]{\if\relax\detokenize\expandafter{\@firstofone#1{}}\relax EJR\else EJR-#1\fi\xspace}
\newcommand{\EJROne}{\EJR{1}}
\newcommand{\EJRM}{\EJR{M}}
\newcommand{\strongEJROne}{strong \EJROne}
\newcommand{\size}{\alpha}
\newcommand{\rulex}{MES\xspace}
\newcommand{\RuleX}{Method of Equal Shares\xspace}
\newcommand{\generalizedMESAbbr}{Generalized \rulex}
\newcommand{\generalizedMES}{Generalized \RuleX}
\begin{document}

\title{Approval-Based Voting with Mixed Goods}

\author{
Xinhang Lu\\ UNSW Sydney
\and
Jannik Peters\\ TU Berlin
\and
Haris Aziz\\ UNSW Sydney
\and
Xiaohui Bei\\NTU
\and
Warut Suksompong\\NUS
}

\date{\vspace{-3ex}}

\maketitle

\begin{abstract}
We consider a voting scenario in which the resource to be voted upon may consist of both indivisible and divisible goods.
This setting generalizes both the well-studied model of multiwinner voting and the recently introduced model of cake sharing.
Under approval votes, we propose two variants of the extended justified representation (EJR) notion from multiwinner voting, a stronger one called \emph{EJR for mixed goods (\mbox{EJR-M})} and a weaker one called \emph{EJR up to $1$ (\mbox{EJR-1})}.
We extend three multiwinner voting rules to our setting---GreedyEJR, the method of equal shares (MES), and proportional approval voting (PAV)---and show that while all three generalizations satisfy EJR-1, only the first one provides \mbox{EJR-M}.
In addition, we derive tight bounds on the proportionality degree implied by EJR-M and EJR-1, and investigate the proportionality degree of our proposed rules.
\end{abstract}

\section{Introduction}

In \emph{multiwinner voting}---a ``new challenge for social choice theory'', as \citet{FaliszewskiSkSl17} put it---the goal is to select a subset of candidates of fixed size from a given set based on the voters' preferences.
The candidates could be politicians vying for seats in the parliament, products to be shown on a company website, or places to visit on a school trip.
A common way to elicit preferences from the voters is via the \emph{approval} model, wherein each voter simply specifies the subset of candidates that he or she approves \citep{Kilgour10,LacknerSk22}.
While (approval-based) multiwinner voting has received substantial attention from (computational) social choice researchers in the past few years, a divisible analog called \emph{cake sharing} was recently introduced by \citet{BeiLuSu24}.
In cake sharing, the candidates correspond to a divisible resource such as time periods for using a facility or files to be stored in cache memory.
Following the famous resource allocation problem of \emph{cake cutting} \citep{RobertsonWe98,Procaccia16}, this divisible resource is referred to as a ``cake'', and cake sharing is the collective choice problem of selecting a subset of this resource.

In this paper, we study a setting that simultaneously generalizes both multiwinner voting and cake sharing, which we call \emph{(approval-based) voting with mixed goods}.
Specifically, in our setting, the resource may consist of both indivisible and divisible goods.\footnote{Since a ``candidate'' usually refers to an indivisible entity, we use the term ``good'' instead from here on.}
This generality allows our model to capture more scenarios than either of the previous models.
For example, when reserving time slots, it is possible that some hourly slots must be reserved as a whole, while other slots can be booked fractionally.
Likewise, in cache memory storage, certain files may need to be stored in their entirety, whereas other files can be broken into smaller portions.
Combinations of divisible and indivisible goods have been examined in the context of \emph{fair division}, where the resource is to be divided among interested agents and the entire resource can be allocated \citep{BeiLiLi21,BeiLiLu21,BhaskarSrVa21,KawaseNiSu23,NishimuraSu23}.
By contrast, we investigate mixed goods in a \emph{collective choice} context, where only a subset of the resource can be allocated but the allocated resource is collectively shared by all agents.\footnote{We henceforth use the term ``agent'' instead of ``voter''.}

There are multiple criteria that one can use to select a collective subset of resource based on the approval votes.
For example, one could try to optimize the \emph{social welfare}---the sum of the agents' utilities---or the \emph{coverage}---the number of agents who receive nonzero utility.
A representation criterion that has attracted growing interest is \emph{justified representation (JR)} \citep{AzizBrCo17}.
In multiwinner voting, if there are $n$~agents and $k$ (indivisible) goods can be chosen, then JR requires that whenever a group of at least $n/k$ agents approve a common good, some agent in that group must have an approved good in the selected set.
A well-studied strengthening of JR is \emph{extended justified representation (EJR)}, which says that for each positive integer $t$, if a group of at least $t\cdot n/k$ agents approve no fewer than $t$ common goods (such a group is said to be \emph{$t$-cohesive}), some agent in that group must have no fewer than $t$ approved goods in the selected set.
\citet{AzizBrCo17} showed that the \emph{proportional approval voting (PAV)} rule always outputs a set of goods that satisfies EJR.
In cake sharing, \citet[Sec.~7]{BeiLuSu24} adapted EJR by imposing the condition for every positive \emph{real number} $t$, and proved that the resulting notion is satisfied by the \emph{maximum Nash welfare (MNW)} rule.\footnote{They also noted that JR does not admit a natural analog for cake sharing, since there is no discrete unit of cake.}
Can we unify the two versions of EJR for our generalized setting in such a way that the guaranteed existence is maintained?\footnote{As further evidence for the generality of our setting, we remark that, as \citet[Sec.~1.2]{BeiLuSu24} pointed out, cake sharing itself generalizes another collective choice setting called \emph{fair mixing} \citep{AzizBoMo20}.}

\begin{table*}[!ht]
\centering
\begin{tabular}{c  c  c  c}
\toprule
& \textbf{GreedyEJR-M} & \textbf{Gen.~MES} & \textbf{Gen.~PAV} \\
\midrule
\EJRM & \yes & \no & \no \\
\EJROne & \yes & \yes & \yes \\
Proportionality degree & $\floor{t} \cdot \left( 1 - \frac{\floor{t} + 1}{2t} \right) \approx \frac{t}{2}$ & $\left[ \frac{t - 2 + 1/t}{2}, \frac{\ceiling{t} + 1}{2} \right] \approx \frac{t}{2}$ & $> t-1$ \\
\midrule
Indivisible-goods EJR & \yes$^*$ & \yes$^*$ & \yes$^*$ \\
Cake EJR & \yes & \yes & \no \\
\midrule
Polynomial-time computation & ? & \yes & \no$^*$ \\
\bottomrule
\end{tabular}
\caption{Overview of our results.
The check mark (\yes) indicates that the rule satisfies the property; the cross mark (\no) indicates that it does not.
Entries marked by an asterisk follow from known results in multiwinner voting; the entry on the computation of Generalized PAV relies on the assumption that P $\ne$ NP. We also show that the proportionality degree implied by EJR-M and EJR-1 is $\floor{t} \cdot \left( 1 - \frac{\floor{t} + 1}{2t} \right)$ and $\frac{t-2+1/t}{2}$, which are both approximately $t/2$, respectively.}
\label{table:summary}
\end{table*}

\subsection{Our Contributions}

In \Cref{sec:EJR-notions}, we introduce two variants of EJR suitable for the mixed-goods setting.
The stronger variant, \emph{EJR for mixed goods (EJR-M)}, imposes the EJR condition for any positive real number $t$ whenever a $t$-cohesive group commonly approves a resource of size \emph{exactly} $t$.
The weaker variant, \emph{EJR up to~$1$ (EJR-1)}, again considers the condition for every positive real number $t$ but only requires that some member of a $t$-cohesive group receives utility greater than $t-1$.
While \mbox{EJR-M} reduces to the corresponding notion of EJR in both multiwinner voting and cake sharing, and therefore offers a unification of both versions, EJR-1 does so only for multiwinner voting.
We then extend three multiwinner voting rules to our setting: GreedyEJR, the method of equal shares (MES), and proportional approval voting (PAV).
We show that \emph{GreedyEJR-M}, our generalization of GreedyEJR, satisfies EJR-M (and therefore EJR-1), which also means that an EJR-M allocation always exists.
On the other hand, we prove that our generalizations of the other two methods provide EJR-1 but not EJR-M.
Furthermore, while \mbox{GreedyEJR-M} and Generalized MES guarantee the cake version of EJR in cake sharing, Generalized PAV does not.

In \Cref{sec:proportionality-degree}, we turn our attention to the concept of \emph{proportionality degree}, which measures the average utility of the agents in a cohesive group \citep{Skowron21}.
We derive tight bounds on the proportionality degree implied by both EJR-M and EJR-1, with the EJR-M bound being slightly higher.
We also investigate the proportionality degree of the three rules from \Cref{sec:EJR-notions}; in particular, we find that Generalized PAV has a significantly higher proportionality degree than both GreedyEJR-M and Generalized MES.

An overview of our results can be found in \Cref{table:summary}.

\section{Preliminaries}
\label{sec:prelim}

Let $N = \{1,2,\dots,n\}$ be the set of agents.
In the mixed-goods setting, the resource~$R$ consists of a cake $C = [0, c]$ for some real number $c\ge 0$ and a set of indivisible goods $G = \{g_1,\dots,g_m\}$ for some integer $m\ge 0$.
Assume without loss of generality that $\max(c,m) > 0$.
A \emph{piece of cake} is a union of finitely many disjoint (closed) subintervals of $C$.
Denote by $\ell(I)$ the length of an interval $I$, that is, $\ell([x, y]) \coloneqq y - x$.
For a piece of cake $C'$ consisting of a set of disjoint intervals $\mathcal{I}_{C'}$, we let $\ell(C') \coloneqq \sum_{I \in \mathcal{I}_{C'}} \ell(I)$.
A~\emph{bundle}~$R'$ consists of a (possibly empty) piece of cake $C'\subseteq C$ and a (possibly empty) set of indivisible goods $G'\subseteq G$; the \emph{size} of such a bundle $R'$ is $s(R') \coloneqq \ell(C') + |G'|$.
We sometimes write $R' = (C', G')$ instead of $R' = C'\cup G'$.

We assume that the agents have \emph{approval} preferences (also known as \emph{dichotomous} or \emph{binary}), i.e., each agent $i\in N$ approves a bundle $R_i = (C_i, G_i)$ of the resource.\footnote{Approval preferences can be given explicitly as part of the input for algorithms, so we do not need the cake-cutting query model of \citet{RobertsonWe98}.
In particular, the cake preferences can be described by the endpoints of the cake intervals approved by each agent.
}
The utility of agent~$i$ for a bundle $R'$ is given by $u_i(R') \coloneqq s(R_i\cap R') = \ell(C_i\cap C') + |G_i\cap G'|$.
Let $\alpha\in (0,c+m]$ be a given parameter, and assume that a bundle~$A$ with $s(A)\le \alpha$ can be chosen and collectively allocated to the agents;\footnote{Instead of the variable $k$ as in multiwinner voting, we use $\alpha$, as this variable may not be an integer in our setting. This is consistent with the notation used by \citet{BeiLuSu24} for cake sharing.} we also refer to an allocated bundle as an \emph{allocation}.
(Note that we allow $s(A)\le \alpha$ rather than requiring $s(A)= \alpha$; this is a slight deviation from the standard multiwinner voting model.)
An \emph{instance} consists of the resource~$R$, the agents~$N$ and their approved bundles $(R_i)_{i\in N}$, and the parameter $\alpha$.
We say that an instance is a \emph{cake instance} if it does not contain indivisible goods (i.e., $m = 0$), and an \emph{indivisible-goods instance} if it does not contain cake (i.e., $c = 0$).\footnote{When $c = 0$ the cake consists of a single point, which yields utility~$0$ to every agent, so we may ignore it.}
An example instance is shown in \Cref{fig:example-instance}.

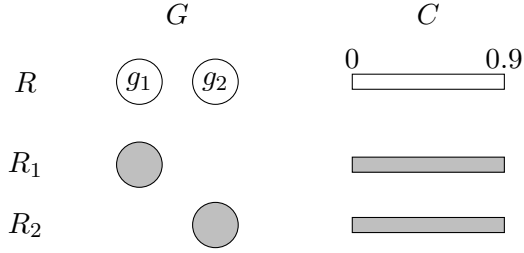
\begin{figure}
\centering
\begin{tikzpicture}[scale=1]
\node at (4.5,6.9) {$G$};
\node at (7.8,6.9) {$C$};

\draw (4,6) circle [radius=0.3];
\node at (4,6) {$g_1$};
\draw (5,6) circle [radius=0.3];
\node at (5,6) {$g_2$};
\draw (6.8,5.9) rectangle (8.8,6.1);
\node at (6.8,6.3) {$0$};
\node at (8.8,6.3) {$0.9$};
\node at (2.5,6) {$R$};

\draw[fill=gray!50] (4,4.9) circle [radius=0.3];
\draw[fill=gray!50] (6.8,4.8) rectangle (8.8,5);
\node at (2.5,4.9) {$R_1$};

\draw[fill=gray!50] (5,4.1) circle [radius=0.3];
\draw[fill=gray!50] (6.8,4) rectangle (8.8,4.2);
\node at (2.5,4.1) {$R_2$};
\end{tikzpicture}
\caption{A mixed-goods instance with two agents $N = \{1,2\}$, two indivisible goods $G = \{g_1,g_2\}$, a cake $C$ of length $0.9$, and $\alpha = 2$.
Agent~$1$ approves $R_1 = \{g_1\}\cup C$, while agent~$2$ approves $R_2 = \{g_2\}\cup C$.
If the allocation $A = \{g_1,g_2\}$ is chosen, both agents receive a utility of $1$.}
\label{fig:example-instance}
\end{figure}

A \emph{mechanism} or \emph{rule} $\mathcal{M}$ maps any instance to an allocation of the resource.
For any property $P$ of allocations, we say that a rule~$\mathcal{M}$ satisfies property $P$ if for every instance, the allocation output by $\mathcal{M}$ satisfies $P$.
An example of a rule is the \emph{maximum Nash welfare (MNW)} rule, which returns an allocation $A$ that maximizes the product $\prod_{i\in N}u_i(A)$ of the agents' utilities.\footnote{Ties can be broken arbitrarily except when the highest possible product is $0$.
In this exceptional case, the MNW rule first gives positive utility to a set of agents of maximal size and then maximizes the product of utilities for the agents in this set.}

\section{EJR Notions and Rules}
\label{sec:EJR-notions}

In order to reason about \emph{extended justified representation (EJR)}, an important concept is that of a cohesive group.
For any positive real number $t$, a set of agents $N^*\subseteq N$ is said to be \emph{$t$-cohesive} if $|N^*| \ge t\cdot n/\alpha$ and $s(\bigcap_{i \in N^*} R_i) \ge t$.
For an indivisible-goods instance, \citet{AzizBrCo17} defined EJR as follows: an allocation~$A$ satisfies EJR if for every positive integer $t$ and every $t$-cohesive group of agents $N^*$, at least one agent in~$N^*$ receives utility at least $t$.
\citet{BeiLuSu24} adapted this axiom to cake sharing by considering every positive real number $t$ instead of only positive integers.\footnote{Note that the indivisible-goods version with positive integers~$t$ may be meaningless in the cake setting, e.g., if the entire cake has length less than $1$. More generally, the restriction to positive integers $t$ is unnatural for cake, as there is no discrete unit of cake.}
To distinguish between these two versions of EJR, as well as from versions for mixed goods that we will define next, we refer to the two versions as \emph{indivisible-goods EJR} and \emph{cake EJR}, respectively.

A first attempt to define EJR for mixed goods is to simply use the cake version.
However, as we will see shortly, the resulting notion is too strong.
Hence, we relax it by lowering the utility threshold.

\begin{definition}[EJR-$\beta$]
\label{def:EJR-beta}
Let $\beta \ge 0$.
Given an instance, an allocation $A$ with $s(A) \le \alpha$ is said to satisfy \emph{extended justified representation up to $\beta$ (EJR-$\beta$)} if for every positive real number $t$ and every $t$-cohesive group of agents $N^*$, it holds that $u_j(A) > t-\beta$ for some $j\in N^*$.\footnote{For $\beta = 1$, \citet{PetersPiSk21} considered a somewhat similar notion called ``EJR up to one project'' in the setting of participatory budgeting with indivisible projects.}
\end{definition}

\begin{proposition}
\label{prop:EJR-beta}
For each constant $\beta \in [0,1)$, there exists an indivisible-goods instance in which no allocation satisfies EJR-$\beta$.
This remains true even if we relax the inequality $u_j(A) > t-\beta$ in \Cref{def:EJR-beta} to $u_j(A) \ge t-\beta$.
\end{proposition}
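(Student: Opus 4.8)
The plan is to exploit the mismatch between the two quantifiers: indivisible-goods EJR only constrains \emph{integer} values of $t$, whereas EJR-$\beta$ ranges over every positive real $t$. In an indivisible-goods instance every utility $u_i(A)=|G_i\cap A|$ is a nonnegative integer, so the decisive quantity is the gap between the real-valued threshold $t-\beta$ and the best achievable integer utility. My idea is to make $\alpha$ non-integer, with fractional part strictly larger than $\beta$, so that the largest attainable utility falls just short of the threshold forced by a cohesive group taken at $t=\alpha$.

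Concretely, for a fixed $\beta\in[0,1)$ I would set $\epsilon \coloneqq (1+\beta)/2$, so that $\beta < \epsilon < 1$, and build the following instance: a single agent (identical copies also work), no cake, a set of $k+1$ indivisible goods for some integer $k\ge 0$, all approved by the agent, and $\alpha \coloneqq k+\epsilon$. The whole agent set $N$ is then $t$-cohesive for $t=\alpha$: the commonly approved resource has size $k+1\ge k+\epsilon=\alpha$, and $|N|=n=\alpha\cdot n/\alpha$, so both defining inequalities of cohesiveness hold (the second with equality). Thus the group $N$ must be served according to EJR-$\beta$ at the value $t=\alpha$.

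Next I would bound every allocation. Since goods are indivisible, $s(A)=|A|$ is an integer, and $s(A)\le\alpha=k+\epsilon$ forces $|A|\le\lfloor k+\epsilon\rfloor=k$; because the agent approves all goods, its utility equals $|A|\le k$. But EJR-$\beta$ at $t=\alpha$ demands utility exceeding $t-\beta=k+\epsilon-\beta$, and $\epsilon-\beta>0$ gives $k+\epsilon-\beta>k$, so no agent can reach the threshold and EJR-$\beta$ fails for \emph{every} allocation. As the slack $\epsilon-\beta$ is strictly positive, the same inequality chain shows the relaxed requirement $u_j(A)\ge t-\beta$ is violated as well, which settles the last sentence of the statement.

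The only genuine obstacle—and really the heart of the argument—is recognizing that no counterexample can arise from an integer $t$: for integer $t\le k$ one simply selects $t$ of the goods and gives the agent utility exactly $t$, so indivisible-goods EJR is always respected here. The violation must therefore be engineered through a non-integer cohesive parameter, and choosing the fractional part of $\alpha$ to exceed $\beta$ is precisely what opens the gap between the integer utilities and the real threshold $t-\beta$. Everything after that choice is a direct verification, so I expect no computational difficulty once the construction is in place.
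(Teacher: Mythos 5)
Your construction is correct: with a single agent approving all $k+1$ goods and $\alpha = k+\epsilon$ where $\beta < \epsilon < 1$, the agent set is $\alpha$-cohesive, yet integrality of indivisible goods caps $s(A)$ at $\lfloor\alpha\rfloor = k$, which is strictly below the threshold $\alpha - \beta = k + \epsilon - \beta$, so even the relaxed requirement $u_j(A) \ge t - \beta$ fails for every allocation. However, this is a genuinely different route from the paper's. The paper keeps $\alpha$ an \emph{integer}: it takes $n$ agents with pairwise disjoint nonempty approval sets, a rational $\beta' \in (\beta,1)$, and $\alpha = \beta' n$; each agent alone is a $\beta'$-cohesive group, so EJR-$\beta$ forces utility at least $\beta' - \beta > 0$, hence (by integrality of utilities) at least one good per agent, requiring $n$ goods in total while only $\alpha < n$ may be selected. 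Both arguments exploit the same mismatch between real-valued thresholds and integer utilities, but they place the ``fractionality'' in different spots: you put it in the budget $\alpha$ (your gap closes entirely if $\alpha$ is forced to be an integer, since the agent could then receive utility exactly $\alpha$), whereas the paper puts it in the cohesiveness level $t = \beta' < 1$ and closes the deal with a counting argument over disjoint groups. Your version is more minimal---one agent, one cohesive group, no pigeonhole---while the paper's version proves the stronger fact that the impossibility persists even in the classical multiwinner regime of integral committee sizes, showing it is not an artifact of allowing fractional $\alpha$.
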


\begin{proof}
We work with the weaker condition $u_j(A) \ge t-\beta$.
Fix $\beta\in[0,1)$, and choose a rational constant $\beta' \in (\beta, 1)$.
Consider an indivisible-goods instance with integers $n$ and~$\alpha$ such that $\alpha = \beta'\cdot n$, and assume that all agents approve disjoint nonempty subsets $G_i$ of goods.
Each individual agent forms a $\beta'$-cohesive group, so in an \mbox{EJR-$\beta$} allocation, every agent must receive utility at least $\beta' - \beta > 0$.
Hence, any EJR-$\beta$ allocation necessarily includes at least one good from each approval set $G_i$, and must therefore contain at least $n$ goods in total.
However, since $\alpha = \beta'\cdot n < n$, no allocation can satisfy EJR-$\beta$.
\end{proof}

\Cref{prop:EJR-beta} raises the question of whether EJR-1 can always be satisfied.
We will answer this question in the affirmative in \Cref{sec:GreedyEJR-M}.
Before that, we introduce EJR-M, another variant of EJR tailored to mixed goods.
The intuition behind EJR-M is that a $t$-cohesive group of agents should be able to claim a utility of $t$ for some member only when there exists a commonly approved resource of size \emph{exactly}~$t$.
This rules out such cases as in the proof of \Cref{prop:EJR-beta}, where a group can effectively claim utility higher than $t$ due to the indivisibility of the goods.

\begin{definition}[EJR-M]
\label{def:EJR-M}
Given an instance, an allocation~$A$ with $s(A) \le \alpha$ is said to satisfy \emph{extended justified representation for mixed goods (EJR-M)} if the following holds:

For every positive real number $t$ and every $t$-cohesive group of agents $N^*$ for which there exists $R^*\subseteq R$ such that $s(R^*) = t$ and $R^*\subseteq R_i$ for all $i \in N^*$, it holds that $u_j(A) \ge t$ for some $j\in N^*$.
\end{definition}

Note that for indivisible-goods instances, the condition $s(R^*) = t$ can only hold for integers $t$, so EJR-M reduces to indivisible-goods EJR.
Likewise, for cake instances, if a group is $t$-cohesive then a commonly approved subset of size exactly $t$ always exists, so EJR-M reduces to cake EJR.
Hence, EJR-M unifies EJR from both settings.

\begin{proposition}
\label{prop:EJR-M-floor}
Let $t$ be a positive real number.
For an EJR-M allocation $A$ and a $t$-cohesive group of agents $N^*$, it holds that $u_j(A) \ge \floor{t}$ for some $j\in N^*$.
\end{proposition}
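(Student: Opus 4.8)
The plan is to apply the EJR-M condition at the integer value $t' \coloneqq \floor{t}$. If $t' = 0$ (i.e., $t < 1$), then the desired inequality $u_j(A) \ge \floor{t} = 0$ holds trivially for every $j \in N^*$, so I would assume $t' \ge 1$ from here on. The overall idea is that lowering the target value from $t$ to the nearest integer below it is exactly what lets me produce a commonly approved subset of \emph{exactly} the right size, which is the extra ingredient EJR-M demands beyond mere cohesiveness.

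First I would verify that $N^*$ is also $t'$-cohesive. Since $t' \le t$, we have $|N^*| \ge t\cdot n/\alpha \ge t'\cdot n/\alpha$, and likewise $s\bigl(\bigcap_{i\in N^*} R_i\bigr) \ge t \ge t'$, so both defining inequalities of $t'$-cohesiveness carry over immediately.

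The key step is to exhibit a commonly approved resource $R^*$ of size exactly $t'$. Write $R^\cap \coloneqq \bigcap_{i\in N^*} R_i = (C^\cap, G^\cap)$, and set $a \coloneqq |G^\cap|$ and $L \coloneqq \ell(C^\cap)$, so that $a + L = s(R^\cap) \ge t'$. Because $t'$ is an integer, I can carve out a sub-bundle of size exactly $t'$ in one of two ways: if $a \ge t'$, take any $t'$ of the indivisible goods in $G^\cap$; otherwise take all of $G^\cap$ together with a subinterval of $C^\cap$ of length $t' - a$, which exists since $L = s(R^\cap) - a \ge t' - a > 0$. In either case the resulting bundle satisfies $R^* \subseteq R^\cap \subseteq R_i$ for every $i\in N^*$ and $s(R^*) = t'$.

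Finally I would invoke EJR-M with the value $t'$ and the group $N^*$: since $N^*$ is $t'$-cohesive and admits the commonly approved $R^*$ with $s(R^*) = t'$, \Cref{def:EJR-M} yields some $j\in N^*$ with $u_j(A) \ge t' = \floor{t}$, as desired. The only real subtlety is the carving step, and it goes through precisely because an integer target size allows the indivisible goods and the cake to be combined so as to hit it exactly.
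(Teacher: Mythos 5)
Your proof is correct and takes essentially the same approach as the paper: split on whether the number of commonly approved indivisible goods is at least $\floor{t}$, carve out a commonly approved bundle of exact size using the divisibility of the cake, and invoke the EJR-M condition at that exact size. The only cosmetic difference is that in the case with fewer than $\floor{t}$ goods the paper targets size exactly $t$ (yielding the slightly stronger conclusion $u_j(A) \ge t$), whereas you uniformly target $\floor{t}$; both immediately give the proposition.
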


\begin{proof}
Let $R^* = \bigcap_{i \in N^*} R_i$, so $s(R^*) \ge t$, and let $m^*$ be the number of indivisible goods in $R^*$.
If $m^* \ge \floor{t}$, then by \Cref{def:EJR-M}, there exists $j\in N^*$ such that $u_j(A) \ge \floor{t}$.
Else, $m^* < \floor{t}$, which means that $R^*$ contains a piece of cake of length at least $t - m^*$.
In this case, by considering the $m^*$ indivisible goods and a piece of cake of length exactly $t-m^*$ commonly approved by all agents in $N^*$, \Cref{def:EJR-M} implies the existence of $j\in N^*$ such that $u_j(A) \ge t \ge \floor{t}$.
\end{proof}

Since $\floor{t} > t-1$ for every real number $t$, we have the following corollary.

\begin{corollary}
\label{cor:EJR-M-EJR-1}
EJR-M implies EJR-1.
\end{corollary}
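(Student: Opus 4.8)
The plan is to derive the corollary directly from \Cref{prop:EJR-M-floor}, which already does all the real work. Let $A$ be any EJR-M allocation; I want to show that $A$ satisfies EJR-1, i.e., the condition of \Cref{def:EJR-beta} with $\beta = 1$. So I would fix an arbitrary positive real number $t$ and an arbitrary $t$-cohesive group $N^*$, and aim to produce some $j \in N^*$ with $u_j(A) > t - 1$.

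First I would apply \Cref{prop:EJR-M-floor} to the $t$-cohesive group $N^*$, which immediately yields an agent $j \in N^*$ satisfying $u_j(A) \ge \floor{t}$. The only remaining ingredient is the elementary inequality $\floor{t} > t - 1$, which holds for every real number $t$ (since $t < \floor{t} + 1$). Chaining these gives $u_j(A) \ge \floor{t} > t - 1$, which is exactly the EJR-1 guarantee for $N^*$. As $t$ and $N^*$ were arbitrary, $A$ satisfies EJR-1.

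I do not expect any genuine obstacle here: the conceptual subtlety has already been absorbed into \Cref{prop:EJR-M-floor}. Indeed, the hypothesis of EJR-M requires a commonly approved resource of size \emph{exactly} $t$, whereas EJR-1 imposes no such restriction; the point of \Cref{prop:EJR-M-floor} is precisely to have stripped this restriction away, converting the ``exact-size'' guarantee of \Cref{def:EJR-M} into a clean floor bound valid for every $t$-cohesive group. With that proposition in hand, the corollary collapses to the single inequality $\floor{t} > t-1$, so the proof is a one-liner.
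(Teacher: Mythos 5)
Your proof is correct and matches the paper's own argument exactly: the paper likewise derives the corollary from \Cref{prop:EJR-M-floor} together with the single inequality $\floor{t} > t-1$ for all real $t$. No differences worth noting.
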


For indivisible-goods instances, EJR-1 reduces to indivisible-goods EJR, since for every positive real number $t$, the smallest integer greater than $t-1$ is $\floor{t}$.
On the other hand, for cake instances, EJR-1 is weaker than cake EJR.

In the cake setting, \citet{BeiLuSu24} proved that the MNW rule satisfies cake EJR.
However, in the indivisible-goods setting, the fact that MNW tries to avoid giving utility~$0$ to any agent at all costs means that it sometimes attempts to help individual agents at the expense of large deserving groups.
This is formalized in the following proposition.

\begin{proposition}
\label{prop:MNW-EJR}
For any constant $\beta \ge 0$, there exists an indivisible-goods instance in which no MNW allocation satisfies EJR-$\beta$.
\end{proposition}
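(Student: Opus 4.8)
The plan is to exhibit, for each fixed $\beta \ge 0$, a tailored indivisible-goods instance consisting of one large ``cohesive'' group whose members share many commonly approved goods, together with many ``singleton'' agents, each approving a private good. The guiding intuition, already flagged just before the statement, is that MNW's reluctance to assign utility $0$ forces it to spend almost the entire budget covering the singletons; this starves the large group and drives every member's utility far below the threshold that EJR-$\beta$ demands for the group.

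Concretely, given $\beta$, I would first pick an integer $t$ large enough that $t - \beta \ge 1$, e.g.\ $t = \ceiling{\beta} + 1$. This is the step that lets a single construction handle arbitrarily large $\beta$: raising $t$ keeps the threshold $t - \beta$ demanding even as $\beta$ grows. I then take $2t-1$ group agents who all approve a common set $T$ of $t$ goods, together with $2t-1$ singleton agents, the $i$-th of whom approves one distinct good $h_i$ lying outside $T$, and set $\alpha = 2t$. A one-line computation shows the group of $2t-1$ agents is exactly $t$-cohesive: its size equals $t\cdot n/\alpha = t(4t-2)/(2t) = 2t-1$, and its commonly approved resource $T$ has size $t$.

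The crux, and the step I expect to require the most care, is pinning down \emph{every} MNW allocation. Since an allocation giving all $4t-2$ agents positive utility exists---select all $2t-1$ singleton goods and one good from $T$, using exactly $\alpha = 2t$ goods---the maximum Nash welfare is positive, so every MNW allocation must give every agent positive utility. This forces all $2t-1$ singleton goods to be chosen, which leaves room for exactly one good from $T$; hence each group agent receives utility exactly $1$. The main obstacle is making this forcing airtight: I must rule out allocations that sacrifice a singleton to hand the group more goods (these contain a utility-$0$ agent, so their Nash product is $0$ and they are not MNW), and confirm that after covering the singletons no budget remains to push any group member's utility above $1$.

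Finally I would close by contradiction: the $t$-cohesive group witnesses a violation of EJR-$\beta$, since that axiom would require some member to have utility strictly greater than $t - \beta$, whereas $t-\beta \ge 1$ and every member has utility exactly $1$, so $u_j(A) = 1 \le t-\beta$ for all group members $j$. I would also verify the routine side conditions---that $\alpha = 2t$ is a positive integer, that $\alpha = 2t \le 3t - 1 = m$ so the budget is feasible, and that the approval sets are as described---to confirm the instance is legitimate.
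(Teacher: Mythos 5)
Your proposal is correct and follows essentially the same approach as the paper: a cohesive group sharing common goods plus singleton agents with private goods, where MNW's refusal to leave any agent at utility zero forces it to buy every singleton good, leaving only one common good and hence utility $1$ for each group member, violating EJR-$\beta$. The only differences are cosmetic parameter choices (the paper uses $\gamma^2$ group agents and $\gamma$ singletons with $\alpha = \gamma+1$, after first reducing to integer $\beta$, whereas you handle real $\beta$ directly with $t = \ceiling{\beta}+1$ and balanced group sizes), and both arguments hinge on the identical forcing step.
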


\begin{proof}
It suffices to prove the statement for every positive integer $\beta$.
Indeed, once we have this, then for any nonnegative real number $\beta'$, there exists a positive integer $\beta > \beta'$.
Since EJR-$\beta'$ implies EJR-$\beta$, in an instance in which no MNW allocation satisfies EJR-$\beta$, there also does not exist an MNW allocation satisfying EJR-$\beta'$.

Fix a positive integer $\beta$, and let $\gamma = \beta + 2$.
Consider an indivisible-goods instance with $n = \gamma^2 + \gamma$ agents, $m = 2\gamma$ goods, and $\alpha = \gamma+1$.
The first $\gamma^2$ agents all approve goods $g_1,\dots,g_{\gamma}$, while agent $\gamma^2 + i$ only approves good $g_{\gamma + i}$ for $1\le i\le \gamma$.
Notice that the first $\gamma^2$ agents form a $\gamma$-cohesive group, so at least one of them must receive utility no less than $\gamma - \beta = 2$ in an EJR-$\beta$ allocation.
In particular, at least two goods among $g_1,\dots,g_{\gamma}$ must be chosen.
However, every MNW allocation contains $g_{\gamma+1}, g_{\gamma+2}, \dots, g_{2\gamma}$ along with exactly one of $g_1,\dots,g_{\gamma}$.
It follows that no MNW allocation satisfies EJR-$\beta$.
\end{proof}

\subsection{GreedyEJR-M}
\label{sec:GreedyEJR-M}

\Cref{prop:MNW-EJR} implies that the MNW rule cannot guarantee EJR-M or EJR-1 in the indivisible-goods setting, let alone in the mixed-goods setting.
We show next that a greedy approach can be used to achieve these guarantees.
The rule that we use is an adaptation of the \emph{GreedyEJR} rule from the indivisible-goods setting \citep{BredereckFaKa19,PetersPiSk21,ElkindFaIg22}; we therefore call it \emph{GreedyEJR-M} and describe it below.

\begin{framed}
\noindent
\textbf{GreedyEJR-M} \\

\noindent
\emph{Step~1:} Initialize $N' = N$ and $R' = \emptyset$.\\

\noindent
\emph{Step~2:} Let $t^*$ be the largest \emph{nonnegative} real number for which there exist $\emptyset \neq N^*\subseteq N'$ and $R^*\subseteq R$ such that $N^*$ is a $t^*$-cohesive group, $R^*\subseteq R_i$ for all $i \in N^*$, and $s(R^*) = t^*$.
Consider any such pair $(N^*,R^*)$.
Remove $N^*$ from $N'$ and add the part of $R^*$ that is not already in $R'$ to $R'$. \\

\noindent
\emph{Step~3:} If $N' = \emptyset$, return $R'$. Else, go back to Step~2.
\end{framed}

\begin{example}
\label{ex:running}
Consider the instance in \Cref{fig:example-instance}.
We have $n/\alpha = 1$, and Step~2 of GreedyEJR-M chooses $t^* = 1$, along with (as one possibility) $N^* = \{1\}$ and $R^* = \{g_1\}$.
We are left with $N' = \{2\}$, and the next iteration of Step~2 chooses $t^* = 1$, $N^* = \{2\}$, and $R^* = \{g_2\}$.
Finally, the rule returns $R' = \{g_1,g_2\}$.
\end{example}

\begin{theorem}
\label{thm:greedyEJR-M}
The GreedyEJR-M rule satisfies EJR-M (and therefore EJR-1).
\end{theorem}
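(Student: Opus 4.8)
The plan is to show that the allocation $R'$ returned by GreedyEJR-M satisfies the EJR-M condition for every $t$-cohesive group that commonly approves a resource of size exactly $t$. First I would establish that the algorithm terminates and produces a valid allocation, i.e., that $s(R') \le \alpha$. The key observation here is that each iteration of Step~2 removes a $t^*$-cohesive group $N^*$ from $N'$ while adding at most $s(R^*) = t^*$ worth of new resource to $R'$; since $|N^*| \ge t^* \cdot n/\alpha$, the ratio of resource added to agents removed is at most $t^* / (t^* \cdot n/\alpha) = \alpha/n$ per agent. Summing over all iterations and using that all $n$ agents are eventually removed, the total size added is at most $n \cdot (\alpha/n) = \alpha$. (One must take care that we add only the \emph{new} part of $R^*$, which can only decrease the size, so the bound still holds.)

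The heart of the proof is the EJR-M guarantee itself. Consider an arbitrary $t$-cohesive group $N^*_0$ with a commonly approved $R^*_0$ of size exactly $t$; I want to exhibit some $j \in N^*_0$ with $u_j(R') \ge t$. The natural approach is to track the first iteration in which some member of $N^*_0$ is removed. Suppose this happens in the iteration that selects the pair $(N^*, R^*)$ with cohesion parameter $t^*$. Up to that point, $N^*_0$ was entirely contained in the current $N'$; since $N^*_0$ witnesses that a $t$-cohesive group commonly approving a size-$t$ resource exists within $N'$, the maximality of $t^*$ in Step~2 forces $t^* \ge t$.

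Now take any agent $j \in N^*_0 \cap N^*$ (which is nonempty, since this is the iteration where the first member of $N^*_0$ is removed). The claim is that $u_j(R') \ge t^* \ge t$. Here I would argue that when $N^*$ is removed, the full bundle $R^*$ of size $t^* = s(R^*)$ becomes contained in $R'$ (the new part is added in that iteration, and whatever was already present was added earlier), and $R^* \subseteq R_i$ for all $i \in N^*$, in particular $R^* \subseteq R_j$. Therefore $u_j(R') = s(R_j \cap R') \ge s(R^* \cap R') = s(R^*) = t^* \ge t$, as desired. Since EJR-M implies EJR-1 by \Cref{cor:EJR-M-EJR-1}, the final parenthetical follows immediately.

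The main obstacle I anticipate is the size bound $s(R') \le \alpha$, because the interaction between cohesiveness (a lower bound on $|N^*|$) and the ``add only the new part'' rule requires a careful accounting argument; one subtlety is that the same piece of resource may be approved by several selected groups, so the naive per-iteration bound of $t^*$ is not tight, but it suffices for an upper bound since adding only new material never increases the total beyond the sum of the $t^*$ values, which is in turn at most $\alpha$. A second, smaller point to verify is that Step~2 is always well-defined, i.e., a valid $t^* > 0$ (or at least $t^* \ge 0$) exists as long as $N' \ne \emptyset$, which is needed for termination; this follows because a single agent $i \in N'$ with any approved resource forms a $t$-cohesive group for a suitable $t > 0$, guaranteeing progress in each round.
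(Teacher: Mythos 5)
Your core argument is sound and is essentially the paper's own proof: the same per-agent accounting for the size bound (at most $t^*$ new resource against at least $t^*\cdot n/\alpha$ removed agents, summing to at most $\alpha$), the same termination argument, and the same mechanism for the representation guarantee. The only cosmetic difference is that the paper phrases the EJR-M step as a contradiction (consider the moment after the last group with parameter $t^* \ge t$ is removed), whereas you argue directly from the first iteration that removes a member of the target group; both hinge on the identical facts that maximality of $t^*$ forces $t^* \ge t$ while the group is still intact, and that any removed agent $j$ ends up with $u_j(R') \ge s(R^*) = t^* \ge t$ because $R^*$ is entirely contained in the final $R'$.

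There is, however, one genuine gap: the point you dismiss as ``smaller'' is handled with an argument that does not establish what is needed. You justify well-definedness of Step~2 by noting that some feasible $t$ always exists (a single agent forms a cohesive group for a suitable $t$). That shows the set of feasible values is nonempty, but Step~2 asks for the \emph{largest} feasible $t^*$, and since $t$ ranges over the reals, a nonempty bounded set need only have a supremum, not a maximum; if the supremum were not attained, the rule would be ill-defined and your later maximality inequality $t^* \ge t$ would have no meaning. The paper closes this by observing that for each group $X \subseteq N'$, the set of feasible values is a finite union of closed intervals: if the common resource $\bigcap_{i \in X} R_i$ contains $m_X$ indivisible goods and cake of length $c_X$, the achievable exact sizes are $\bigcup_{j=0}^{m_X} [j, j + c_X]$, intersected with the closed constraint $t \le |X| \cdot \alpha / n$; this set contains $0$, is closed, and hence has a maximum, and taking the maximum over the finitely many subsets $X \subseteq N'$ yields an attained $t^*$. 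Your proof needs this (or an equivalent compactness observation) to be complete.
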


\begin{proof}
By \Cref{cor:EJR-M-EJR-1}, it suffices to prove the claim for EJR-M.
We break the proof into the following four parts.
\begin{itemize}
\item \emph{The procedure is well-defined.}
To this end, we must show that the largest nonnegative real number $t^*$ in Step~2 always exists.
Observe that for each nonempty group of agents $X\subseteq N'$, the set
\begin{equation*}
T_X \coloneqq \biggl\{\, t \ge 0 \,\biggm|\, |X| \ge t \cdot \frac{n}{\alpha} \text{ and  there exists } Y \subseteq \bigcap_{i \in X} R_i \text{ with } s(Y) = t  \,\biggr\}
\end{equation*}
is a union of a finite number of (possibly degenerate) closed intervals, and is nonempty because $0\in T_X$. Therefore, $T_X$ has a maximum.
The value $t^*$ chosen in Step~2 is then the largest among the maxima of $T_X$ across all nonempty $X\subseteq N'$.
\item \emph{The procedure always terminates.}
This is because each iteration of Step~2 removes at least one agent from $N'$.
\item \emph{The procedure returns an allocation $R'$ with $s(R') \le \alpha$.}
Indeed, if an iteration of Step~2 uses value $t^*$, it removes\footnote{If $t^* = 0$, the iteration still removes at least one agent from $N'$, but we do not need this fact here.} at least $t^*\cdot n/\alpha$ agents from $N'$ and adds a resource of size at most~$t^*$ to $R'$.
Since only $n$ agents can be removed in total, the added resource has size at most $\alpha$.
\item \emph{The returned allocation $R'$ satisfies EJR-M.}
Assume for contradiction that for some group $X$, \Cref{def:EJR-M} fails for $X$ and parameter $t$.
Consider the moment after the procedure removed the last group with parameter $t^* \ge t$.
If no agent in $X$ has been removed, the procedure should have removed $X$ with parameter $t$, a contradiction.
Else, some agent $j\in X$ has been removed.
In this case, the procedure guarantees that $u_j(R') \ge t$, which means that $X$ satisfies \Cref{def:EJR-M} with parameter $t$, again a contradiction. \qedhere
\end{itemize}
\end{proof}
\subsection{\generalizedMES}
\label{sec:Gen-MES}

Despite the strong representation guarantee provided by GreedyEJR-M, the rule does not admit an obvious polynomial-time implementation.\footnote{Indeed, determining $t^*$ in Step~2 of GreedyEJR-M potentially requires inspecting an exponential number of subsets $N^*\subseteq N'$.}
In the indivisible-goods setting, \citet{PetersSk20} introduced the \emph{\RuleX (\rulex)}, originally known as \emph{Rule~X}, and showed that it satisfies indivisible-goods \EJR{} and runs in polynomial time.
We now extend their rule to our mixed-goods setting.
At a high level, in \emph{\generalizedMESAbbr}, each agent is given a budget of~$\alpha / n$, which can be spent on buying the resource---each piece of cake has cost equal to its length whereas each indivisible good costs~$1$.
In each step, a piece of cake or an indivisible good that incurs the smallest cost per utility for agents who approve it is chosen, and these agents pay as equally as possible to cover the cost of the chosen resource.
The rule stops once no more cake or indivisible good is affordable.
Note that when the resource consists only of indivisible goods, \generalizedMESAbbr is equivalent to the original \rulex of \citet{PetersSk20}.

\begin{framed}
\noindent
\textbf{\generalizedMESAbbr} \\

\noindent
\emph{Step~1:} Initialize $R' = (C', G') = (\emptyset, \emptyset)$ and $b_i = \alpha/n$ for each $i \in N$.\\

\noindent
\emph{Step~2:} Divide the remaining cake~$C$ into intervals $I_1, \dots, I_k$ so that each agent approves each interval either entirely or not at all.
For each interval $ I_j = [x_0, x_1]$, $x \in (x_0, x_1]$, and $\rho \ge 0$, we say that $I_j$ is \emph{$(x, \rho)$-affordable} if
\[
\sum_{i \in N_{I_j}} \min(b_i, (x-x_0)\cdot \rho) = x-x_0,
\]
where $N_{I_j}\subseteq N$ denotes the set of remaining agents who approve $I_j$.
Similarly, for each remaining good $g \in G$ and $\rho \ge 0$, we say that $g$ is \emph{$\rho$-affordable} if
\[
\sum_{i \in N_g} \min(b_i,\rho) = 1,
\]
where $N_g\subseteq N$ denotes the set of remaining agents who approve $g$.
\\

\noindent
\emph{Step~3:} If for every $\rho$, no $\rho$-affordable good or $(x,\rho)$-affordable piece of cake exists, return $R'$.

Else, take either an interval $I_j$ with the smallest $\rho$ along with the largest $x$ such that $I_j$ is $(x, \rho)$-affordable, or a good $g$ with the smallest $\rho$ such that $g$ is $\rho$-affordable, depending on which $\rho$ is smaller.
In the former case, deduct $\min(b_i, (x-x_0)\cdot\rho)$ from $b_i$ for each $i\in N_{I_j}$, and set $C = C \setminus [x_0, x]$ and $C' = C' \cup [x_0, x]$.
In the latter case, deduct $\min(b_i, \rho)$ from $b_i$ for each $i\in N_g$, and set $G = G \setminus \{g\}$ and $G' = G'  \cup \{g\}$.
Remove all agents who have run out of budget from $N$, and go back to Step~2.
\end{framed}

\begin{example}\label{ex:Gen-MES}
For the instance in \Cref{fig:example-instance}, each agent starts with a budget of $\alpha/n = 1$.
The first iteration of Step~2 selects the entire cake (with $\rho = 1/2$), and each agent pays $0.9/2 = 0.45$ for this cake.
Since neither agent has enough budget left to buy the indivisible good that she approves (which costs $1$), the procedure terminates with only the cake.
\end{example}

In the instance above, each agent on her own is $1$-cohesive and approves a subset of the resource of size exactly $1$, so the only EJR-M allocation is $\{g_1, g_2\}$.
In particular, the allocation chosen by Generalized MES is not EJR-M.

\begin{proposition}
\label{prop:GMES-EJR-M}
\generalizedMESAbbr does not satisfy EJR-M.
\end{proposition}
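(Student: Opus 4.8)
The plan is to reuse the explicit instance from \Cref{fig:example-instance}, already alluded to just before the statement, and to verify two claims: that \generalizedMESAbbr returns only the cake on this instance, and that the resulting allocation violates \Cref{def:EJR-M}.

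First I would trace the execution of the rule. Each of the two agents begins with budget $\alpha/n = 1$. The whole cake $C$ (of length $0.9$) is commonly approved, so for any sub-interval of length $x - x_0 \le 0.9$ the affordability equation reads $\min(1,(x-x_0)\rho) + \min(1,(x-x_0)\rho) = x - x_0$, which (since $(x-x_0)\rho \le 1$) gives $\rho = 1/2$; taking the largest such interval yields the entire cake as $(0.9,1/2)$-affordable. Each indivisible good, on the other hand, is approved by a single agent, so $g_1$ (resp.\ $g_2$) is $\rho$-affordable only when its lone approver covers the full cost $1$, forcing $\rho \ge 1$. Since $1/2 < 1$, Step~3 buys the cake first; afterwards each agent retains budget $1 - 0.45 = 0.55 < 1$, so no good is affordable and the rule terminates with allocation $A = C$.

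Next I would check that $A = C$ fails EJR-M. Each agent $i$ alone forms a $1$-cohesive group, since $1 = |\{i\}| \ge 1\cdot n/\alpha$ and the singleton commonly approves the good $g_i$ with $s(\{g_i\}) = 1$; thus $R^* = \{g_i\}$ witnesses the hypothesis of \Cref{def:EJR-M} with $t = 1$. EJR-M would then require $u_i(A) \ge 1$, whereas $u_i(C) = \ell(C_i \cap C) = 0.9 < 1$. This contradiction establishes the claim.

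The construction is elementary, so there is no serious obstacle; the only point demanding care is the affordability bookkeeping in the first two paragraphs---confirming that the smallest $\rho$ at which anything is affordable is attained by the cake (at $\rho = 1/2$) rather than by either good (at $\rho = 1$), and that the residual budgets after purchasing the cake indeed fall below the unit cost of each good.
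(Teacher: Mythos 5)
Your proposal is correct and follows essentially the same route as the paper: the paper also uses the instance of \Cref{fig:example-instance}, traces \generalizedMESAbbr to show it buys the whole cake at $\rho = 1/2$ and then terminates (since each remaining budget of $0.55$ cannot cover a good of cost $1$), and observes that each singleton $\{i\}$ is $1$-cohesive with witness $\{g_i\}$ of size exactly $1$, so the returned allocation of utility $0.9$ per agent violates \Cref{def:EJR-M}. Your write-up merely spells out the affordability bookkeeping that the paper leaves implicit.
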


Nevertheless, we prove that \generalizedMESAbbr satisfies EJR-1 and, moreover, can be implemented efficiently.
We do not prove the result directly, but instead introduce a stronger notion than \EJROne and show that \generalizedMESAbbr satisfies even this stronger notion.

\begin{definition}[\expandafter\MakeUppercase\strongEJROne]\label{def:strong-EJR-1}
Given an instance, an allocation~$A$ with $s(A) \leq \alpha$ satisfies \emph{\strongEJROne} if for every positive real number~$t$ and every $t$-cohesive group of agents~$N^*$, at least one of the following two conditions holds:
\begin{itemize}
\item $\bigcap_{i \in N^*} C_i \subseteq A$ \emph{and} there exists some~$j \in N^*$ with $u_j(A) > t-1$; or
\item there exists some~$j \in N^*$ with $u_j(A) \geq t$.
\end{itemize}
\end{definition}

It follows directly from the definition that \strongEJROne implies \EJROne, since in either case, there exists some agent in the $t$-cohesive group who receives utility greater than~$t-1$.
\expandafter\MakeUppercase\strongEJROne is more demanding than \EJROne in that if no agent in a $t$-cohesive group receives utility at least~$t$, then the cake commonly approved by all agents in that group must be entirely included in the allocation.

On the one hand, like \EJROne, for indivisible-goods instances, \strongEJROne reduces to indivisible-goods \EJR{}, since (i) for any cohesive group~$N^*$, it holds trivially that $\bigcap_{i \in N^*} C_i = \emptyset \subseteq A$, and (ii) for every positive real number~$t$, the smallest integer greater than~$t-1$ is~$\floor{t}$.
On the other hand, unlike \EJROne, for cake instances, \strongEJROne reduces to cake \EJR{}, since for any $t$-cohesive group~$N^*$ such that $\bigcap_{i \in N^*} C_i \subseteq A$, it must be the case that some agent (in fact, all agents) in the cohesive group receives a utility of at least~$t$.
Due to its somewhat unintuitive definition, we view \strongEJROne as a technical strengthening of \EJROne rather than a practical one.

It is tempting to believe that \strongEJROne lies between \EJRM and \EJROne.
This is, however, not the case.
The proof of this claim can be found in \Cref{app:no-logic-relation}.

\begin{proposition}\label{prop:EJR-M-strong-EJR-1-no-logic-relation}
Neither \EJRM nor \strongEJROne implies each other.
Moreover, GreedyEJR-M does not satisfy \strongEJROne.
\end{proposition}

As we will see later, Generalized PAV does not satisfy cake \EJR{} for cake instances (\Cref{prop:GPAV-cake-EJR}), so it does not satisfy \strongEJROne either.
We now show that \generalizedMESAbbr satisfies \strongEJROne.

\begin{theorem}
\label{thm:GMES}
\generalizedMESAbbr satisfies \strongEJROne and can be implemented in polynomial time.
\end{theorem}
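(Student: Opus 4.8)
The plan is to establish the two assertions separately, first that \generalizedMESAbbr satisfies EJR-1 and then that it runs in polynomial time. For EJR-1 I would argue by contradiction, adapting the budget/price analysis known for MES in the indivisible-goods setting while taking care of the divisible component. Suppose some $t$-cohesive group $N^*$ violates \Cref{def:EJR-beta} for $\beta=1$, i.e., $u_i(A)\le t-1$ for every $i\in N^*$; I may assume $t\ge 1$, since for $t<1$ the condition holds vacuously as $u_i\ge 0>t-1$. Let $R^*=\bigcap_{i\in N^*}R_i$, so $s(R^*)\ge t$ and every member of $N^*$ approves all of $R^*$; hence each such agent's utility drawn from $R^*$ equals $s(R^*\cap A)\le t-1$. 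Therefore the commonly approved but unpurchased resource $R^*\setminus A$ has size at least $t-(t-1)=1$, so it contains a \emph{witness}: either an unpurchased good or an unpurchased piece of cake of positive length, approved by every member of $N^*$.

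The heart of the argument is a budget bound. Set $\rho_0=\alpha/(nt)$, so the initial budget is $\alpha/n=t\rho_0$ and $|N^*|\ge tn/\alpha=1/\rho_0$. I would first observe that whenever an agent pays for resource, that resource is approved by her and enters $A$, so the total size she ever pays for is at most her final utility; moreover each payment is at most its price times the size acquired. Consequently, if every purchase made so far has price at most $\rho_0$, then each $i\in N^*$ has spent at most $u_i(A)\cdot\rho_0\le(t-1)\rho_0$ and thus retains budget $b_i\ge t\rho_0-(t-1)\rho_0=\rho_0$. I would then show that as long as all members of $N^*$ retain budget at least $\rho_0$, the witness is affordable at price at most $\rho_0$: for a good $g$ this holds because $\sum_{i\in N^*}\min(b_i,\rho_0)=|N^*|\rho_0\ge 1$, and for a piece of cake because its approving interval $I$ can be bought at rate $1/|N_I|\le 1/|N^*|\le\rho_0$. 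Since \generalizedMESAbbr always selects the globally cheapest affordable option and halts only when nothing is affordable, while the witness is unpurchased it can neither terminate nor buy anything priced above $\rho_0$. Hence no purchase ever exceeds price $\rho_0$; but then at termination every $i\in N^*$ still has $b_i\ge\rho_0>0$, so the witness remains affordable, contradicting termination.

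For the polynomial-time claim the plan is to reduce the continuous cake to finitely many discrete events. At any moment the remaining cake splits into maximal intervals on each of which every remaining agent's approval is constant, delimited by the approval boundaries present in the input. In each iteration one computes, for each good and each such interval, the smallest affordability price $\rho$ (each the solution of a piecewise-linear equation in $\rho$, found in polynomial time) and buys the cheapest option. The key is bounding the number of iterations: each iteration either purchases a whole good (at most $m$ times), removes at least one agent who exhausts her budget (at most $n$ times), or consumes a whole atomic interval. Between two consecutive agent removals the agent set is fixed, so only polynomially many interval completions and good purchases can occur, and the total running time is polynomial.

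I expect the divisible component to be the main obstacle on both fronts. For EJR-1 one must handle the unpurchased piece of cake with care, since the cake affordability rule (smallest $\rho$, largest $x$) behaves differently from the all-or-nothing purchase of a good; in particular I must verify that positive remaining budget of the approvers always makes a positive-length piece affordable at rate at most $\rho_0$, and that the ``size paid for is at most utility'' bound survives when payments are spread over cake. For the running-time claim, the delicate point is certifying that cake purchases cannot fragment indefinitely, i.e., that every iteration makes discrete progress (a good, a removal, or an interval completion); pinning down this iteration count rigorously is where most of the effort lies.
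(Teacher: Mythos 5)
Your proposal is correct and follows essentially the same route as the paper's own proof: a contradiction argument built around the budget threshold $\alpha/(tn)$, showing that as long as all purchases are priced at or below this threshold every member of the cohesive group retains budget at least $\alpha/(tn)$, so the commonly approved leftover resource (your ``witness'') stays affordable at price at most $\alpha/(tn)$, contradicting both any more expensive purchase and termination; the polynomial-time count (each iteration buys a whole good, exhausts some agent's budget, or completes an atomic interval, with $\rho = 1/|N_{I_j}|$ for cake) is likewise the paper's argument. The only minor omission is that you never verify feasibility, i.e.\ $s(A)\le\alpha$ as required by the EJR-1 definition, which the paper dispatches in one line by noting that total payments equal the size of the resource added and the agents' total budget is $\alpha$.
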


\begin{proof}
First, observe that for each interval $I_j$, provided that $N_{I_j} \ne \emptyset$, the value of $\rho$ chosen in Step~3 will be $\rho = 1/|N_{I_j}|$, and the value of $x$ will be either $x_1$ or the smallest value such that $(x-x_0)\cdot \rho = b_i$ for some $i\in N_{I_j}$, whichever is smaller.
For each indivisible good $g$, the value of $\rho$ can also be computed in polynomial time.\footnote{See Footnote~9 in the extended version of \citet{PetersSk20}'s work.}
After each iteration of Step~3, if the procedure has not terminated, at least one of the following occurs: an entire interval~$I_j$ is removed from~$C$, an indivisible good $g$ is removed from~$G$, or one or more agents run out of budget.
Hence, the procedure can be implemented in polynomial time.

Next, note that whenever a resource of some size $y$ is added to $R'$, the agents together pay a total of $y$.
Since the agents have a total starting budget of $n\cdot(\alpha/n) = \alpha$, \generalizedMESAbbr returns an allocation of size at most $\size$.

We now show that the returned allocation~$R' = (C', G')$ satisfies \strongEJROne.
To begin with, assume for contradiction that for some real number~$t > 0$, there exists a $t$-cohesive group~$N'$ with $\bigcap_{i \in N'} C_i \subseteq C'$ and $u_i(R') \leq t - 1$ for all~$i \in N'$.
In particular, all cake commonly approved by the agents in~$N'$ has already been included in~$R'$ by \generalizedMESAbbr.
Due to $t$-cohesiveness, there must still exist a good from $\bigcap_{i \in N'} G_i$ left in~$G$.
Moreover, $\sum_{i \in N'} b_i < 1$ at the termination of the procedure---indeed, if $\sum_{i \in N'} b_i \ge 1$, then the agents in $N'$ together have enough budget to afford a commonly approved good, so the procedure should not have terminated without choosing this good.
Since $|N'| \ge t \cdot \frac{n}{\alpha}$, we have $b_i < \frac{\size}{tn}$ for some $i \in N'$.
In particular, agent~$i$ receives a utility of at most $t-1$ but has spent more than $\frac{\size}{n} - \frac{\size}{tn} = \frac{\size}{n} \cdot \frac{t-1}{t}$.
Thus, the cost per utility for $i$ is strictly greater than $\frac{1}{t-1}\left(\frac{\size}{n} \cdot \frac{t-1}{t}\right) = \frac{\size}{tn}$.
Now, consider the first cake interval or indivisible good added in Step~3 for which the cost per utility for some agent in~$N'$ exceeds~$\frac{\size}{tn}$; the existence of such an interval or good follows from the previous sentence.
Note that the value of~$\rho$ in this step must be larger than $\frac{\size}{tn}$.
However, since this is the first step in which an agent from $N'$ pays more than $\frac{\size}{tn}$ per utility and the utility of each agent in $N'$ is at most $t-1$, each agent in $N'$ must have budget at least $\frac{\size}{n} - \frac{\size}{tn}\cdot (t-1) = \frac{\size}{tn}$ remaining before this step.
Since $|N'| \ge \frac{tn}{\alpha}$, before this step, there is still an indivisible good from $\bigcap_{i \in N'} G_i$ which is $\rho$-affordable for some $\rho \le \frac{\size}{tn}$.
This contradicts the fact that \generalizedMESAbbr chooses a resource with $\rho > \frac{\size}{tn}$.
Hence, if $\bigcap_{i \in N'} C_i \subseteq C'$, then $u_i(R') > t-1$ for some $i\in N'$.

Next, assume for contradiction that for some real number~$t > 0$, there exists a $t$-cohesive group~$N'$ with $(\bigcap_{i \in N'} C_i) \setminus C' \neq \emptyset$ and $u_i(R') < t$ for all~$i \in N'$.
Let $\delta > 0$ be such that $u_i(R') < t - \delta$ for all~$i \in N'$.
Since $(\bigcap_{i \in N'} C_i) \setminus C' \neq \emptyset$ and this piece of cake is not affordable at the end, the budget of all agents in~$N'$ must have run out.
Hence, the cost per utility for every agent in~$N'$ is strictly greater than $\frac{\size}{(t - \delta) n}$.
Now, consider the first cake interval or indivisible good added in Step~3 for which the cost per utility for some agent in~$N'$ exceeds~$\frac{\size}{tn}$; the existence of such an interval follows from the previous sentence since $\frac{\size}{(t - \delta) n} > \frac{\size}{tn}$.
Note that the value of $\rho$ in this step must be larger than~$\frac{\size}{tn}$.
However, since this is the first step in which an agent from~$N'$ pays strictly more than~$\frac{\size}{tn}$ per utility and the utility of each agent in~$N'$ is at most~$t - \delta$, each agent in~$N'$ must have budget at least $\frac{\size}{n} - \frac{\size(t - \delta)}{tn} = \frac{\size\delta}{tn} > 0$ remaining before this step.
Since $|N'| \geq t \cdot \frac{n}{\alpha}$, before this step there is still an interval $I_j = [x_0, x_1]$ from $\bigcap_{i \in N'} C_i$ which is $(x_0 + \varepsilon, \rho)$-affordable for some $\varepsilon > 0$ and $\rho \leq \frac{\size}{tn}$.
This contradicts the fact that \generalizedMESAbbr chooses a resource with $\rho > \frac{\size}{tn}$.
\end{proof}

As a corollary, we have the following result.

\begin{corollary}
\label{prop:GMES-cake-EJR}
For cake instances, \generalizedMESAbbr satisfies cake \EJR{}.
\end{corollary}

\subsection{Generalized PAV}

In the indivisible-goods setting, a well-studied rule is \emph{proportional approval voting (PAV)}, which chooses an allocation $R'$ that maximizes $\sum_{i\in N}H_{u_i(R')}$, where $H_x \coloneqq 1 + \frac{1}{2} + \dots + \frac{1}{x}$ is the $x$-th harmonic number.
We now show how to generalize PAV to the mixed-goods setting.
To this end, we will use a continuous extension of the harmonic numbers based on the digamma function, defined as $H_x \coloneqq \psi(x+1) + \gamma$ for all real numbers $x\ge 0$, where $\psi$ is the digamma function and $\gamma$ is the Euler--Mascheroni constant.
A known fact about the digamma function (e.g., \citep[p.~259, Formula~6.3.16]{AbramowitzSt72}) implies that
\[
H_x = \sum_{k = 1}^\infty \frac{x}{k(x+k)} = \sum_{k = 1}^\infty \left(\frac{1}{k} - \frac{1}{k+x}\right)
\]
for each real number $x\ge 0$; in particular, these infinite sums converge.
It is clear from the definition that the generalized harmonic numbers indeed extend the original harmonic numbers,\footnote{The extension that we use is in a certain sense the canonical extension of the harmonic numbers. 
Indeed, $\psi$ is known to be the unique monotonic solution of the functional equation $F(x+1) = F(x) + 1/x$ that satisfies $F(1) = -\gamma$.
If we tried to extend the harmonic numbers using, say, the piecewise linear extension $\widehat{H}$, we would have $\widehat{H}_{0.5} = 0.5$ and $\widehat{H}_{1.5} = 1.25$, so $\widehat{H}_{1.5} - \widehat{H}_{0.5} = 0.75 \ne 1/1.5$.
} and that $H_x > H_y$ for all $x > y \ge 0$.
Moreover, $H_{x+1} - H_x = \frac{1}{x+1}$ for all $x \ge 0$.

\begin{definition}[Generalized PAV]
The \emph{Generalized PAV} rule selects an allocation~$R'$ with $s(R') \le \size$ that maximizes  $\sum_{i\in N} H_{u_i(R')}$.
\end{definition}

For ease of notation, we let $H(R') \coloneqq \sum_{i\in N} H_{u_i(R')}$ for any allocation $R'$, and call $H(R')$ the \emph{GPAV-score} of $R'$.
Given the instance in \Cref{fig:example-instance}, since $H_{1.9} + H_{0.9} > 1.45 + 0.93 > 1 + 1 = H_1 + H_1$, Generalized PAV selects the entire cake together with one of the indivisible goods.
As the only EJR-M allocation in this instance is $\{g_1, g_2\}$, the allocation selected by Generalized PAV is not EJR-M.

\begin{proposition}
\label{prop:GPAV-EJR-M}
Generalized PAV does not satisfy EJR-M.
\end{proposition}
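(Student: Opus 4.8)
The plan is to refute EJR-M with a single counterexample, namely the instance of \Cref{fig:example-instance}, and show that no maximizer of the GPAV-score on it can be EJR-M. Recall the instance: $n = 2$ agents, indivisible goods $G = \{g_1, g_2\}$, a cake $C$ of length $0.9$, and $\alpha = 2$, with $R_1 = \{g_1\}\cup C$ and $R_2 = \{g_2\}\cup C$. My strategy is (i) to identify the unique EJR-M allocation on this instance, and (ii) to exhibit an allocation with strictly larger GPAV-score, so that the unique EJR-M allocation is not selected.

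First I would pin down which allocations satisfy EJR-M here. Since $n/\alpha = 1$, each agent~$i$ on its own forms a $1$-cohesive group: $|\{i\}| = 1 \ge 1\cdot n/\alpha$, and $\{i\}$ commonly approves the singleton $\{g_i\}$, a resource of size exactly $1$. By \Cref{def:EJR-M}, any EJR-M allocation $A$ must therefore satisfy $u_1(A) \ge 1$ and $u_2(A) \ge 1$. Because the cake has total length $0.9 < 1$, the cake component of $A$ can contribute at most $0.9$ to any agent's utility, so reaching utility $1$ forces $g_1 \in A$ and, symmetrically, $g_2 \in A$. These two goods already have size $2 = \alpha$, leaving no room for cake, so the only EJR-M allocation is $A = \{g_1, g_2\}$, whose GPAV-score is $H_1 + H_1 = 2$.

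Next I would show that $\{g_1, g_2\}$ does not maximize the GPAV-score. Consider the competing allocation $A' = C \cup \{g_1\}$, which has size $1.9 \le \alpha$ and yields $u_1(A') = 1.9$ and $u_2(A') = 0.9$, hence GPAV-score $H_{1.9} + H_{0.9}$. It then suffices to prove $H_{1.9} + H_{0.9} > 2$. Using the identity $H_{x+1} - H_x = \frac{1}{x+1}$ from the excerpt with $x = 0.9$, we have $H_{1.9} = H_{0.9} + \frac{1}{1.9}$, so the claim reduces to the single inequality $2H_{0.9} + \frac{1}{1.9} > 2$, i.e.\ $H_{0.9} > 1 - \frac{1}{3.8}$. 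Once this holds, $A'$ strictly beats $\{g_1,g_2\}$ in GPAV-score, so the unique EJR-M allocation is not a maximizer; hence Generalized PAV returns some other allocation, which consequently violates EJR-M.

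The only genuinely quantitative step—and the one I expect to require the most care—is the bound $H_{0.9} > 1 - \frac{1}{3.8} \approx 0.737$ for Hintze's continuous harmonic number $H_{0.9} = \sum_{k\ge 1}\bigl(\frac{1}{k} - \frac{1}{k+0.9}\bigr)$. Since every term of this series is positive, it is enough to sum finitely many initial terms until the partial sum already exceeds the threshold; a handful of terms suffices, and the true value $H_{0.9} \approx 0.94$ sits comfortably above what is needed. This makes the required inequality rigorous without having to evaluate the infinite sum exactly.
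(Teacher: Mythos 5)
Your proof is correct and follows essentially the same route as the paper: the paper also uses the instance of \Cref{fig:example-instance}, notes that $\{g_1,g_2\}$ is the only EJR-M allocation there, and shows Generalized PAV prefers the whole cake plus one good since $H_{1.9}+H_{0.9} > 1.45 + 0.93 > 2 = H_1+H_1$. Your only deviation is in verifying the numerical inequality---reducing it via $H_{1.9}=H_{0.9}+\tfrac{1}{1.9}$ to the bound $H_{0.9} > 1-\tfrac{1}{3.8}$ and certifying that by finitely many positive terms of the series---which is a slightly more self-contained justification of the same fact.
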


To show that Generalized PAV satisfies EJR-1, we establish a useful lemma on the growth rate of the generalized harmonic numbers.

    \begin{lemma}
    \label{lem:harmonic-growth}
    For any $x \in (0, \infty)$ and $y \in [0,1]$, it holds that
    $    H_{x + y} - H_{x} \le \frac{y}{x+y}$.
    \end{lemma}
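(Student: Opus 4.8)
The plan is to work directly from the series definition of the generalized harmonic numbers rather than attempting any continuous relaxation. First I would collapse the difference $H_{x+y}-H_x$ into a single convergent series. Using $H_z=\sum_{k=1}^\infty\bigl(\frac{1}{k}-\frac{1}{k+z}\bigr)$ and cancelling the $\frac{1}{k}$ terms, I get
\[
H_{x+y}-H_x=\sum_{k=1}^\infty\left(\frac{1}{k+x}-\frac{1}{k+x+y}\right)=\sum_{k=1}^\infty\frac{y}{(k+x)(k+x+y)}.
\]
Dismissing the trivial case $y=0$, it then suffices to show $\sum_{k=1}^\infty\frac{1}{(k+x)(k+x+y)}\le\frac{1}{x+y}$ and multiply back by $y\ge 0$.

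The core of the argument, and the only place where the hypothesis $y\le 1$ is used, is a term-by-term majorization that telescopes. Since $y\le 1$, for every $k\ge 1$ we have $k+x\ge k+x+y-1>0$ (the positivity because $x>0$), so shrinking one factor in the denominator only enlarges the fraction:
\[
\frac{1}{(k+x)(k+x+y)}\le\frac{1}{(k+x+y-1)(k+x+y)}=\frac{1}{k+x+y-1}-\frac{1}{k+x+y}.
\]
Summing the right-hand side over $k\ge 1$ telescopes to $\frac{1}{x+y}$, since the tail $\frac{1}{k+x+y}\to 0$. This immediately gives the desired bound on the series, and hence the lemma.

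I expect the main obstacle to be spotting this discrete majorization rather than any later computation, which is routine. It is worth flagging that the obvious shortcut fails: bounding the decreasing summand by the integral $\int_0^\infty\frac{ds}{(s+x)(s+x+y)}=\frac{1}{y}\ln\frac{x+y}{x}$ yields $\frac{1}{y}\ln(1+y/x)$, which by the standard inequality $\ln(1+u)\ge\frac{u}{1+u}$ is at least $\frac{1}{x+y}$ and therefore goes the wrong way. The telescoping bound above is tight enough precisely because replacing $k+x$ by the slightly smaller $k+x+y-1$ costs nothing when $y\le 1$ while turning each summand into an exact difference of consecutive reciprocals.
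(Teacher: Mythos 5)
Your proof is correct and follows essentially the same route as the paper's: both rewrite $H_{x+y}-H_x$ as $\sum_{k\ge 1}\frac{y}{(k+x)(k+x+y)}$, apply the majorization $\frac{1}{(k+x)(k+x+y)}\le\frac{1}{(k+x+y-1)(k+x+y)}$ (valid precisely because $y\le 1$), and telescope to $\frac{y}{x+y}$. The only cosmetic difference is that the paper carries out the manipulation on partial sums before passing to the limit, whereas you work with the convergent infinite series directly, which is equally valid here.
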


    \begin{proof}
    First, note that for any positive integer $r$,
    \begin{align*}
        \sum_{k = 1}^r \frac{x+y}{k(x+y+k)} - \sum_{k = 1}^r  \frac{x}{k(x+k)}
        &=  \sum_{k = 1}^r \left(\frac{1}{k} - \frac{1}{ k + x + y}\right) - \sum_{k = 1}^r \left(\frac{1}{k} - \frac{1}{k + x} \right) \\
        &= \sum_{k = 1}^r \left(\frac{1}{k + x} - \frac{1}{k + x + y}\right) \\
        &=  \sum_{k = 1}^r \frac{y}{(k + x)(k + x + y)} \\
        &\le \sum_{k = 1}^r \frac{y}{(k + x + y - 1)(k + x + y)}.
    \end{align*}
    Hence, we have
    \begin{align*}
        H_{x + y} - H_{x}
        &=  \lim_{r \to \infty} \sum_{k = 1}^r \frac{x+y}{k(x+y+k)} - \lim_{r \to \infty} \sum_{k = 1}^r \frac{x}{k(x+k)} \\
        &= \lim_{r \to \infty} \left( \sum_{k = 1}^r \frac{x+y}{k(x+y+k)} - \sum_{k = 1}^r  \frac{x}{k(x+k)} \right) \\
        &\le \lim_{r \to \infty} \sum_{k = 1}^r \frac{y}{(k + x + y - 1)(k + x + y)} \\
        &= \sum_{k = 1}^\infty \frac{y}{(k + x + y - 1)(k + x + y)} \\
        &= \sum_{k = 1}^\infty \left( \frac{y}{k+x+y-1} - \frac{y}{k+x+y} \right)
        = \frac{y}{x+y},
    \end{align*}
    where the inequality follows from the previous paragraph and the last sum telescopes.
    \end{proof}

\begin{theorem}
\label{thm:GPAV}
Generalized PAV satisfies EJR-1.
\end{theorem}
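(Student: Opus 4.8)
The plan is to argue by contradiction against the optimality of the Generalized PAV allocation, using a single \emph{swap}: remove a cheap unit of allocated resource and insert a commonly approved unit that benefits the whole cohesive group. \Cref{lem:harmonic-growth} is exactly the tool needed to bound the \emph{loss} of such a swap, since removing $q_i$ units of utility from an agent currently at utility $u_i$ costs $H_{u_i}-H_{u_i-q_i}\le \frac{q_i}{u_i}$ whenever $q_i\le 1$ (apply the lemma with $x=u_i-q_i$, $y=q_i$).

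First I would set up the contradiction. Let $R'$ be the returned allocation and suppose some $t$-cohesive group $N^*$ witnesses a violation of EJR-1, i.e.\ $u_i(R')\le t-1$ for all $i\in N^*$. The cases $t\le 1$ are immediate, so assume $t>1$; since $|N^*|\le n$ and $|N^*|\ge tn/\alpha$, this forces $\alpha\ge t>1$. Writing $R^*=\bigcap_{i\in N^*}R_i$ with $s(R^*)\ge t$, I would observe that $s(R^*\cap R')\le u_i(R')\le t-1$ for every $i\in N^*$ (because $R^*\subseteq R_i$), so the commonly approved but \emph{unallocated} resource $R^*\setminus R'$ has size at least $t-(t-1)=1$. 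I can therefore extract a sub-bundle $P\subseteq R^*\setminus R'$ with $s(P)=1$; since $P$ is disjoint from $R'$, adding all of $P$ raises the utility of every agent in $N^*$ by exactly $1$.

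The core step is the swap and its accounting. I would assign the allocated resource a weight density $\frac{1}{u_i}$ for each approving agent $i$ with $u_i>0$, so that the total weight of $R'$ is $\sum_{i:u_i>0} s(R_i\cap R')/u_i = |\{i:u_i>0\}|\le n$. Since (as I would first show) a counterexample optimum must exhaust the budget, $s(R')=\alpha$, a standard averaging argument yields a sub-bundle $Q\subseteq R'$ of size $1$ whose weight is at most $n/\alpha$; by \Cref{lem:harmonic-growth} the GPAV-score lost in deleting $Q$ is at most this weight, hence at most $n/\alpha$. On the other hand, inserting $P$ after deleting $Q$ gains at least $\sum_{i\in N^*}\frac{1}{u_i-q_i+1}\ge |N^*|/t\ge n/\alpha$, where $q_i=s(R_i\cap Q)$, using $u_i-q_i+1\le u_i+1\le t$. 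Thus $R''=(R'\setminus Q)\cup P$ is feasible with $H(R'')\ge H(R')$, and the remaining task is to upgrade this to a \emph{strict} inequality.

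The main obstacle is precisely this strictness, since the naive swap only gives $\ge$. I would resolve it by a case split on where the averaging is tight. If fewer than $n$ agents have positive utility, or if the weight is not constant across $R'$, then the cheapest $Q$ has weight strictly below $n/\alpha$, giving a strict improvement. In the remaining case the weight is constant and every agent has positive utility; here I would instead choose $Q$ to overlap the allocated resource approved by some $i\in N^*$ (which exists since $u_i=t-1>0$), so that $q_i>0$ and the gain term $\frac{1}{u_i-q_i+1}$ strictly exceeds $\frac1t$, again producing a strict improvement. Either way $H(R'')>H(R')$, contradicting that $R'$ is GPAV-optimal. A modest amount of bookkeeping (the budget-exhaustion reduction, which handles slack by inserting without deleting or by deleting a sub-unit chunk; the boundary $u_i-q_i=0$ in the lemma, where $H_{u_i}-H_0=H_{u_i}\le H_1=1=q_i/u_i$; and extracting $P$ and $Q$ of exactly the right size in the presence of indivisible goods) rounds out the argument.
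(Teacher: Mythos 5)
Your overall strategy---insert a commonly approved unit, delete a cheap allocated unit, bound the loss via \Cref{lem:harmonic-growth} and the gain via $u_i-q_i+1\le t$---is the same swap-and-average argument the paper uses, and your gain/loss accounting is correct. The genuine gap is the sentence ``a standard averaging argument yields a sub-bundle $Q\subseteq R'$ of size $1$ whose weight is at most $n/\alpha$.'' In the mixed-goods setting that inference is invalid: a sub-bundle of size exactly $1$ must be either one whole indivisible good or a piece of cake of total length $1$, so when the cake portion of $R'$ has length $c'<1$ (possible whenever $\alpha\notin\mathbb{Z}$, e.g.\ $\alpha=2.5$, $|G'|=2$, $c'=0.5$) the only size-$1$ sub-bundles are single goods, and nothing in the arithmetic forces any single good to be cheap: with each good carrying weight $0.45n$ and the cake carrying weight $0.1n$, the total weight is $n$ and the total size is $\alpha$, yet every size-$1$ sub-bundle has weight $0.45n>n/\alpha=0.4n$. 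Your strictness case analysis (``choose $Q$ to overlap the allocated resource approved by some $i\in N^*$'') and your budget-exhaustion reduction (``deleting a sub-unit chunk'') presuppose deletable bundles of prescribed fractional size inside $R'$ and hit the same wall. (A smaller slip: $t=1$ is not ``immediate,'' since EJR-1 then demands strictly positive utility for some group member; your main argument, once repaired, would cover it with $q_i=0$.)

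This indivisibility obstruction is exactly what the paper's proof is engineered around, and the fix is not cosmetic. The paper first pads $R'$ with cake approved by nobody so that $s(R')=\alpha$ exactly (such padding has zero deletion cost, which also disposes of your budget-exhaustion step), and then averages removal costs over the \emph{enlarged} bundle $R''=R'\cup\{g^*\}$ of size $\alpha+1$ (your $P$ is the paper's $g^*$) rather than over $R'$ itself. The extra unit does two jobs at once: (i) if every good of $G''$ has removal cost at least $n/\alpha$, then the bound $\sum_{g\in G''}\bigl(H(R'')-H(R''\setminus\{g\})\bigr)\le n$ forces $|G''|\le\alpha$, hence the cake part has length at least $1$, so length-$1$ windows $[x,x+1]$ (cyclically wrapped around the cake) exist and the integral averaging over cake positions is legitimate; and (ii) the candidate pool now has size $\alpha+1$, so assuming every candidate costs at least $n/\alpha$ gives the total $(\alpha+1)\cdot n/\alpha>n$, a strict contradiction---strictness comes for free, with no case analysis on constancy of the density or on who has positive utility. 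Without importing both of these ideas, the case ``all allocated goods are expensive and $c'<1$'' remains unhandled in your write-up.
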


\begin{proof}
Let $R' = (C', G')$ be a Generalized PAV allocation.
By adding a piece of cake approved by no agent to the resource~$R$ as well as~$R'$ if necessary, we may assume without loss of generality that $s(R') = \alpha$.
Assume also that the cake~$C'$ is represented by the interval $[0, c']$.
Whenever $x + y > c'$, the interval $[x, x + y]$ refers to $[x, c'] \cup [0, x + y - c']$, i.e., we cyclically wrap around the cake $C'$.\footnote{For example, if $c' = 2$, $x = 1.2$, and $y = 1$, then $[x, x+y]$ refers to $[1.2,2]\cup [0,0.2]$ since $x+y-c' = 0.2$.}

Suppose for contradiction that for some~$t > 0$, there exists a $t$-cohesive group~$N'$ with $u_i(R') \leq t-1$ for all~$i \in N'$.
Hence, there exists either a piece of cake of size~$1$ that is approved by all agents in~$N'$ but not contained in~$R'$, or an indivisible good with the same property.
We assume the latter case; the proof proceeds similarly in the former case.
Denote this good by~$g^*$, and let $G'' \coloneqq G' \cup \{g^*\}$ and $R'' \coloneqq (C', G'')$.
We have
\begin{align*}
H(R'') - H(R')
&\ge \sum_{i \in N'} \left(H_{u_i(R') + 1} - H_{u_i(R')}\right) \\
&= \sum_{i \in N'} \frac{1}{u_i(R') + 1} \\
&\ge \frac{\lvert N'\rvert^2}{\sum_{i \in N'}(u_i(R') + 1)}
\ge \frac{\lvert N'\rvert^2}{\lvert N'\rvert\cdot (t-1) + \lvert N'\rvert}
= \frac{|N'|}{t}
\ge \frac{n}{\size},
\end{align*}
where the second inequality follows from the inequality of arithmetic and harmonic means and the last inequality from the definition of a $t$-cohesive group.
In other words, adding~$g^*$ increases the GPAV-score of~$R'$ by at least~$n/\size$.

For each good~$g \in G$, denote by $N_g \subseteq N$ the set of agents who approve it.
For each~$g \in G''$, we have
\begin{align*}
H(R'') - H(R''\setminus \{g\})
&= \sum_{i \in N_g} \left(H_{u_i(R'')} - H_{u_i(R'') - 1}\right)
= \sum_{i \in N_g} \frac{1}{u_i(R'')},
\end{align*}
where $u_i(R'')-1 \ge 0$ always holds because every agent $i\in N_g$ approves $g\in G''$ and therefore has utility at least~$1$ for $R'' = (C', G'')$.
Letting $N_+$ consist of the agents~$i\in N$ with $u_i(R'') > 0$, we get
\begin{align}
\sum_{g \in G''} (H(R'') - H(R''\setminus \{g\}))
&= \sum_{g \in G''} \sum_{i \in N_g} \frac{1}{u_i(R'')} \nonumber \\
&= \sum_{i \in N_+} \sum_{g \in G'' \cap G_i} \frac{1}{u_i(R'')} \nonumber \\
&= \sum_{i \in N_+} \frac{u_i(G'')}{u_i(R'')} \label{eq:GPAV-difference-good} \\
&\le \sum_{i \in N_+} 1 \le n. \nonumber
\end{align}
If there is a good~$g \in G''$ such that $H(R'') - H(R'' \setminus \{g\}) < n/\alpha$ (clearly, $g \ne g^*$), we can replace~$g$ with~$g^*$ in~$R'$ and obtain a higher GPAV-score, contradicting the definition of~$R'$.
Hence, we may assume that $H(R'') - H(R'' \setminus \{g\}) \ge n/\alpha$ for every good~$g \in G''$.
It follows that
\[
n \ge \sum_{g \in G''} (H(R'') - H(R''\setminus \{g\})) \ge |G''|\cdot \frac{n}{\alpha}.
\]
Therefore, we have that $|G''| \le \alpha$, and so $c' \ge 1$.

Now, for any $x \in C'$, it holds that
\begin{align}
H(R'') - H(R'' \setminus [x, x + 1])
&= \sum_{i \in N} \left( H_{u_i(R'')} - H_{u_i(R'') - u_i([x, x + 1])} \right) \nonumber \\
&\le \sum_{i \in N_+} \frac{u_i([x, x + 1])}{u_i(R'')}, \label{eq:GPAV-difference-cake}
\end{align}
where the inequality follows from \Cref{lem:harmonic-growth}.
Using \eqref{eq:GPAV-difference-good} and \eqref{eq:GPAV-difference-cake}, we get
\begin{align}
\sum_{g \in G''} (H(R'') - H(R'' \setminus \{g\}))
&+ \int_{C'} (H(R'') - H(R''\setminus [x, x + 1])) \diff x \nonumber \\
&\le \sum_{i\in N_+}\frac{u_i(G'')}{u_i(R'')} + \int_{C'} \left(\sum_{i \in N_+}\frac{u_i([x, x + 1])}{u_i(R'')}\right) \diff x \nonumber \\
&= \sum_{i\in N_+}\frac{u_i(G'')}{u_i(R'')} + \sum_{i\in N_+}\left(\int_{C'} \frac{u_i([x, x + 1])}{u_i(R'')} \diff x\right) \nonumber \\
&= \sum_{i \in N_+} \left[\frac{1}{u_i(R'')}\left(u_i(G'') + \int_{C'} u_i([x, x + 1])\diff x\right) \right] \nonumber \\
&= \sum_{i \in N_+} \left[\frac{1}{u_i(R'')}\left(u_i(G'') +  u_i(C')\right)\right] \le \sum_{i\in N} 1 = n. \label{eq:GPAV}
\end{align}
Here, we have $\int_{C'} u_i([x, x + 1])\diff x = u_i(C')$ because
\begin{align*}
\int_{C'} u_i([x,x+1])\diff x
&= \int_{C'} \ell(C_i\cap [x,x+1])\diff x \\
&= \int_{C_i\cap C'}\ell([y-1,y])\diff y
= \int_{C_i\cap C'}1 \diff y
= \ell(C_i\cap C')
= u_i(C'),
\end{align*}
where the second equality holds because a point $y \in C_i$ belongs to the interval $[x, x + 1]$ if and only if $x \in [y - 1, y]$.

If it were the case that $H(R'') - H(R'' \setminus [x, x + 1]) \ge n/\alpha$ for every $x \in C'$, we would have
\begin{align*}
\sum_{g \in G''} (H(R'') - H(R''\setminus \{g\}))
&+ \int_{C'} (H(R'') - H(R''\setminus [x, x + 1])) \diff x\\
&\ge |G''|\cdot\frac{n}{\alpha} + c'\cdot \frac{n}{\alpha} = (\alpha+1)\cdot \frac{n}{\alpha} > n,
\end{align*}
a contradiction with \eqref{eq:GPAV}.
Thus, it must be that $H(R'') - H(R'' \setminus [x, x + 1]) < n/\alpha$ for some $x \in C'$.
By replacing the cake $[x, x + 1]$ in~$R'$ with the good~$g^*$, we therefore obtain a higher GPAV-score than that of~$R'$.
This yields the final contradiction and completes the proof.
\end{proof}

In contrast to \generalizedMESAbbr, Generalized PAV does not satisfy EJR in cake sharing.

\begin{proposition}
\label{prop:GPAV-cake-EJR}
For cake instances, Generalized PAV does not satisfy cake EJR.
\end{proposition}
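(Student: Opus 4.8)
The statement is a non-existence claim, so the plan is to exhibit a single cake instance whose Generalized PAV outcome violates cake EJR. I would use two disjoint blocks of agents with disjoint approved intervals, tuned so that the harmonic marginal of a larger block outcompetes a smaller but cohesive block. Concretely, take a cake $C = [0,c]$ split into $X = [0,t]$ and $Y = [t,c]$, a group $A$ of $p$ agents each approving exactly $X$, and a group $B$ of $q$ agents each approving exactly $Y$, with the budget $\alpha$ and the sizes chosen so that (i) $A$ is $t$-cohesive, i.e., $p \ge t n/\alpha$ with $n = p+q$ and $\ell(X) = t$, and (ii) $\ell(Y)$ is large enough to be non-binding. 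For definiteness one can take $t = 2$, $p = 3$, $q = 6$, $\alpha = 6$, and $c = 7$, so that $|A| = 3 = 2\cdot 9/6$ makes $A$ exactly $2$-cohesive.

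Next I would pin down the Generalized PAV allocation. Because every agent in $A$ approves all of $X$ (and symmetrically $B$ and $Y$), any allocation $C'$ assigns every member of $A$ the same utility $S \coloneqq \ell(C'\cap X)$ and every member of $B$ the same utility $T \coloneqq \ell(C'\cap Y)$; no allocation can differentiate agents within a block. Since $H$ is strictly increasing and $\alpha \le \ell(X)+\ell(Y)$, the budget binds, so $S + T = \alpha$, and the GPAV-score reduces to the one-dimensional concave function $f(S) = p\,H_S + q\,H_{\alpha-S}$ on the feasible interval $S \in [\alpha - \ell(Y),\, t]$. Maximizing $f$ thus determines the utility $S^\ast$ common to all of $A$.

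The crux is to show $S^\ast < t$, so that no member of the $t$-cohesive group $A$ attains utility $t$. Since $f$ is concave, it suffices to verify $f'(t) < 0$, i.e., $p\,H'_t < q\,H'_{\alpha-t}$, where $H'_x = \sum_{k\ge 1}(k+x)^{-2}$ from the series definition of $H_x$. For the chosen numbers this is $3\,H'_2 < 6\,H'_4$, equivalently $\sum_{k\ge1}(k+2)^{-2} < 2\sum_{k\ge1}(k+4)^{-2}$, which using $\sum_{k\ge1}k^{-2} = \pi^2/6$ reduces to the elementary bound $\pi^2 > 115/12$ and hence holds. Concavity of $f$ together with $f'(t) < 0$ forces $S^\ast < t$, so every agent in $A$ receives utility $S^\ast < t$ while $A$ is $t$-cohesive: cake EJR fails.

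The main obstacle is exactly this marginal comparison: I must ensure that the larger block's harmonic marginal $q\,H'_{\alpha-t}$ strictly exceeds $A$'s marginal $p\,H'_t$ at the EJR threshold, which is what drives GPAV to stop short of giving $A$ its deserved utility. The structural reason the example exists — and a useful sanity check — is that the generalized harmonic number has bounded initial slope $H'_0 = \pi^2/6$, so GPAV is willing to divert the last sliver of cake away from $A$; by contrast, the maximum Nash welfare objective $\sum_{i} \log u_i$ has unbounded slope near $0$ and, on the same instance, yields exactly $S = t$, consistent with the fact that MNW satisfies cake EJR. I would finish by confirming that the violation is genuine: every other cohesive group — subgroups of $A$ or $B$, and $B$ itself, which is at most $4$-cohesive and receives $T^\ast = \alpha - S^\ast > 4$ — has a satisfied member, so that $A$ at level $t$ is the sole and real EJR violation.
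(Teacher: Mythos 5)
Your proof is correct, but it takes a genuinely different route from the paper's. The paper argues indirectly: it invokes the characterization theorem of Bei et al.\ (\Cref{prop:maximize-EJR-M}), which says that any rule maximizing $\sum_{i\in N} f(u_i(R'))$ and satisfying cake EJR must have $f'(x) = c/x$ on $(0,\infty)$, and then observes that $H'_x = \sum_{k=1}^{\infty}\frac{1}{(x+k)^2}$ tends to $\pi^2/6$ rather than $\infty$ as $x\to 0^+$, so the generalized harmonic numbers are not of this form. You instead build an explicit counterexample: two agent blocks with disjoint approved intervals, a reduction of the GPAV objective to the one-variable concave function $f(S) = 3H_S + 6H_{6-S}$ on $S\in[1,2]$, and the marginal comparison $f'(2) = 3H'_2 - 6H'_4 < 0$, which (using $\sum_{k\ge 1} 1/k^2 = \pi^2/6$) is equivalent to $\pi^2 > 115/12$ and indeed holds: $3H'_2 = \pi^2/2 - 15/4 \approx 1.18$ versus $6H'_4 = \pi^2 - 205/24 \approx 1.33$. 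Hence every maximizer has $S^* < 2$, and your exactly $2$-cohesive block is left strictly below its threshold, refuting cake EJR. The paper's route buys brevity and a conceptual message---essentially only logarithmic objectives are compatible with cake EJR---at the price of leaning on an external, nontrivial theorem; your route is self-contained, elementary, and constructive, and your closing observation correctly isolates the structural cause shared by both arguments, namely the bounded initial slope $H'_0 = \pi^2/6$ of the generalized harmonic numbers versus the unbounded slope of $\log$ near zero. Two minor points: the term-by-term differentiation yielding $H'_x$ deserves a one-line justification via uniform convergence of $\sum_{k\ge 1}(x+k)^{-2}$ (the paper is equally terse on this), and your final verification that all \emph{other} cohesive groups are satisfied is superfluous---a single violated $t$-cohesive group in a GPAV allocation already witnesses the failure; moreover, strict concavity of $f$ makes the utility profile of every GPAV maximizer on this instance identical, so the violation persists under any tie-breaking.
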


To prove this statement, we use the following proposition.

\begin{proposition}[\citet{BeiLuSu24}]
\label{prop:maximize-EJR-M}
Let $f:\mathbb{R}_{\ge 0}\rightarrow [-\infty,\infty)$ be a strictly increasing function which is differentiable in $(0,\infty)$.
For cake sharing, if a rule that always chooses an allocation~$R'$ maximizing $\sum_{i\in N} f(u_i(R'))$ satisfies cake EJR, then there exists a constant $c$ such that $f'(x) = c/x$ for all $x\in (0,\infty)$.\footnote{This is Theorem~7.8 in their work.
Bei et al.~normalized the length of the cake to $1$, but the same proof works in our setting.}
\end{proposition}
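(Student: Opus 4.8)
The plan is to prove this characterization by showing that cake EJR, combined with $f$-optimality, forces the \emph{exchange condition} $a f'(a) = b f'(b)$ for every pair $a,b > 0$; writing $c := x f'(x)$ then immediately yields $f'(x) = c/x$. I would fix arbitrary $a,b > 0$, set $\alpha := a+b$, and build a two-group cake instance engineered so that cake EJR pins the $f$-maximizing allocation to a single boundary point, at which first-order optimality becomes an inequality between $f'(a)$ and $f'(b)$.

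Concretely, I would use a ``deserving'' group $D$ of $n_D$ agents all approving a common piece $P_D$ of length exactly $a$, together with a ``competitor'' group $E$ of $n_E = n - n_D$ agents all approving a disjoint piece $P_E$ of length at least $\alpha$ (so that $E$ can absorb any leftover budget). Choosing $n_D = \lceil a n/\alpha \rceil$ makes $D$ exactly $a$-cohesive, since the length of $P_D$ caps its cohesion level at $a$. Cake EJR then forces some member of $D$ to receive utility $a$, which by symmetry of commonly-approved cake means every member does, i.e.\ all of $P_D$ must be allocated. Because $f$ is strictly increasing and $E$ is uncapped below $\alpha$, the optimal allocation uses the full budget, so it is described by a single variable $\ell \in [0,a]$ giving $D$ utility $\ell$ and $E$ utility $\alpha - \ell$, with objective $g(\ell) = n_D f(\ell) + n_E f(\alpha - \ell)$. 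Since the rule outputs a maximizer that (by hypothesis) satisfies cake EJR, and the only EJR allocation has $\ell = a$, the point $\ell = a$ must be a maximizer of $g$ on $[0,a]$; comparing $g(a)$ with $g(a-\varepsilon)$ and letting $\varepsilon \to 0$ gives the left-derivative inequality $g'(a) = n_D f'(a) - n_E f'(b) \ge 0$, that is, $n_D f'(a) \ge n_E f'(b)$.

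The final steps convert this into the clean exchange condition. Dividing by $n$ and letting $n \to \infty$ washes out the $O(1)$ ceiling rounding, since $n_D/n \to a/\alpha$ and $n_E/n \to b/\alpha$, yielding $a f'(a) \ge b f'(b)$; crucially, this limiting argument lets $a$ and $b$ range over arbitrary reals rather than only pairs with rational ratio. Running the identical construction with the roles of $a$ and $b$ interchanged gives $b f'(b) \ge a f'(a)$, so $a f'(a) = b f'(b)$ for all $a,b > 0$, and the claim follows with $c := x f'(x)$.

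I expect the main obstacle to be verifying that cake EJR forces \emph{exactly} the single allocation $\ell = a$: one must check that no other cohesive group imposes a conflicting or stronger demand—individual agents and subgroups are dominated, and $D \cup E$ has empty common intersection hence is not cohesive—and that $P_E$ is long enough that the budget constraint, rather than $E$'s own EJR requirement $\ell_E \ge n_E\alpha/n \le b$, is what ties $E$'s utility to $\alpha - \ell$. A secondary care point is the limiting argument itself: I must ensure $n_E \ge 1$ for all large $n$ and that passing to the limit in the family of finite-instance inequalities is legitimate, which it is since each instance independently yields a valid necessary condition for the rule to satisfy cake EJR.
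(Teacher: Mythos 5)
Your proposal is correct and follows essentially the same route as the proof the paper relies on (the paper itself defers to Theorem~A.8 of \citet{BeiLuSu22}): a two-group cake instance with disjoint approved pieces, where cake EJR pins the welfare-maximizing allocation at the boundary $\ell = a$, and the one-sided first-order condition gives $n_D f'(a) \ge n_E f'(b)$, from which $a f'(a) = b f'(b)$ follows by letting $n \to \infty$ to wash out the ceiling rounding and then swapping the roles of $a$ and $b$. Your verification that no other cohesive group (subgroups of $D$ or $E$, or mixed groups with empty common approval) imposes a conflicting demand is exactly the care the argument needs, so the derivation of the exchange condition $x f'(x) \equiv c$ is sound.
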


\begin{proof}[Proof of \Cref{prop:GPAV-cake-EJR}]
For a positive integer $r$, one can check that the derivative with respect to $x$ of $\sum_{k=1}^r\frac{x}{k(x+k)}$ is $\sum_{k=1}^r\frac{1}{(x+k)^2}$, which converges as $r\rightarrow\infty$.
This means that $H_x = \sum_{k=1}^\infty \frac{x}{k(x+k)}$ is differentiable as a function of~$x$, and its derivative is $\sum_{k=1}^\infty\frac{1}{(x+k)^2}$.
In particular, there is no constant $c$ such that $H'_x = c/x$ for all $x\in (0, \infty)$---for example, this can be seen by observing that, as $x$ approaches $0$ from above, $H'_x$ approaches $\sum_{k = 1}^\infty 1/k^2 = \pi^2/6$ rather than $\infty$.
By \Cref{prop:maximize-EJR-M}, Generalized PAV does not satisfy cake EJR.
\end{proof}

\section{Proportionality Degree}\label{sec:proportionality-degree}

In addition to the axiomatic study of representation in terms of criteria like \EJRM and \EJROne, another relevant concept for cohesive groups is the \emph{proportionality degree}, which measures the average utility of the agents in each such group \citep{Skowron21}.
In this section, we first derive tight bounds on the proportionality degree implied by \EJRM and \EJROne, and then investigate the proportionality degree of the rules that we studied in \Cref{sec:EJR-notions}.

\begin{definition}[Average satisfaction]
Given an instance and an allocation~$A$, the \emph{average satisfaction} of a group of agents~$N' \subseteq N$ with respect to~$A$ is $\frac{1}{|N'|} \cdot \sum_{i \in N'} u_i(A)$.
\end{definition}

\begin{definition}[Proportionality degree]
Fix a function $f \colon \mathbb{R}_{> 0} \to \mathbb{R}_{\ge 0}$.
A rule~$\mathcal{M}$ has a \emph{proportionality degree} of~$f$ if for each instance~$I$, each allocation~$A$ that $\mathcal{M}$ outputs on~$I$, and each $t$-cohesive group of agents~$N^*$, the average satisfaction of~$N^*$ with respect to $A$ is at least~$f(t)$, i.e.,
\[
\frac{1}{|N^*|} \cdot \sum_{i \in N^*} u_i(A) \geq f(t).
\]
\end{definition}
For indivisible goods, \citet{SanchezFernandezElLa17} showed that EJR implies a proportionality degree of $\frac{t-1}{2}$.
We will show that in our setting, both \EJRM and \mbox{\EJROne} imply a proportionality degree of roughly $t/2$, with the guarantee for EJR-M being slightly higher.
In addition, we will establish that both GreedyEJR-M and Generalized MES have a proportionality degree of approximately $t/2$, while the proportionality degree of Generalized PAV is higher than $t-1$.

\subsection{Proportionality Degree Implied by EJR-M and EJR-1}

Our focus in this subsection is to establish tight bounds on the proportionality degree implied by \EJRM and \EJROne.
Observe that for $t < 1$, a $t$-cohesive group may have an average satisfaction of $0$ in an EJR-M or EJR-1 allocation.
Indeed, if $\alpha = t$ and the resource consists only of a single indivisible good, which is approved by all $n$ agents, then the set of all agents is $t$-cohesive, but the empty allocation is EJR-M and EJR-1.
We therefore assume $t\ge 1$ for our results from here on.

We first show that the proportionality degree implied by EJR-M is $\floor{t} \cdot \left( 1 - \frac{\floor{t} + 1}{2t} \right)$, beginning with the lower bound.
Note that this quantity is roughly $t/2$.

\begin{theorem}\label{thm:EJR-M-LB-average-satisfaction}
Given any instance and any real number \mbox{$t\ge 1$}, let $N^* \subseteq N$ be a $t$-cohesive group and $A$ be an \mbox{EJR-M} allocation.
The average satisfaction of $N^*$ with respect to $A$ is at least $\floor{t} \cdot \left( 1 - \frac{\floor{t} + 1}{2t} \right)$.
\end{theorem}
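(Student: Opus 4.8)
The plan is to reduce everything to the integer-threshold guarantee of \Cref{prop:EJR-M-floor} and then sum a layered family of lower bounds, in the spirit of the classical argument that EJR implies proportionality degree $\frac{t-1}{2}$ for indivisible goods. Write $f = \floor{t}$, let $p = |N^*| \ge tn/\alpha$, and for each integer $k$ let $a_k = |\{i \in N^* : u_i(A) \ge k\}|$ count the members of the cohesive group whose satisfaction is at least $k$. The goal is a lower bound on $\frac{1}{p}\sum_{i \in N^*} u_i(A)$, and the whole proof rests on controlling the numbers $a_k$.

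The key step is to show that $a_k > p - k\cdot n/\alpha$ for every integer $k \in \{1,\dots,f\}$. I would argue by contradiction: fix such a $k$ and consider the subset $S_k = \{i \in N^* : u_i(A) < k\}$ of low-utility members. Since $S_k \subseteq N^*$, its common approval contains that of $N^*$, so $s(\bigcap_{i \in S_k} R_i) \ge t \ge k$; if moreover $|S_k| \ge k\cdot n/\alpha$, then $S_k$ is $k$-cohesive. Because $k$ is an integer, \Cref{prop:EJR-M-floor} applied to $S_k$ would produce an agent of $S_k$ with utility at least $\floor{k} = k$, contradicting the defining property of $S_k$. Hence $|S_k| < k\cdot n/\alpha$, which rearranges to $a_k = p - |S_k| > p - k\cdot n/\alpha$. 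I would emphasize that it is essential to use the strict set $\{u_i(A) < k\}$ rather than $\{u_i(A) \le k-1\}$, since utilities can be non-integral in the presence of cake; this is the one place where the mixed-goods feature must be handled with care.

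To finish, I would bound the total satisfaction from below using the floor of each utility: since $u_i(A) \ge \floor{u_i(A)} = \sum_{k \ge 1}\mathbb{1}[u_i(A) \ge k]$, summing over $N^*$ gives $\sum_{i \in N^*} u_i(A) \ge \sum_{k\ge 1} a_k \ge \sum_{k=1}^{f} a_k$. Plugging in the layer bounds and using $\sum_{k=1}^f k = f(f+1)/2$ yields $\sum_{i\in N^*} u_i(A) \ge fp - \frac{n}{\alpha}\cdot\frac{f(f+1)}{2}$; here every discarded term $a_k$ with $k > f$ is nonnegative, and every retained bound $p - k\,n/\alpha$ is nonnegative because $p \ge tn/\alpha \ge fn/\alpha$. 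Dividing by $p$ and using $\frac{n}{\alpha p} \le \frac{1}{t}$ (which is exactly $p \ge tn/\alpha$) gives average satisfaction at least $f - \frac{f(f+1)}{2t} = \floor{t}\bigl(1 - \frac{\floor{t}+1}{2t}\bigr)$, as required.

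The only real obstacle is the second step, and specifically the real-valued subtlety flagged there; the telescoping count and the final division are routine. It is worth remarking that the argument invokes only the integer consequence \Cref{prop:EJR-M-floor}, so the same lower bound holds for any allocation meeting that weaker guarantee. The matching upper bound (tightness) would be established separately by an explicit indivisible-goods instance with integral utilities, for which $\sum_{i} u_i(A) = \sum_i \floor{u_i(A)}$ and hence no slack is lost in the step $u_i(A) \ge \floor{u_i(A)}$.
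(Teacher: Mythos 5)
Your proof is correct, and it takes a genuinely cleaner route than the paper's. Both arguments rest on the same key fact, obtained from \Cref{prop:EJR-M-floor}: for each integer $k \le \floor{t}$, the set of members of $N^*$ with utility below $k$ inherits the common approval set of $N^*$ (of size $\ge t \ge k$), so it cannot have size $\ge k n/\alpha$ without being $k$-cohesive and contradicting the floor guarantee. The difference is in how this fact is aggregated. The paper runs an iterative peeling argument: it repeatedly extracts agents guaranteed utility $\floor{t}$, then $\floor{t}-1$, and so on, building an explicit partition $N_0, N_1, \dots, N_{\floor{t}}$ (plus a surplus set $N'_{\floor{t}}$) whose sizes are pinned down by ceiling arithmetic such as $\bigl|\bigcup_{k=0}^{j-1} N_k\bigr| = \ceiling{jr}-1$, and finishes with fraction bounds and a convex-combination step to fold in the surplus agents. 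You instead count superlevel sets $a_k = |\{i \in N^* : u_i(A) \ge k\}|$, bound each by $a_k > p - kn/\alpha$ via the complementary low-utility set, and sum using the tail-sum identity $\floor{x} = \sum_{k \ge 1}\mathbf{1}[x \ge k]$; this collapses the paper's partition, ceiling bookkeeping, and convex combination into three lines of routine algebra. Your flagged subtlety (using $\{u_i(A) < k\}$ rather than $\{u_i(A) \le k-1\}$, since utilities are real-valued) is exactly the right care to take. What the paper's heavier machinery buys is not the lower bound itself but a template: the explicit layer structure $N_0, \dots, N_{\floor{t}}$ with exact sizes is reused to construct the matching tight instance in \Cref{thm:prop-degree-GreedyEJR-M}, which is also why the paper defers the upper bound (\Cref{thm:EJR-M-UB-average-satisfaction}) to that construction, just as you propose to handle tightness separately.
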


The high-level idea behind the proof of \Cref{thm:EJR-M-LB-average-satisfaction} is that, given a $t$-cohesive group~$N^*$ and an \EJRM allocation, a $\frac{t - \floor{t}}{t}$ fraction of the agents in $N^*$ are guaranteed a utility of at least~$\floor{t}$.
The remaining agents can then be partitioned into $\floor{t}$ disjoint subsets so that each subset consists of a $1/t$ fraction of the agents in $N^*$ and the guaranteed utilities for these subsets drop arithmetically from~$\floor{t}-1$ to~$0$.

\begin{proof}[Proof of \Cref{thm:EJR-M-LB-average-satisfaction}]
For ease of notation, let~$r \coloneqq n/\alpha$, and note that $|N^*| \ge \ceiling{tr}$.
Since $N^*$ is $t$-cohesive, by \Cref{prop:EJR-M-floor}, some agent~$i_1 \in N^*$ gets utility at least~$\floor{t}$ from the allocation~$A$.
If $|N^* \setminus \{i_1\}| \geq \floor{t} \cdot r$, then since $N^* \setminus \{i_1\}$ is $\floor{t}$-cohesive, \Cref{prop:EJR-M-floor} implies that another agent~$i_2 \neq i_1$ gets utility at least~$\floor{t}$ from~$A$.
Applying this argument repeatedly, as long as there are at least $\ceiling*{\floor{t} \cdot r}$~agents left, \Cref{prop:EJR-M-floor} implies that one of them gets utility at least~$\floor{t}$.
Let $N'_{\floor{t}}$ consist of the agents with guaranteed utility $\floor{t}$ from this argument, and note that $|N'_{\floor{t}}| = |N^*| - \ceiling{\floor{t} \cdot r} + 1 \ge \ceiling{tr} - \ceiling{\floor{t} \cdot r} + 1$.
Let $\widehat{N} \coloneqq N^* \setminus N'_{\floor{t}}$; we have $|\widehat{N}| = \ceiling{\floor{t} \cdot r} - 1$.
Denote by~$N_{\floor{t}}$ an arbitrary subset of~$N'_{\floor{t}}$ of size exactly $\ceiling{tr} - \ceiling{\floor{t} \cdot r} + 1$.

Now, let us consider the agents in~$\widehat{N}$.
Applying an argument similar to the one in the previous paragraph but using $(\floor{t}-1)$-cohesiveness, we find that $\widehat{N}$ contains at least $\ceiling*{\floor{t} \cdot r} - \ceiling*{(\floor{t} - 1) \cdot r}$ agents with a utility of at least~$\floor{t} - 1$ each; let these agents form~$N_{\floor{t}-1}$.
Continuing inductively, we can partition $\widehat{N}$ into~$\floor{t}$ pairwise disjoint sets $N_{\floor{t} - 1}, N_{\floor{t} - 2}, \dots, N_1, N_0$ such that for each~$j \in \{0, 1, \dots, \floor{t} - 1\}$, every agent in~$N_j$ gets utility at least~$j$ from the allocation~$A$.

For each~$j \in \{1, 2, \dots, \floor{t}\}$, it holds that $\left| \bigcup_{k = 0}^{j-1} N_k \right| = \ceiling{jr} - 1$.
Furthermore, we have
\begin{align*}
j \cdot \ceiling{tr} &\geq j \cdot t \cdot r
= t \cdot (jr + 1) - t \geq t \cdot \ceiling{jr} - t = t \cdot (\ceiling{jr} - 1),
\end{align*}
which implies that
\[
\frac{\left| \bigcup_{k = 0}^{j-1} N_k \right|}{\ceiling{tr}} = \frac{\ceiling{jr} - 1}{\ceiling{tr}} \leq \frac{j}{t}.
\]
Since $\left| \bigcup_{k = 0}^{\floor{t}} N_k \right| = \ceiling{tr}$, it follows that
\begin{align}\label{eq:EJR-M-agents-fraction}
\frac{\left| \bigcup_{k = j}^{\floor{t}} N_k \right|}{\ceiling{tr}}
\geq \frac{t - j}{t}
&= \frac{t - \floor{t}}{t} + \frac{\floor{t} - j}{t}
= \frac{t - \floor{t}}{t} + \sum_{k = j}^{\floor{t}-1} \frac{1}{t}.
\end{align}
With this relationship in hand, we can bound the average satisfaction of $N_{\floor{t}} \cup \widehat{N} = \bigcup_{k = 0}^{\floor{t}} N_k$ as
\begin{align*}
\frac{1}{\left| \bigcup_{k = 0}^{\floor{t}} N_k \right|} \cdot \sum_{i \in \bigcup_{k = 0}^{\floor{t}} N_k} u_i(A)
&\geq \frac{1}{\ceiling{tr}} \cdot \left( \sum_{k = 0}^{\floor{t}} \left| N_k \right| \cdot k \right) \\
&= \sum_{k = 0}^{\floor{t}} \frac{\left| N_k \right|}{\ceiling{tr}} \cdot k \\
&= \sum_{d = 1}^{\floor{t}} \sum_{k = d}^{\floor{t}} \frac{\left| N_k \right|}{\ceiling{tr}} \\
&\geq \sum_{d = 1}^{\floor{t}} \left( \frac{t - \floor{t}}{t} + \sum_{k = d}^{\floor{t}-1} \frac{1}{t} \right)  \\
&= \frac{t - \floor{t}}{t} \cdot \floor{t} + \sum_{d = 1}^{\floor{t}} \sum_{k = d}^{\floor{t}-1} \frac{1}{t} \\
&= \frac{t - \floor{t}}{t} \cdot \floor{t} + \frac{1}{t} \cdot \frac{\floor{t} \cdot (\floor{t} - 1)}{2} \\
&= \frac{\floor{t}}{t} \cdot \frac{2t - \floor{t} - 1}{2} \\
&= \floor{t} \cdot \left( 1 - \frac{\floor{t}+1}{2t} \right),
\end{align*}
where the first inequality holds because each agent in $N_k$ gets utility at least $k$ and the second inequality follows from~\eqref{eq:EJR-M-agents-fraction}.

Since every agent in~$N'_{\floor{t}} \setminus N_{\floor{t}}$ gets utility at least $\floor{t}$, the average satisfaction of $N'_{\floor{t}} \setminus N_{\floor{t}}$ is at least~$\floor{t} \ge \floor{t} \cdot \left( 1 - \frac{\floor{t}+1}{2t} \right)$.
As the average satisfaction of~$N^*$ is a convex combination of the corresponding quantities for $N'_{\floor{t}} \setminus N_{\floor{t}}$ and $N_{\floor{t}} \cup \widehat{N}$, it is at least~$\floor{t} \cdot \left( 1 - \frac{\floor{t}+1}{2t} \right)$, as desired.
\end{proof}

We next give a matching upper bound.

\begin{theorem}\label{thm:EJR-M-UB-average-satisfaction}
For any real numbers $t \geq 1$ and~$\varepsilon > 0$, there exists an instance, a $t$-cohesive group~$N^*$, and an \mbox{EJR-M} allocation~$A$ such that the average satisfaction of $N^*$ with respect to $A$ is at most $\floor{t} \cdot \left( 1 - \frac{\floor{t}+1}{2t} \right) + \varepsilon$.
\end{theorem}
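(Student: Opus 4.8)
The plan is to exhibit, for the given $t\ge 1$ and $\varepsilon>0$, a \emph{purely indivisible-goods} instance together with a $t$-cohesive group $N^*$ and an EJR-M allocation $A$ whose utility profile on $N^*$ mimics the tight profile underlying the lower bound of \Cref{thm:EJR-M-LB-average-satisfaction}. Because the instance contains no cake, EJR-M coincides with indivisible-goods EJR (as noted after \Cref{def:EJR-M}), so it suffices to build an allocation satisfying the latter. Write $b\coloneqq\floor{t}$ and $f\coloneqq t-b\in[0,1)$, fix a large integer $r$ (to be sent to infinity, controlling $\varepsilon$), and set $n\coloneqq\lceil tr\rceil$ and $\alpha\coloneqq n/r$, so that $n/\alpha=r$.

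The construction uses two kinds of goods. First, a block $H$ of $b+1$ \emph{common} goods approved by every agent; these will never be allocated and exist solely to make the whole electorate $t$-cohesive, since $s(H)=b+1>t$ guarantees a commonly approved resource of size at least $t$, while being indivisible so that no commonly approved resource of non-integer size exists. Second, a nested family of \emph{ladder} goods $a_1,\dots,a_b$ with nested agent sets $A_1\supsetneq A_2\supsetneq\cdots\supsetneq A_b$, where $a_k$ is approved exactly by the agents in $A_k$ and an agent in $A_j\setminus A_{j+1}$ approves precisely $a_1,\dots,a_j$ among the ladder goods. I would choose the cardinalities as $|A_j|=n-jr+1$, so that the level populations match the lower-bound partition: exactly $r-1$ agents lie outside $A_1$ (utility $0$), exactly $r$ agents lie in each $A_j\setminus A_{j+1}$ for $1\le j\le b-1$ (utility $j$), and the remaining $\lceil tr\rceil-br+1\approx fr$ agents lie in $A_b$ (utility $b$). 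The allocation is $A\coloneqq\{a_1,\dots,a_b\}$, of size $b\le t=\alpha$. Taking $N^*=N$ (which is $t$-cohesive via $H$), a direct computation shows the average satisfaction tends to $\floor{t}\bigl(1-\tfrac{\floor{t}+1}{2t}\bigr)$ as $r\to\infty$, so for $r$ large enough it is within $\varepsilon$ of the target.

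The main work, and the principal obstacle, is verifying that $A$ is EJR-M, i.e.\ indivisible-goods EJR. The clean statement I would isolate is: for every positive integer $s$, any group all of whose members receive utility at most $s-1$ is contained in the union of levels $0,1,\dots,s-1$, namely $N\setminus A_s$, which by the choice of cardinalities has exactly $sr-1<sr$ agents and therefore cannot be $s$-cohesive. Hence no $s$-cohesive group can be entirely under-served at level $s$, which is precisely indivisible-goods EJR. Two points require care. First, this counting is what forces the off-by-one in the level-$0$ block—it must hold $r-1$ rather than $r$ agents, so that the all-zero group just fails to be $1$-cohesive—and the same slack propagates up the ladder. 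Second, I must confirm that ladder goods do not make a subgroup cohesive at a higher, non-integer level: a suffix $A_j$ commonly approves $H\cup\{a_1,\dots,a_j\}$, which is purely indivisible and hence admits commonly approved resources only of integer size, while its cardinality $(t-j)r$ caps its integer cohesion at $b-j$, a demand already met by the top agents $A_b\subseteq A_j$ who enjoy utility $b$. This is exactly where indivisibility—and therefore the floor in the bound—is exploited: a non-integer cohesion level never triggers an EJR-M obligation, which is what lets the average be pushed down to $\floor{t}\bigl(1-\tfrac{\floor{t}+1}{2t}\bigr)$ rather than the larger value enforced under cake EJR.
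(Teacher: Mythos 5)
Your proof is correct, but it takes a genuinely different route from the paper's. The paper never proves this theorem directly: it derives it from the stronger \Cref{thm:prop-degree-GreedyEJR-M}, whose proof builds an indivisible-goods instance with $\alpha = \frac{\floor{t}(\floor{t}+1)}{2}+1$, disjoint blocks of dummy goods $D^G_k$ approved by padded groups $N_k \cup D_k$ of dummy agents, and then traces an execution of GreedyEJR-M showing it can output $\bigcup_k D^G_k$; the EJR-M property of that allocation then comes for free from \Cref{thm:greedyEJR-M}. You instead use a nested ``ladder'' of goods $a_1,\dots,a_{\floor{t}}$ with nested approval sets of sizes $|A_j| = n - jr + 1$, plus a never-allocated common block $H$, and verify EJR-M of the allocation $\{a_1,\dots,a_{\floor{t}}\}$ directly by counting: the agents with utility at most $s-1$ are exactly $N \setminus A_s$, of size $sr-1 < sr$, so no $s$-cohesive group can be wholly under-served. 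Both constructions realize the same tight utility profile (levels $0,1,\dots,\floor{t}$ with populations roughly $r-1, r, \dots, r, (t-\floor{t})r$), but your verification is self-contained, needing no dummy agents and no analysis of a greedy execution; the trade-off is that it establishes only the existence of an EJR-M allocation, whereas the paper's detour yields the stronger fact that GreedyEJR-M itself can output such an allocation, which the paper needs for its proportionality-degree bound on that rule. One detail you should make explicit: you also need $r$ large enough that $\ceiling{tr} < (\floor{t}+1)\cdot r$, equivalently $r\cdot(\floor{t}+1-t) \ge 1$, so that no group is $s$-cohesive for any integer $s \ge \floor{t}+1$ (otherwise, for $t$ close to $\floor{t}+1$ and small $r$, the whole electorate would be $(\floor{t}+1)$-cohesive via $H$ and EJR-M would fail); this is the precise form of your ``cardinality caps cohesion'' remark and holds automatically once $r \to \infty$.
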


We do not prove \Cref{thm:EJR-M-UB-average-satisfaction} directly, as we will establish a stronger statement later in \Cref{thm:prop-degree-GreedyEJR-M}.

Next, we show that the proportionality degree implied by \EJROne is $\frac{t - 2 + 1/t}{2} = \frac{(t-1)^2}{2t}$, which is slightly lower than that implied by \EJRM for every $t > 1$.
For the lower bound, we use a similar idea as in \Cref{thm:EJR-M-LB-average-satisfaction}, but we need to be more careful about agents with low utility guarantees.
In particular, even when the guarantee provided by the EJR-1 condition is negative, the actual utility is always nonnegative, so we need to ``round up'' the EJR-1 guarantee appropriately.

\begin{theorem}\label{thm:EJR-1-LB-average-satisfaction}
Given any instance and any real number \mbox{$t\ge 1$}, let $N^* \subseteq N$ be a $t$-cohesive group and $A$ be an \mbox{EJR-1} allocation.
The average satisfaction of $N^*$ with respect to $A$ is greater than $\frac{t-2+1/t}{2}$.
\end{theorem}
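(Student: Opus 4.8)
The plan is to mirror the peeling argument from the proof of \Cref{thm:EJR-M-LB-average-satisfaction}, but to replace the integer utility guarantees obtained there from \Cref{prop:EJR-M-floor} with the continuous guarantees that EJR-1 supplies directly, and then to convert the resulting discrete sum of guarantees into the smooth bound $\frac{(t-1)^2}{2t}$ by an integral comparison. The advantage over EJR-M is that EJR-1 imposes its condition for \emph{every} positive real $t$ on any $t$-cohesive group, with no ``size exactly $t$'' requirement, so the peeling is somewhat simpler; the cost is the rounding issue flagged before the statement.

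First I would set $r \coloneqq n/\alpha$ and $n^* \coloneqq |N^*| \ge \lceil tr\rceil$, and peel agents one at a time. The key observation is that every subgroup $X \subseteq N^*$ inherits the common resource, so $s(\bigcap_{i\in X}R_i) \ge s(\bigcap_{i\in N^*}R_i) \ge t$; hence whenever $p$ agents of $N^*$ remain, they form a $\min(t, p/r)$-cohesive group (the size-of-intersection condition is automatic, and the cardinality condition $p \ge \min(t,p/r)\cdot r$ holds, with equality when $p/r \le t$). Applying EJR-1 to this group and removing the guaranteed agent, I obtain distinct agents $a_{n^*}, a_{n^*-1},\dots, a_1$ that exhaust $N^*$, where $a_p$ (peeled when $p$ remain) satisfies $u_{a_p}(A) > \min(t, p/r) - 1$.

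The second step handles the rounding: the guarantee $\min(t,p/r)-1$ is negative for small $p$, whereas utilities are nonnegative, so I would use $u_{a_p}(A) \ge \phi(p)$ where $\phi(p) \coloneqq \max\bigl(0,\min(t,p/r)-1\bigr)$, with strict inequality at $p=n^*$ since $\phi(n^*) = t-1 \ge 0$. Summing gives $\sum_{i\in N^*} u_i(A) > \Sigma \coloneqq \sum_{p=1}^{n^*}\phi(p)$. Since $\phi$ is nondecreasing, a left-endpoint Riemann comparison yields $\Sigma \ge \int_0^{n^*}\phi(x)\diff x$, and evaluating the piecewise-linear integrand (zero on $[0,r]$, equal to $x/r - 1$ on $[r,tr]$, and constant $t-1$ on $[tr,n^*]$) gives $\Sigma \ge \frac{r(t-1)^2}{2} + (t-1)(n^* - tr)$.

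The final step is the algebraic verification that this quantity is at least $n^*\cdot\frac{(t-1)^2}{2t}$; after factoring out $t-1 \ge 0$ this reduces to showing a single bracket is nonnegative, which follows by substituting $n^* \ge tr$. Combined with the strict inequality $\sum_{i\in N^*}u_i(A) > \Sigma$, this yields average satisfaction strictly greater than $\frac{(t-1)^2}{2t} = \frac{t-2+1/t}{2}$. I expect the main obstacle to be getting the constants exactly right: the bound is strict and carries only slack of order $1/r$, so the integral comparison must be applied to the whole sum at once—crediting the high-utility agents with their full guarantee $t-1$ rather than merely the target value—because splitting off a ``core'' of $\lceil tr\rceil$ agents and then using the integral underestimate loses too much and fails. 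Some care is also needed to confirm that the peeling stays valid down to the last agent and that $n^* \ge tr \ge r$, so that the breakpoints $0 \le r \le tr \le n^*$ of the integrand are correctly ordered.
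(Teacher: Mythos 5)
Your proposal is correct, and its skeleton---peeling agents one at a time from $N^*$, observing that the remaining $p$ agents always form a $\min(t,p/r)$-cohesive group, and rounding the negative EJR-1 guarantees up to zero---is the same inductive idea the paper uses. Where you genuinely diverge is in how the sum of guarantees is converted into the bound $\frac{(t-1)^2}{2t}$. The paper splits $N^*$ into a high-guarantee part $N_{t-1}$ and a remainder $\widehat{N}$, bounds the average of $\widehat{N}\cup\{i_1\}$ via a separate combinatorial statement (\Cref{claim:average}) about nonincreasing sequences with an arithmetic middle part---proved by a case analysis that perturbs the sequence until it is either fully arithmetic or hits zero---and then finishes with a convex-combination argument. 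You instead treat all $n^*$ agents uniformly, lower-bound $\sum_{p=1}^{n^*}\phi(p)$ by $\int_0^{n^*}\phi(x)\diff{x}$ using monotonicity of $\phi(x)=\max\bigl(0,\min(t,x/r)-1\bigr)$, evaluate the piecewise-linear integral to get $\frac{r(t-1)^2}{2}+(t-1)(n^*-tr)$, and check algebraically that this is at least $n^*\cdot\frac{(t-1)^2}{2t}$; after dividing by $t-1$ the inequality collapses exactly to $n^*\ge tr$, which is the cohesiveness hypothesis. (I verified this reduction: both sides simplify against the factor $\frac{t+1}{2t}$, so nothing is lost.) Your route avoids the paper's sequence claim and its case analysis entirely, handles the ceiling terms implicitly rather than tracking $\ceiling{tr}-\ceiling{r}$ explicitly, and makes the source of the slack transparent; the strictness of the final inequality is correctly preserved through the single strict term at $p=n^*$, which also covers the boundary case $t=1$. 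The paper's discrete claim, on the other hand, is reused nowhere else but is self-contained and stays within elementary manipulations of finite sequences, which some readers may prefer to a Riemann-sum comparison.
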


To prove this \namecref{thm:EJR-1-LB-average-satisfaction}, we will use the following claim, which provides a lower bound for the average of a nonincreasing and nonnegative sequence with a particular structure.

\begin{claim}\label{claim:average}
Let~$r > 0$ and~$t \geq 1$ be real numbers.
Consider any nonincreasing and nonnegative sequence
\[
t-1, a_1, a_2, \dots, a_{\ceiling{tr} - \ceiling{r}}, b_1, b_2, \dots, b_{\ceiling{r}-1},
\]
in which $a_1, a_2, \dots, a_{\ceiling{tr} - \ceiling{r}}$ forms an arithmetic subsequence with common difference $-1/r$.
If $t-1 - a_1 \leq 1/r$, then the average of the entire sequence is at least $\frac{t-2+1/t}{2}$.
\end{claim}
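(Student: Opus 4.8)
The plan is to lower-bound the \emph{sum} of the sequence and divide by the number of terms. Since the sequence has $1 + (\ceiling{tr} - \ceiling{r}) + (\ceiling{r} - 1) = \ceiling{tr}$ entries, and $\frac{t - 2 + 1/t}{2} = \frac{(t-1)^2}{2t}$, it suffices to prove that the sum is at least $\frac{(t-1)^2}{2t} \cdot \ceiling{tr}$.

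First I would extract term-by-term lower bounds. Index the sequence as $c_0 = t-1$ and $c_k = a_k$ for $1 \le k \le p$ (writing $p \coloneqq \ceiling{tr} - \ceiling{r}$), followed by the $b_j$. The hypothesis $t - 1 - a_1 \le 1/r$ together with the common difference $-1/r$ gives $c_k = a_1 - (k-1)/r \ge (t-1) - k/r$ for $1 \le k \le p$, while $c_0 = t-1$ satisfies this trivially. Combined with nonnegativity, $c_k \ge \max\{0,\, t - 1 - k/r\}$ for $0 \le k \le p$, and each $b_j \ge 0$.

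Next I would show that the truncation at $0$ happens \emph{inside} the arithmetic block, so no positive term is lost where that block ends. Let $m \coloneqq \floor{(t-1)r}$ be the largest index with $t - 1 - k/r \ge 0$. Since $\ceiling{tr} \ge tr$ and $\ceiling{r} < r + 1$, we get $p > (t-1)r - 1 \ge m - 1$, and as $p$ is an integer, $p \ge m$. Discarding the nonnegative $b_j$ and the nonpositive tail of the maximum, the sum is therefore at least $\sum_{k=0}^{m} \left( t - 1 - k/r \right) = (m+1)(t-1) - \frac{m(m+1)}{2r}$.

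The remaining step, which I expect to be the main obstacle, is the discretized inequality $(m+1)(t-1) - \frac{m(m+1)}{2r} \ge \frac{(t-1)^2}{2t} \cdot \ceiling{tr}$. The clean route is to write $m = (t-1)r - \delta$ with $\delta \in [0,1)$ and $\ceiling{tr} = tr + \psi$ with $\psi \in [0,1)$, and to replace $t-1$ by $(t-1)r/r$ throughout. After simplification the difference between the two sides collapses to $\frac{t-1}{2} + \frac{\delta(1-\delta)}{2r} - \frac{(t-1)^2 \psi}{2t}$. Since $\delta(1-\delta) \ge 0$, this is at least $\frac{t-1}{2}\bigl(1 - \frac{(t-1)\psi}{t}\bigr)$, which is nonnegative because $\psi < 1$ and $\frac{t-1}{t} < 1$ force $\frac{(t-1)\psi}{t} < 1$. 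Conceptually, the continuous computation already leaves a slack of exactly $\frac{t-1}{2}$, coming from the ``$+1$'' in the factor $m+1$, and this slack dominates the sole negative discretization term $-\frac{(t-1)^2\psi}{2t}$; all of the substance lies in isolating that slack, and the surrounding algebra is routine bookkeeping. Note also that only a weak inequality is needed here, the strict bound in the target theorem arising instead from the strict EJR-1 guarantee $u_j > t-1$.
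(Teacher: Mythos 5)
Your proof is correct, and it takes a genuinely different route from the paper's. The paper proceeds in two stages: first it shows that the average of the leading block $t-1, a_1, \dots, a_{\ceiling{tr}-\ceiling{r}}$ alone is at least $\frac{t-1}{2}$, by continuously decreasing the $a_i$'s until either the gap below $t-1$ becomes exactly $1/r$ (the block is then a single arithmetic sequence, with average $\frac{(t-1)+a_{\ceiling{tr}-\ceiling{r}}}{2}\ge\frac{t-1}{2}$) or the last term hits $0$ (the block then dominates, term by term, the arithmetic sequence running from $t-1$ down to $0$, whose common difference is shown to be at most $1/r$); second, it sets all $b_j$ to zero and dilutes this block average by the ratio $\frac{1+\ceiling{tr}-\ceiling{r}}{\ceiling{tr}}\ge 1-\frac{r}{\ceiling{tr}}\ge\frac{t-1}{t}$, arriving at $\frac{(t-1)^2}{2t}$. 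You instead bound every term at once by the truncated linear function $\max\{0,\,t-1-k/r\}$, check that the truncation index $m=\floor{(t-1)r}$ falls inside the arithmetic block (so $\ceiling{tr}-\ceiling{r}\ge m$), and reduce everything to the single discrete inequality $(m+1)(t-1)-\frac{m(m+1)}{2r}\ge\frac{(t-1)^2}{2t}\ceiling{tr}$, settled by the $\delta$--$\psi$ substitution; I verified the algebra, and the difference of the two sides is indeed $\frac{t-1}{2}+\frac{\delta(1-\delta)}{2r}-\frac{(t-1)^2\psi}{2t}\ge\frac{t-1}{2}\bigl(1-\frac{(t-1)\psi}{t}\bigr)\ge 0$. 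Your route buys a case-free, purely computational argument that makes the discretization slack explicit; the paper's route buys modularity---the intermediate fact that the block average is at least $\frac{t-1}{2}$ is a clean standalone statement echoing the indivisible-goods result of \citet{SanchezFernandezElLa17} that EJR implies proportionality degree $\frac{t-1}{2}$---at the cost of the two-case deformation argument. Your closing observation is also consistent with the paper: the claim itself needs only a weak inequality, and the strictness in \Cref{thm:EJR-1-LB-average-satisfaction} comes from the strict EJR-1 guarantee rather than from this claim.
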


\begin{proof}
We start by showing that the average of the subsequence $t-1, a_1, a_2, \dots, a_{\ceiling{tr} - \ceiling{r}}$ is at least $\frac{t-1}{2}$.
The bound holds trivially if~$\ceiling{tr} - \ceiling{r} = 0$; we therefore assume that $\ceiling{tr} - \ceiling{r} \geq 1$.
Let us continually decrease each of the numbers $a_1, a_2, \dots, a_{\ceiling{tr} - \ceiling{r}}$ by the same amount until (at least) one of the following two cases occurs:
\begin{itemize}
\item \underline{Case~1}:
The difference between~$t-1$ and~$a_1$ becomes~$1/r$, i.e., $a_1 = t-1 - 1/r$.
Note that $a_{\ceiling{tr}-\ceiling{r}}$ is still nonnegative in this case.

\item \underline{Case~2}:
$a_{\ceiling{tr} - \ceiling{r}}$ becomes~$0$.
Note that the difference between~$t-1$ and~$a_1$ is still at most~$1/r$.
\end{itemize}
Clearly, the average of the subsequence in question $t-1, a_1, a_2, \dots, a_{\ceiling{tr} - \ceiling{r}}$ does not increase during this process.
Thus, it suffices to show that in each of the above two cases, this average is at least~$\frac{t-1}{2}$ after the process.

\begin{itemize}
\item In Case~1, the subsequence $t-1, a_1, a_2, \dots, a_{\ceiling{tr} - \ceiling{r}}$ is now an arithmetic sequence, so its average is $\frac{(t-1) + a_{\ceiling{tr} - \ceiling{r}}}{2} \geq \frac{t-1}{2}$, where the inequality follows from the fact that $a_{\ceiling{tr} - \ceiling{r}} \geq 0$ in this case.

\item In Case~2, consider the arithmetic sequence~$(d_k)_{k = 0}^{\ceiling{tr} - \ceiling{r}}$ with $d_0 = t-1$ and $d_{\ceiling{tr} - \ceiling{r}} = 0$.
Let~$-\beta$ be its common difference, so~$\beta$ is nonnegative.
On the one hand, we have
\[
t - 1 = d_0 = d_{\ceiling{tr} - \ceiling{r}} + \beta \cdot (\ceiling{tr} - \ceiling{r}) = \beta \cdot (\ceiling{tr} - \ceiling{r}).
\]
On the other hand, we have
\begin{align*}
t-1 &= (t-1 - a_1) + a_1 \\
&= (t-1 - a_1) + (\ceiling{tr} - \ceiling{r} - 1) \cdot 1/r
\leq (\ceiling{tr} - \ceiling{r}) \cdot 1/r,
\end{align*}
where the inequality holds because $t-1 - a_1 \leq 1/r$ in Case~2.
As a result, we have $\beta \cdot (\ceiling{tr} - \ceiling{r}) \leq (\ceiling{tr} - \ceiling{r}) \cdot 1/r$, that is, $\beta \leq 1/r$.
Hence, each term of the sequence $t-1, a_1, a_2, \dots, a_{\ceiling{tr} - \ceiling{r}}$ is at least as large as the corresponding term of the sequence~$(d_k)_{k = 0}^{\ceiling{tr} - \ceiling{r}}$.
We conclude that the average of $t-1, a_1, a_2, \dots, a_{\ceiling{tr} - \ceiling{r}}$ is at least that of $(d_k)_{k = 0}^{\ceiling{tr} - \ceiling{r}}$, which is $\frac{d_0 + d_{\ceiling{tr} - \ceiling{r}}}{2} = \frac{t-1}{2}$.
\end{itemize}
In both cases, we have proven that the average of the sequence $t-1, a_1, a_2, \dots, a_{\ceiling{tr} - \ceiling{r}}$ is at least $\frac{t-1}{2}$.

Next, we show that the average of the entire sequence
\[t-1, a_1, a_2, \dots, a_{\ceiling{tr} - \ceiling{r}}, b_1, b_2, \dots, b_{\ceiling{r}-1}\]
is at least~$\frac{t-2+1/t}{2}$.
This can be done by taking all $b_i$'s to be $0$ and applying the lower bound on the average of $t-1, a_1, a_2, \dots, a_{\ceiling{tr} - \ceiling{r}}$ that we previously computed:
\begin{align*}
\frac{1}{1 + (\ceiling{tr}-\ceiling{r}) + (\ceiling{r}-1)} \cdot \frac{t-1}{2} \cdot (1 + (\ceiling{tr} - \ceiling{r}))
&= \frac{1 + \ceiling{tr} - \ceiling{r}}{\ceiling{tr}} \cdot \frac{t-1}{2} \\
&= \left( 1 + \frac{1 - \ceiling{r}}{\ceiling{tr}} \right) \cdot \frac{t-1}{2} \\
&\geq \left( 1 + \frac{1 - (r+1)}{\ceiling{tr}} \right) \cdot \frac{t-1}{2} \\
&= \left( 1 - \frac{r}{\ceiling{tr}} \right) \cdot \frac{t-1}{2} \\
&\geq \left( 1 - \frac{r}{tr} \right) \cdot \frac{t-1}{2} \\
&= \frac{(t-1)^2}{2t} \\
&= \frac{t-2+1/t}{2}.
\end{align*}
The claim is thus proven.
\end{proof}

We are now ready to establish \Cref{thm:EJR-1-LB-average-satisfaction}.

\begin{proof}[Proof of \Cref{thm:EJR-1-LB-average-satisfaction}]
For notational convenience, let~$r \coloneqq n/\alpha$.
We have $|N^*| \geq t r$, where the inequality holds because $N^*$ is $t$-cohesive.
\EJROne implies that some agent~$i_1 \in N^*$ gets utility greater than~$t-1$ from the allocation~$A$.
If $|N^* \setminus \{i_1\}| \geq t r$, then since there still exists a subset of the resource of size at least $t$ commonly approved by the agents in $N^* \setminus \{i_1\}$, \EJROne implies that another agent~$i_2 \neq i_1$ gets utility greater than~$t-1$ from~$A$.
Applying this argument repeatedly, as long as there are at least~$t \cdot r$ agents left, EJR-1 implies that one of them gets utility greater than~$t-1$.
Let $N_{t-1}$ consist of the agents with guaranteed utility greater than $t-1$ from this argument.
Let $\widehat{N} \coloneqq N^* \setminus N_{t-1}$ and $\widehat{n} \coloneqq |\widehat{N}|$, and note that $\widehat{n} = \ceiling{tr} - 1$.

Now, let us consider the agents in~$\widehat{N}$.
Since $|\widehat{N}| = \widehat{n} \ge \frac{\widehat{n}}{r} \cdot r$ and $s\left( \bigcap_{i \in \widehat{N}} R_i \right) \geq t > \frac{\widehat{n}}{r}$, the agents in~$\widehat{N}$ form an $\frac{\widehat{n}}{r}$-cohesive group.
By \EJROne, some agent in~$\widehat{N}$ gets utility greater than~$\frac{\widehat{n}}{r} - 1$.
Continuing inductively, the guaranteed utility drops arithmetically with a common difference of~$1/r$.
Note also that an agent's actual utility is always nonnegative.

To calculate the average satisfaction of the $t$-cohesive group~$N^*$, we first focus on the agents in $\widehat{N}$ along with agent~$i_1$ discussed earlier in the proof.
The average satisfaction of these agents is
\begin{align*}
\frac{1}{1 + \widehat{n}} \cdot \left( u_{i_1}(A) + \sum_{i \in \widehat{N}} u_i(A) \right)
&> \frac{1}{1 + \widehat{n}} \cdot \left( t-1 + \sum_{i \in \widehat{N}} u_i(A) \right) \\
&\ge \frac{1}{1 + \widehat{n}} \cdot \left( t-1 + \sum_{j = \ceiling{r}}^{\widehat{n}} \left( \frac{j}{r} - 1 \right) + \sum_{j = 1}^{\ceiling{r} - 1} 0 \right) \\
&\geq \frac{t-2+1/t}{2}.
\end{align*}
The last inequality is due to \Cref{claim:average}: We have a nonincreasing and nonnegative sequence with $t-1$ as the first element, followed by a decreasing arithmetic sequence with $\ceiling{tr} - \ceiling{r}$ terms whose common difference is~$-1/r$ and whose first term, $\frac{\ceiling{tr}-1}{r} - 1$, is at most~$1/r$ away from~$t-1$, and then followed by $\ceiling{r} - 1$ zeros.
The average satisfaction of~$N_{t-1} \setminus \{i_1\}$, if this set is not empty, is greater than~$t-1$.
As the average satisfaction of~$N^*$ is a convex combination of the corresponding quantities for $N_{t-1} \setminus \{i_1\}$ and $\widehat{N} \cup \{i_1\}$, it is greater than~$\frac{t-2+1/t}{2}$, as desired.
\end{proof}

We now derive a matching upper bound.

\begin{theorem}\label{thm:EJR-1-UB-average-satisfaction}
For any real numbers $t \geq 1$ and~$\varepsilon > 0$, there exists an instance, a $t$-cohesive group~$N^*$, and an \mbox{EJR-1} allocation~$A$ such that the average satisfaction of $N^*$ with respect to $A$ is at most $\frac{t-2+1/t}{2} + \varepsilon$.
\end{theorem}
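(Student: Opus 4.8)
The plan is to exhibit, for given $t \ge 1$ and $\varepsilon > 0$, a single cake instance together with an explicit allocation whose agent utilities track, as closely as possible, the per-agent lower bounds that drive the proof of \Cref{thm:EJR-1-LB-average-satisfaction}. Concretely, I would fix a large parameter $r$ (to be sent to infinity) and a tiny $\delta > 0$, set the number of agents to $n = \ceiling{tr}$ and $\alpha = n/r$ (so that $n/\alpha = r$ as required), and index the agents $1, \dots, n$. The cake would consist of two disjoint parts: a ``staircase'' interval $[0, \alpha']$ with $\alpha' \coloneqq \max_k g_k$, where $g_k \coloneqq \max\!\left(0, \tfrac{k}{r} - 1\right) + \delta$, and a separate interval $Q$ of length exactly $t$ that every agent approves. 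Agent $k$ approves $R_k = [0, g_k] \cup Q$; the nested prefixes $[0,g_k]$ are what produce differing utilities from a single allocated interval, while $Q$ (left unallocated) serves only to make the grand coalition $t$-cohesive.

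The allocation I would analyze is $A = [0, \alpha']$, which has size $\alpha' = \alpha - 1 + \delta \le \alpha$ and gives each agent utility $u_k = \ell(A \cap R_k) = g_k$. The group $N^* = N$ is $t$-cohesive, since $|N^*| = n \ge tr$ and the common approval $[0,\delta] \cup Q$ has size $\delta + t \ge t$. The crux is verifying that $A$ satisfies \EJROne. Given any $t'$-cohesive group $S$, its common approval is $[0, \min_{i \in S} g_i] \cup Q$ of size $\min_{i \in S} g_i + t$, so $t' \le t + \min_{i\in S} g_i$ and $|S| \ge t'r$. I would split into two regimes. For $t' \le t$, the subgroup minimizing the best utility $\max_{i\in S} g_i$ is the bottom-indexed one $\{1, \dots, \ceiling{t'r}\}$, whose best utility is $g_{\ceiling{t'r}} \ge \tfrac{\ceiling{t'r}}{r} - 1 + \delta \ge t' - 1 + \delta > t' - 1$ when the bracket is nonnegative, and is $\delta > t'-1$ otherwise (which forces $t' < 1$); either way the strict \EJROne inequality holds. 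For $t' > t$ I would argue that no such group exists: $t'$-cohesiveness would require at least $t'r > tr$ agents each with $g_i \ge t' - t > 0$, but a short count shows the number of agents with $g_i \ge t'-t$ is at most roughly $r(2t - t' - 1) < t'r$, so this regime is vacuous.

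Finally, I would compute the average satisfaction of $N^*$, namely $\tfrac{1}{n}\sum_{k=1}^{n} g_k = \delta + \tfrac{1}{\ceiling{tr}} \sum_{k=1}^{\ceiling{tr}} \max\!\left(0, \tfrac{k}{r}-1\right)$, and recognize the sum as a Riemann sum converging to $\tfrac{1}{t}\int_1^t (s-1)\,\diff s = \tfrac{(t-1)^2}{2t} = \tfrac{t-2+1/t}{2}$ as $r \to \infty$; choosing $r$ large and $\delta$ small then makes the average at most $\tfrac{t-2+1/t}{2} + \varepsilon$, as desired. I expect the main obstacle to be the \EJROne verification at the boundaries --- precisely the points where $tr$ or $t'r$ is an integer and the naive staircase $\max(0, \tfrac{k}{r}-1)$ would tie with the threshold $t'-1$ (for instance, the all-zero bottom coalition of size $r$ at $t' = 1$). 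The perturbation by $\delta$ together with the freedom to take $r$ large are exactly what convert these ties into the strict inequalities that \EJROne demands, while shifting the average by less than $\varepsilon$.
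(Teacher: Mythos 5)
Your construction is essentially the paper's own, re-parametrized: the paper also builds a cake instance from a commonly approved but \emph{unallocated} block of size $t$ (your $Q$, their $[0,t]$) glued to an allocated ``staircase'' (your $[0,\alpha']$, their $[t,2t]$) whose utilities increase with the agent index, verifies \EJROne by noting that any group of at least $\ceiling{t'\cdot n/\alpha}$ agents must contain an agent of index at least $\ceiling{t'\cdot n/\alpha}$, and lets the resulting arithmetic/Riemann sum converge to $\frac{(t-1)^2}{2t}$ as the number of agents grows. So the approach is the same; the problem is one concrete step in your \EJROne verification.

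Your claim that the regime $t' > t$ is vacuous is false under your parametrization. The count you sketch only tallies agents whose staircase part $\max(0,k/r-1)$ clears the threshold $t'-t$; it ignores that \emph{every} agent has $g_k \ge \delta$, so for $t' \in (t, t+\delta]$ the number of qualifying agents is $n$, not roughly $r(2t-t'-1)$. Concretely, take $t=1.55$, $r=10$, $\delta=0.01$: then $n=\ceiling{tr}=16$, $\alpha=1.6$, and the full group $N$ is $t'$-cohesive for $t'=1.56>t$, since $|N|=16\ge t'r=15.6$ and the common approval $[0,\delta]\cup Q$ has size exactly $1.56$. The root cause is your choice $n=\ceiling{tr}$, which exceeds $tr$ whenever $tr\notin\mathbb{Z}$, so cardinality only forces $t'\le \ceiling{tr}/r$ rather than $t'\le t$. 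The repair is easy, and your own mechanism supplies it: the monotonicity argument you give for $t'\le t$ works verbatim for every $t'\le \ceiling{tr}/r$ (agent $\ceiling{t'r}\le n$ exists and has utility $\max(0,\ceiling{t'r}/r-1)+\delta > t'-1$), while for $t'>\ceiling{tr}/r$ cohesive groups really are excluded by cardinality; so the correct case split is at $n/r=\ceiling{tr}/r$, not at $t$, and the counting argument should be discarded. Alternatively, do what the paper does: fix $\alpha=t$ exactly and let $n$ be an arbitrary large integer, so that $|N|=n=t\cdot n/\alpha$ and every $t'$-cohesive group automatically has $t'\le t$, which eliminates the troublesome regime at the outset.
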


\begin{proof}
Consider a cake instance with a sufficiently large number of agents~$n$ (to be specified later).
Let~$\alpha = t$.
Thus, we have $|N| = t \cdot n/t \geq t \cdot n/\alpha$.
The cake is given by the interval $[0, 2t]$, and the agents' preferences are as follows.
\begin{itemize}
\item Each agent~$i \in \left\{ 1, 2, \dots, \ceiling*{n/\alpha}-1 \right\}$ approves the interval~$[0, t]$.
\item Each agent~$i \in \left\{ \ceiling*{n/\alpha}, \ceiling*{n/\alpha}+1, \dots, n \right\}$ approves the interval~$\left[ 0, t + \frac{i - n/\alpha}{n/\alpha} + \delta \right]$, where~$\delta \in (0,1)$ is sufficiently small (to be specified later).
\end{itemize}
Since all $n$ agents approve the interval~$[0, t]$, they form a $t$-cohesive group~$N$.

We claim that allocation~$A = [t, 2t]$, which has size $t = \alpha$, satisfies \EJROne.
Consider a $t'$-cohesive group for some value of $t' > 0$.
If~$t' \in (0, 1)$, the requirement of EJR-1 is trivially fulfilled.
Since $n = t\cdot n/\alpha$, we may therefore assume that $t'\in [1, t]$.
Consider agent~$\ceiling*{t' \cdot n/\alpha} \in N$, who approves the interval~$\left[ 0, t + \frac{\ceiling*{t' \cdot n/\alpha} - n/\alpha}{n/\alpha} + \delta \right]$; this agent gets utility at least $\frac{\ceiling*{t' \cdot n/\alpha} - n/\alpha}{n/\alpha} + \delta \geq \frac{t' \cdot n/\alpha - n/\alpha}{n/\alpha} + \delta > t' - 1$ from the allocation~$A$.
Since every $t'$-cohesive group contains at least~$\ceiling*{t' \cdot n/\alpha}$ agents, it must contain an agent who gets utility greater than~$t' - 1$ from~$A$.
This means that $A$ satisfies \EJROne, as claimed.

The average satisfaction of the $t$-cohesive group~$N$ with respect to the \EJROne allocation~$A$ is
\begin{align*}
\frac{1}{|N|} \cdot \sum_{i \in N} u_i(A)
&= \frac{1}{n} \cdot \sum_{i = \ceiling{n/\alpha}}^{n} \left( \frac{i - n/\alpha}{n/\alpha} + \delta \right) \\
&= \frac{\alpha}{n^2} \cdot \sum_{i = \ceiling{n/\alpha}}^{n} (i - n/\alpha) + \frac{1}{n} \cdot \sum_{i = \ceiling{n/\alpha}}^{n} \delta \\
&= \frac{\alpha}{n^2} \cdot \frac{(\ceiling{n/\alpha} - n/\alpha) + (n - n/\alpha)}{2} \cdot (n - \ceiling{n/\alpha} + 1) \\
&\quad+ \frac{\delta}{n} \cdot (n - \ceiling{n/\alpha} + 1) \\
&\leq \frac{\alpha}{n^2} \cdot \frac{(n - n/\alpha + 1)^2}{2} + \frac{\delta}{n} \cdot (n - n/\alpha + 1) \\
&= \frac{\alpha - 2 + 1/\alpha}{2} + \frac{\delta \cdot (\alpha - 1)}{\alpha} + \frac{\alpha}{2n^2} + \frac{\alpha - 1 + \delta}{n} \\
&= \frac{t - 2 + 1/t}{2} + \frac{\delta \cdot (t-1)}{t} + \frac{t}{2n^2} + \frac{t-1+\delta}{n},
\end{align*}
where the inequality holds because $\ceiling{n/\alpha} - n/\alpha \le 1$ and $-\ceiling{n/\alpha} \le -n/\alpha$.
Finally, we choose a sufficiently large~$n$ and a sufficiently small~$\delta$ so that $\frac{\delta \cdot (t-1)}{t} + \frac{t}{2n^2} + \frac{t-1+\delta}{n} \leq \varepsilon$; this ensures that the average satisfaction of $N$ is at most $\frac{t -2+1/t}{2} + \varepsilon$, as desired.
\end{proof}

\subsection{Proportionality Degree of Specific Rules}

In this subsection, we investigate the proportionality degree of the rules that we studied in \Cref{sec:EJR-notions}.

We begin with GreedyEJR-M.
Since GreedyEJR-M satisfies \EJRM, \Cref{thm:EJR-M-LB-average-satisfaction} immediately yields a lower bound.
We derive a matching upper bound, which implies that the proportionality degree of GreedyEJR-M is $\floor{t} \cdot \left( 1 - \frac{\floor{t} + 1}{2 t} \right)$.

\begin{theorem}\label{thm:prop-degree-GreedyEJR-M}
For any real numbers $t \geq 1$ and $\varepsilon > 0$, there exists an instance, a $t$-cohesive group $N^*$, and an allocation $A$ output by GreedyEJR-M such that the average satisfaction of $N^*$ with respect to $A$ is at most $\floor{t} \cdot \left( 1 - \frac{\floor{t} + 1}{2t} \right) + \varepsilon$.
\end{theorem}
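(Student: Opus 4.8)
The plan is to exhibit, for each $t$ and $\varepsilon$, an explicit indivisible-goods instance on which one execution of GreedyEJR-M produces exactly the ``staircase'' of utilities underlying the lower bound of \Cref{thm:EJR-M-LB-average-satisfaction}. Fix a large integer $r$ and set $n/\alpha = r$ (take $n = \alpha r$ with $\alpha$ large enough to hold all the goods and agents below). The group $N^*$ consists of $\approx tr$ agents who all approve a common set $M$ of $\ceiling{t}$ indivisible goods; since $s(M) = \ceiling{t} \ge t$ and $|N^*| = tr = t\cdot n/\alpha$, the group $N^*$ is exactly $t$-cohesive. I partition $N^*$ into blocks $Q_{\floor{t}}, Q_{\floor{t}-1}, \dots, Q_1, Q_0$ with $|Q_{\floor{t}}| \approx (t-\floor{t})r$ and $|Q_j| \approx r$ for $j < \floor{t}$. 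For each $j \in \{1,\dots,\floor{t}\}$ I add a separate pool of ``filler'' agents who approve only a private set $D_j$ of exactly $j$ fresh goods, disjoint from $M$ and from all other $D_{j'}$; together with $Q_j$ these fillers form a group $W_j$ with $|W_j| = jr$, so $W_j$ is exactly $j$-cohesive via $D_j$. Any leftover agents (padding $n$ to $\alpha r$) approve nothing.

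The heart of the argument is to steer GreedyEJR-M, via its freedom in choosing among tied pairs in Step~2, to process $W_{\floor{t}}, W_{\floor{t}-1}, \dots, W_1$ in this order and never allocate any part of $M$. The crucial point---and the reason $M$ is made of indivisible goods---is that the surviving part of $N^*$ has a common resource consisting of goods, so any pair using $M$ with $s(R^*)=t^*$ forces $t^*$ to be an integer; hence the largest usable value for the surviving $N^*$ is exactly $\floor{|N'\cap N^*|/r}$. Thus at the stage with parameter $t^* = j$ the surviving-$N^*$-via-$M$ group and $W_j$ are tied at cohesiveness $j$, and I let Step~2 pick $W_j$: this adds $D_j$ and removes $Q_j$ with its fillers, giving each agent of $Q_j$ utility exactly $j$ (they approve only $M$, never allocated, and their own $D_j$). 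I would verify by induction on decreasing $j$ that after the $W_{j+1}$-step the surviving count lies just below $(j+1)r$, so the next usable parameter is exactly $j$; that all usable parameters are integers and the disjointness of $M, D_1, \dots, D_{\floor{t}}$ prevents fillers from combining with surviving $N^*$ agents into any group of cohesiveness strictly larger than the current stage; and that $M$ therefore remains unallocated. The leftover block $Q_0$ (and the padding agents) is removed in the final $t^* = 0$ step with utility~$0$.

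It then remains to compute the average satisfaction of $N^*$. Block $Q_j$ has $\approx r$ agents each of utility exactly $j$ (with $Q_{\floor{t}}$ of size $\approx (t-\floor{t})r$ and utility $\floor{t}$), so the total utility is $\sum_j |Q_j|\cdot j$, and dividing by $|N^*|\approx tr$ yields, after the same arithmetic as in the lower-bound proof, $\floor{t}\bigl(1-\tfrac{\floor{t}+1}{2t}\bigr)$ up to an additive error of $O(1/r)$ coming from the $\pm1$ roundings in the block sizes. Taking $r$ large enough makes this error at most $\varepsilon$, which gives the claim; it also establishes \Cref{thm:EJR-M-UB-average-satisfaction}, since every GreedyEJR-M output is an EJR-M allocation. (As a sanity check, for integer $t$ one has $t-\floor{t}=0$, so $Q_{\floor{t}}$ is a single agent receiving $t$, matching the unavoidable high-utility agent forced by \Cref{prop:EJR-M-floor}, while the bound reduces to $\tfrac{t-1}{2}$.)

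I expect the main obstacle to be the middle step: arguing rigorously that greedy can be prevented from allocating any part of $M$. If $M$ instead were, or contained, a piece of cake, then a surviving sub-group of $N^*$ would be $c$-cohesive for every real $c$ up to $t$, and at the relevant stages greedy could---and would be forced to---grab a fractional slice of $M$, handing the low-utility blocks positive utility and driving the average up toward $t/2$ rather than the desired $\floor{t}(1-\tfrac{\floor{t}+1}{2t})$, which is strictly smaller. Using a purely indivisible $M$ is exactly what discretizes the admissible values of $t^*$ to the integers and lets the tie-breaking produce the ``floored'' utilities; certifying that no unintended cohesive group (such as a mixture of fillers from distinct $W_j$ together with surviving $N^*$ agents) ever attains a larger $t^*$ is the delicate part, and it relies on the disjointness of $M, D_1, \dots, D_{\floor{t}}$ together with careful bookkeeping of the surviving agent counts across stages.
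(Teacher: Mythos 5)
Your proposal is correct and follows essentially the same route as the paper's proof: an indivisible-goods instance in which the target group commonly approves $\ceiling{t}$ goods that are never allocated, blocks of roughly $n/\alpha$ agents paired with disjoint ``filler'' agent pools approving private good sets of size exactly $j$ (forcing exactly $j$-cohesive groups), tie-breaking used to steer GreedyEJR-M through these groups in decreasing order, and the same averaging computation; your key observations---that indivisibility restricts the usable $t^*$ to integers, that disjointness of the private good sets blocks unintended cohesive groups, and that the surviving count must stay just below $(j+1)\cdot n/\alpha$ at each stage---are precisely the invariants the paper verifies (the paper fixes $\alpha = \floor{t}(\floor{t}+1)/2 + 1$ and splits the cases $t \ge 2$ and $1 \le t < 2$ instead of using padding agents, but these are cosmetic differences).
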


We first provide an intuition behind the proof of \Cref{thm:prop-degree-GreedyEJR-M}.
We construct an indivisible-goods instance, make $\alpha$ an integer, and choose $n$ to be a multiple of~$\alpha$.
Our goal is to construct a target $t$-cohesive group of agents~$N^*$ with as small utilities as possible.
Since GreedyEJR-M outputs an \EJRM allocation, the largest number of agents in $N^*$ that receive utility $0$---denote the set of these agents by~$N_0$---is $n/\alpha - 1$; otherwise, these agents would form a $1$-cohesive group and cannot all receive utility~$0$.
Similarly, among the agents in~$N^* \setminus N_0$, the largest number of agents that receive utility~$1$---denote the set of these agents by $N_1$---is $n/\alpha$, as we do not want $N_0\cup N_1$ to form a $2$-cohesive group.
Continuing inductively, we want to partition $N^*$ into $N_0\cup N_1\cup\dots\cup N_{\floor{t}}$, with the agents in $N_k$ receiving utility exactly~$k$ for each~$k$.
We add dummy agents and goods in order to make sure that, instead of all agents in $N^*$ being satisfied at once by the GreedyEJR-M execution, the agents in $N_{\floor{t}}$ are first satisfied along with some dummy agents via some dummy goods, then those in $N_{\floor{t}-1}$ are satisfied along with other dummy agents via other dummy goods, and so on.
The dummy agents and goods need to be carefully constructed to make this argument work.

\begin{proof}[Proof of \Cref{thm:prop-degree-GreedyEJR-M}]
Let~$\alpha = \frac{\floor{t} \cdot (\floor{t}+1)}{2} + 1 = \frac{\floor{t}^2 + \floor{t} + 2}{2}$, and note that $\alpha$ is an integer.
We will construct an indivisible-goods instance with a sufficiently large number of agents~$n$ (to be specified later), where $n$ is a multiple of~$\alpha$ that is at least $2\alpha$.
Observe that
\begin{align*}
\ceiling*{t \cdot \frac{n}{\alpha}} &= \ceiling*{\floor{t} \cdot \frac{n}{\alpha} + (t-\floor{t}) \cdot \frac{n}{\alpha}}
= \floor{t} \cdot \frac{n}{\alpha} + \ceiling*{(t-\floor{t}) \cdot \frac{n}{\alpha}}.
\end{align*}
Since $t-\floor{t} < 1$, we can choose $n$ large enough so that  $\ceiling*{(t-\floor{t}) \cdot \frac{n}{\alpha}} \leq \frac{n}{\alpha} - 1$.
When this holds, we have
\begin{align}
t \cdot \frac{n}{\alpha} &\leq \ceiling*{t \cdot \frac{n}{\alpha}}
\leq \floor{t} \cdot \frac{n}{\alpha} + \frac{n}{\alpha} - 1 = (\floor{t}+1) \cdot \frac{n}{\alpha} - 1.
\label{eq:squeeze}
\end{align}
This inequality will help ensure that we can construct a $t$-cohesive group that is not $(\floor{t}+1)$-cohesive.
Note that in an indivisible-goods instance, a $t$-cohesive group must commonly approve at least~$\ceiling{t}$ indivisible goods.

We now describe our instance.
Let~$G = \{g_1, g_2, \dots, g_{\ceiling{t}}\} \cup \left( \bigcup_{k = 1}^{\floor{t}} D^G_k \right)$ be the set of indivisible goods, where for each~$k \in \{1, 2, \dots, \floor{t}\}$, $D^G_k$ contains exactly~$k$ indivisible goods.
In particular, the sets
\[
\{g_1, g_2, \dots, g_{\ceiling{t}}\}, D^G_1, D^G_2, \dots, D^G_{\floor{t}}
\]
are all disjoint.
Our specifications for the agents are slightly different depending on whether $t \geq 2$ or $t\in [1,2)$; we distinguish between the two cases below.

\paragraph{Case~1: $t \geq 2$.}

Recalling that $n/\alpha$ is an integer, we partition the set of all agents~$N$ into the following pairwise disjoint sets:
\begin{align*}
&N_0 = \left\{ 1, 2, \dots, \frac{n}{\alpha}-1 \right\}, \\
&N_1 = \left\{ \frac{n}{\alpha}, \frac{n}{\alpha}+1, \dots, 2 \cdot \frac{n}{\alpha}-1 \right\}, \\
&\qquad\vdots \\
&N_k = \left\{ k \cdot \frac{n}{\alpha}, k \cdot \frac{n}{\alpha}+1, \dots, (k+1) \cdot \frac{n}{\alpha}-1 \right\}, \\
&\qquad\vdots \\
&N_{\floor{t} -1} = \left\{ (\floor{t}-1) \cdot \frac{n}{\alpha}, (\floor{t}-1) \cdot \frac{n}{\alpha}+1, \dots, \floor{t} \cdot \frac{n}{\alpha}-1 \right\}, \\
&N_{\floor{t}} = \left\{ \floor{t} \cdot \frac{n}{\alpha}, \floor{t} \cdot \frac{n}{\alpha}+1, \dots, \ceiling*{t \cdot \frac{n}{\alpha}} \right\}, \\
&D_1, D_2, \dots, D_{\floor{t}-1}, D_{\floor{t}},
\end{align*}
where $N^* \coloneqq \bigcup_{k = 0}^{\floor{t}} N_k = \left\{ 1, 2, \dots, \ceiling*{t \cdot \frac{n}{\alpha}} \right\}$ is our target $t$-cohesive group and $\bigcup_{k = 1}^{\floor{t}} D_k$ consists of ``dummy agents''.
More specifically:
\begin{itemize}
\item $D_1$ contains a single agent;
\item For each~$k \in \{2, \dots, \floor{t}-1\}$, $D_k$ contains $(k-1) \cdot \frac{n}{\alpha}$ agents;
\item $D_{\floor{t}}$ contains $\floor{t} \cdot \frac{n}{\alpha} - \left( \ceiling*{t \cdot \frac{n}{\alpha}} - \floor{t} \cdot \frac{n}{\alpha} + 1 \right) = 2 \floor{t} \cdot \frac{n}{\alpha} - \ceiling*{t \cdot \frac{n}{\alpha}} - 1$ agents.
Observe that $\left| N_{\floor{t}} \cup D_{\floor{t}} \right| = \floor{t} \cdot \frac{n}{\alpha}$.
\end{itemize}
Note that $D_1$ and $D_{\floor{t}}$ are different sets because $t\ge 2$.
We verify that the total number of agents is indeed $n$:
\begin{align*}
|N| &= \left| \left( \bigcup_{k = 0}^{\floor{t}} N_k \right) \cup \left( \bigcup_{k = 1}^{\floor{t}} D_k \right) \right| \\
&= \left| \left( \bigcup_{k = 0}^{\floor{t}-1} N_k \right) \cup D_1 \cup \left( \bigcup_{k = 2}^{\floor{t}-1} D_k \right) \cup \left( N_{\floor{t}} \cup D_{\floor{t}} \right) \right| \\
&= \left( \floor{t} \cdot \frac{n}{\alpha} - 1 \right) + 1 + \sum_{k = 2}^{\floor{t}-1} (k-1) \cdot \frac{n}{\alpha} + \floor{t} \cdot \frac{n}{\alpha} \\
&= 2 \floor{t} \cdot \frac{n}{\alpha} + \frac{n}{\alpha} \cdot \frac{(2 - 1) + ((\floor{t}-1) - 1)}{2}\cdot ((\floor{t}-1) - 2 + 1) \\
&= \frac{n}{\alpha} \cdot \left( 2 \floor{t} + \frac{(\floor{t}-1) \cdot (\floor{t}-2)}{2} \right) \\
&= \frac{n}{\alpha} \cdot \frac{\floor{t}^2 + \floor{t} + 2}{2} \\
&= n.
\end{align*}
The agents' preferences are as follows.
\begin{itemize}
\item The agents in~$N_0$ approve the goods in~$\{g_1, g_2, \dots, g_{\ceiling{t}}\}$.
\item For each~$k \in \{1, 2, \dots, \floor{t}\}$, the agents in~$N_k$ approve the goods in~$\{g_1, g_2, \dots, g_{\ceiling{t}}\} \cup D^G_k$, and the agents in~$D_k$ approve the goods in~$D^G_k$.
\end{itemize}
This completes the description of our instance.
Since $|N^*| = \ceiling*{t \cdot \frac{n}{\alpha}}$, inequality \eqref{eq:squeeze} implies that $N^*$ is not $(\floor{t}+1)$-cohesive.
On the other hand, since the agents in~$N^*$ commonly approve the goods~$g_1, g_2, \dots, g_{\ceiling{t}}$, $N^*$ is $t$-cohesive.
Also, notice that $t^* = \floor{t}$ is the largest integer such that a $t^*$-cohesive group exists in our instance.

We now consider the execution of GreedyEJR-M on the above instance.
Since our instance consists exclusively of indivisible goods, only integers~$t^*$ are relevant for the \mbox{\EJRM} condition in each round of GreedyEJR-M.
We claim that GreedyEJR-M can return the allocation~$A = \bigcup_{k = 1}^{\floor{t}} D^G_k$, which has size~$\frac{\floor{t} \cdot (\floor{t}+1)}{2} \leq \alpha$.
Given the instance, at the beginning, $t^* = \floor{t}$ is the largest number such that the \EJRM condition is satisfied: it is satisfied with $\left( N_{\floor{t}} \cup D_{\floor{t}}, D^G_{\floor{t}} \right)$, because $\left| N_{\floor{t}} \cup D_{\floor{t}} \right| = \floor{t} \cdot \frac{n}{\alpha}$ and the agents in $N_{\floor{t}} \cup D_{\floor{t}}$ commonly approve all goods in~$D^G_{\floor{t}}$, which has size exactly~$\floor{t}$.\footnote{At this stage, the \EJRM condition with $t^* = \floor{t}$ also holds with~$(N^*, \{g_1, g_2, \dots, g_{\floor{t}}\})$.}
\mbox{GreedyEJR-M} removes the agents in~$N_{\floor{t}} \cup D_{\floor{t}}$ and adds the goods in~$D^G_{\floor{t}}$ to the allocation~$A$.
Now, there is no more $\floor{t}$-cohesive group, and $t^* = \floor{t}-1$ becomes the largest number such that the \EJRM condition is satisfied: it is satisfied with~$\left( N_{\floor{t}-1} \cup D_{\floor{t}-1}, D^G_{\floor{t}-1} \right)$.
More generally, for each integer~$k$ from~$\floor{t}-1$ down to~$2$, the \EJRM condition is satisfied for~$t^* = k$ with~$\left( N_k \cup D_k, D^G_k \right)$ because
\begin{align*}
|N_k \cup D_k|
&= \frac{n}{\alpha} + (k-1) \cdot \frac{n}{\alpha}
= k \cdot \frac{n}{\alpha}
\end{align*}
and the agents in~$N_k \cup D_k$ commonly approve the goods in~$D^G_k$, which has size exactly~$k$; moreover,
\begin{align*}
|N_0\cup N_1\cup\dots\cup N_k| &= (k+1)\cdot\frac{n}{\alpha} - 1
< (k+1)\cdot\frac{n}{\alpha}.
\end{align*}
Hence, at each stage, GreedyEJR-M can remove the agents in~$N_k \cup D_k$ and add the goods in~$D^G_k$ to the allocation~$A$.
Finally, when only the agents in~$N_0 \cup N_1 \cup D_1$ remain, there is no $2$-cohesive group; however, the \EJRM condition is satisfied for~$t^* = 1$ with~$\left( N_1 \cup D_1, D^G_1 \right)$ because $|N_1 \cup D_1| = \frac{n}{\alpha} + 1 \geq \frac{n}{\alpha}$ and the agents in~$N_1 \cup D_1$ commonly approve the single good in~$D^G_1$.
Once $N_1 \cup D_1$ is removed by GreedyEJR-M and $D^G_1$ is added to~$A$, the remaining agents in~$N_0$ do not form a $1$-cohesive group, so \mbox{GreedyEJR-M} terminates.
Note that for every~$k \in \{0, 1, 2, \dots, \floor{t}\}$, each agent in~$N_k$ gets utility exactly~$k$ from the allocation~$A$.

The average satisfaction of $N^*$ with respect to the returned allocation $A$ is
\begin{align*}
\frac{\sum_{i \in N^*} u_i(A)}{|N^*|}
&= \frac{1}{|N^*|} \cdot \sum_{k = 0}^{\floor{t}} |N_k| \cdot k \\
&= \sum_{k = 0}^{\floor{t}-1} \frac{|N_k| \cdot k}{|N^*|} + \frac{\left| N_{\floor{t}} \right| \cdot \floor{t}}{|N^*|} \\
&= \sum_{d = 1}^{\floor{t}-1} \sum_{k = d}^{\floor{t}-1} \frac{|N_k|}{|N^*|} + \frac{\left| N_{\floor{t}} \right| \cdot \floor{t}}{|N^*|} \\
&= \sum_{d = 1}^{\floor{t}-1} \frac{\left| \bigcup_{k = d}^{\floor{t}-1} N_k \right|}{|N^*|} + \frac{\left| N_{\floor{t}} \right| \cdot \floor{t}}{|N^*|} \\
&= \sum_{d = 1}^{\floor{t}-1} \frac{\left( \floor{t} \cdot \frac{n}{\alpha} - 1 \right) - \left( d \cdot \frac{n}{\alpha} - 1 \right)}{\ceiling*{t \cdot \frac{n}{\alpha}}}+ \frac{\left( \ceiling*{t \cdot \frac{n}{\alpha}} - \floor{t} \cdot \frac{n}{\alpha} + 1 \right) \cdot \floor{t}}{\ceiling*{t \cdot \frac{n}{\alpha}}}  \\
&= \sum_{d = 1}^{\floor{t}-1} \frac{(\floor{t} - d) \cdot \frac{n}{\alpha}}{\ceiling*{t \cdot \frac{n}{\alpha}}}
+ \frac{\left( \ceiling*{\floor{t} \cdot \frac{n}{\alpha} + (t - \floor{t}) \cdot \frac{n}{\alpha}} - \floor{t} \cdot \frac{n}{\alpha} + 1 \right) \cdot \floor{t}}{\ceiling*{t \cdot \frac{n}{\alpha}}} \\
&\leq \sum_{d = 1}^{\floor{t}-1} \frac{(\floor{t} - d) \cdot \frac{n}{\alpha}}{t \cdot \frac{n}{\alpha}}
+ \frac{\left( \floor{t} \cdot \frac{n}{\alpha} + \ceiling*{(t - \floor{t}) \cdot \frac{n}{\alpha}} - \floor{t} \cdot \frac{n}{\alpha} + 1 \right) \cdot \floor{t}}{t \cdot \frac{n}{\alpha}} \\
&\leq \frac{1}{t} \cdot \sum_{d = 1}^{\floor{t}-1} (\floor{t} - d) + \frac{\left( (t - \floor{t}) \cdot \frac{n}{\alpha} + 2 \right) \cdot \floor{t}}{t \cdot \frac{n}{\alpha}} \\
&= \frac{1}{t} \cdot \frac{\floor{t} \cdot (\floor{t} - 1)}{2} + \frac{\floor{t} \cdot (t - \floor{t})}{t} + \frac{2 \alpha \floor{t}}{n t} \\
&= \floor{t} \cdot \left( \frac{\floor{t} - 1}{2t} + \frac{t - \floor{t}}{t} \right) + \frac{\floor{t} \cdot (\floor{t}^2 + \floor{t} + 2)}{n t} \\
&= \floor{t} \cdot \left( 1 - \frac{\floor{t} + 1}{2t} \right) + \frac{\floor{t} \cdot (\floor{t}^2 + \floor{t} + 2)}{n t}.
\end{align*}
Choosing a sufficiently large~$n$ so that $\frac{\floor{t} \cdot (\floor{t}^2 + \floor{t} + 2)}{n t} \leq \varepsilon$ completes the proof for the case $t\ge 2$.

\paragraph{Case~2: $1 \leq t < 2$.}
In this case, we have $\floor{t} = 1$ and thus $\alpha = \frac{\floor{t}^2 + \floor{t} + 2}{2} = 2$.
Recalling again that $n / \alpha$ is an integer, we partition the $n$ agents into three pairwise disjoint sets as follows:
\begin{align*}
&N_0 = \left\{ 1, \dots, \frac{n}{\alpha} - 1 \right\}, \\
&N_{\floor{t}} = \left\{ \frac{n}{\alpha}, \dots, \ceiling*{t \cdot \frac{n}{\alpha}} \right\}, \\
&D_{\floor{t}},
\end{align*}
where $N^* \coloneqq N_0 \cup N_{\floor{t}} = \left\{ 1, 2, \dots, \ceiling*{t \cdot \frac{n}{\alpha}} \right\}$ is our target $t$-cohesive group and $D_{\floor{t}}$ contains the remaining $n - \ceiling*{t \cdot \frac{n}{\alpha}}$ agents, who are ``dummy agents''.
From \eqref{eq:squeeze}, there is at least one agent in~$D_{\floor{t}}$:
\begin{align*}
n - \ceiling*{t \cdot \frac{n}{\alpha}} &= 2 \cdot \frac{n}{\alpha} - \ceiling*{t \cdot \frac{n}{\alpha}}
= (\floor{t} + 1) \cdot \frac{n}{\alpha} - \ceiling*{t \cdot \frac{n}{\alpha}} \geq 1.
\end{align*}
The set of indivisible goods is $\{g_1, g_2, \dots, g_{\ceiling{t}}\} \cup D_{\floor{t}}^G$, where $D_{\floor{t}}^G$ contains a single good.
The agents' preferences are as follows.
\begin{itemize}
\item The agents in~$N^*$ approve the goods in~$\{g_1, g_2, \dots, g_{\ceiling{t}}\}$.
\item The agents in~$N_{\floor{t}}\cup D_{\floor{t}}$ approve the good in~$D_{\floor{t}}^G$.
\end{itemize}
This completes the description of our instance.
Since the total number of agents is $n = 2\cdot n/\alpha$ but no good is approved by all $n$ agents, there is no $2$-cohesive group.
On the other hand, since the agents in~$N_{\floor{t}} \cup D_{\floor{t}}$ commonly approve the good in~$D_{\floor{t}}^G$ and
\[
\left| N_{\floor{t}} \cup D_{\floor{t}} \right| = n - \left( \frac{n}{\alpha} - 1 \right) = \frac{n}{2} + 1 > \frac{n}{2} = \frac{n}{\alpha},
\]
$N_{\floor{t}} \cup D_{\floor{t}}$ is $1$-cohesive.
Thus, GreedyEJR-M can start by identifying the $1$-cohesive group~$N_{\floor{t}} \cup D_{\floor{t}}$ and adding $D_{\floor{t}}^G$ to the allocation~$A$.
Once the agents in~$N_{\floor{t}} \cup D_{\floor{t}}$ are removed by GreedyEJR-M, the remaining agents in~$N_0$ do not form a $1$-cohesive group, so no more good is added to~$A$.

Note that $N^*$ is a $1$-cohesive group, since it has size $\ceiling*{t \cdot \frac{n}{\alpha}} \ge \frac{n}{\alpha}$ and the agents in it commonly approve $g_1$.
The average satisfaction of~$N^*$ with respect to the returned allocation~$A$ is
\begin{align*}
\frac{\sum_{i \in N^*} u_i(A)}{|N^*|}
&= \frac{|N_0| \cdot 0}{|N^*|} + \frac{\left| N_{\floor{t}} \right| \cdot \floor{t}}{|N^*|} \\
&= \frac{\left( \ceiling*{t \cdot \frac{n}{\alpha}} - \floor{t} \cdot \frac{n}{\alpha} + 1 \right) \cdot \floor{t}}{\ceiling*{t \cdot \frac{n}{\alpha}}} \\
&= \frac{\left( \ceiling*{\floor{t} \cdot \frac{n}{\alpha} + (t - \floor{t}) \cdot \frac{n}{\alpha}} - \floor{t} \cdot \frac{n}{\alpha} + 1 \right) \cdot \floor{t}}{\ceiling*{t \cdot \frac{n}{\alpha}}} \\
&\leq \frac{\left( \floor{t} \cdot \frac{n}{\alpha} + \ceiling*{(t - \floor{t}) \cdot \frac{n}{\alpha}} - \floor{t} \cdot \frac{n}{\alpha} + 1 \right) \cdot \floor{t}}{t \cdot \frac{n}{\alpha}} \\
&\leq \frac{\left( (t - \floor{t}) \cdot \frac{n}{\alpha} + 2 \right) \cdot \floor{t}}{t \cdot \frac{n}{\alpha}} \\
&= \frac{\floor{t} \cdot (t - \floor{t})}{t} + \frac{2 \alpha \floor{t}}{n t} \\
&= \floor{t} \cdot \left( 1 - \frac{\floor{t} + 1}{2t} \right) + \frac{2 \alpha \floor{t}}{n t},
\end{align*}
where the last equation follows from $\floor{t} = 1$.
Choosing a sufficiently large~$n$ so that $\frac{2 \alpha \floor{t}}{n t} \leq \varepsilon$ completes the proof for the case $t \in [1, 2)$, and therefore the proof of \Cref{thm:prop-degree-GreedyEJR-M}.
\end{proof}

In the indivisible-goods setting, \citet[Prop.~A.10]{LacknerSk22} showed that for positive integers $t$, the proportionality degree of MES is between $\frac{t-1}{2}$ and $\frac{t+1}{2}$.
Since Generalized MES satisfies EJR-1, \Cref{thm:EJR-1-LB-average-satisfaction} implies a lower bound of $\frac{t-2+1/t}{2}$ on its proportionality degree.
On the other hand, since any $t'$-cohesive group is also $t$-cohesive for $t\le t'$, Lackner and Skowron's result implies an upper bound of $\frac{\ceiling{t}+1}{2}$ for Generalized MES.

Finally, we prove that Generalized PAV has a significantly higher proportionality degree than the other two rules that we study.
In doing so, we extend a result of \citet{AzizElHu18} from the indivisible-goods setting.

\begin{theorem}
\label{thm:prop-degree-GPAV}
For any real number $t\ge 1$, the average satisfaction of a $t$-cohesive group with respect to a Generalized PAV allocation is greater than $t-1$.
\end{theorem}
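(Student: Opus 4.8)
The plan is to extend the local-optimality (swap) argument underlying the EJR-1 proof of \Cref{thm:GPAV}, turning its qualitative contradiction into a quantitative bound on $\sum_{i\in N^*}u_i(R')$. Let $R'=(C',G')$ be a Generalized PAV allocation and $N^*$ a $t$-cohesive group. As in that proof, I would first pad $R'$ with cake approved by nobody so that $s(R')=\alpha$, and represent the cake cyclically as $[0,c']$ so that every unit interval $[x,x+1]$ is well-defined. Write $R^*=\bigcap_{i\in N^*}R_i$, so $s(R^*)\ge t$.

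Before the main argument I would dispose of two easy cases in which the bound is immediate. If $R^*\subseteq R'$, then every agent in $N^*$ already receives $u_i(R')\ge s(R^*)\ge t>t-1$. If instead $0<s(R^*\setminus R')<1$, then $s(R^*\cap R')>t-1$, so again $u_i(R')\ge s(R^*\cap R')>t-1$ for each $i\in N^*$. This leaves the main case $s(R^*\setminus R')\ge 1$, in which I can pick a commonly approved unit $\sigma$ (a single good, or a piece of cake of size exactly $1$) with $\sigma\subseteq R^*\setminus R'$. Setting $R''\coloneqq R'\cup\{\sigma\}$, we then have $u_i(R'')=u_i(R')+1$ for every $i\in N^*$.

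The heart of the proof is a two-sided estimate of $\Delta\coloneqq H(R'')-H(R')$. For the lower bound, restricting the telescoping sum to $N^*$ gives $\Delta\ge\sum_{i\in N^*}\bigl(H_{u_i(R'')}-H_{u_i(R')}\bigr)=\sum_{i\in N^*}\tfrac{1}{u_i(R'')}$. For the upper bound I would use optimality of $R'$: for every object $\tau$ of $R'$ (a good $g\in G'$ or a unit cake interval $[x,x+1]\subseteq C'$), the allocation $R''\setminus\{\tau\}$ has size $\alpha$, so $H(R')\ge H(R''\setminus\{\tau\})$, which rearranges to $\Delta\le H(R'')-H(R''\setminus\{\tau\})$, the marginal loss of $\tau$ measured in $R''$. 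Integrating this inequality against the objects of $R'$ --- summing over $g\in G'$ and integrating over $x\in C'$, for a total object-measure of exactly $|G'|+c'=\alpha$ --- and bounding the right-hand side exactly as in the computation leading to~\eqref{eq:GPAV} (using \Cref{lem:harmonic-growth} for the cake terms and the identity $\int_{C'}u_i([x,x+1])\diff x=u_i(C')$), I expect to obtain $\Delta\cdot\alpha\le\sum_{i\in N_+}\tfrac{u_i(R')}{u_i(R'')}\le n-\sum_{i\in N^*}\tfrac{1}{u_i(R'')}$, where the last step isolates the $N^*$ contribution via $u_i(R')/u_i(R'')=1-1/u_i(R'')$ for $i\in N^*$, the bound $\le 1$ for the remaining terms, and $|N_+|\le n$.

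Combining the two estimates yields $(\alpha+1)\sum_{i\in N^*}\tfrac{1}{u_i(R'')}\le n$, after which the inequality of arithmetic and harmonic means gives $\sum_{i\in N^*}u_i(R'')\ge\frac{(\alpha+1)|N^*|^2}{n}\ge\frac{(\alpha+1)t}{\alpha}\,|N^*|$, using $|N^*|\ge tn/\alpha$. Subtracting $|N^*|$ (since $u_i(R'')=u_i(R')+1$ on $N^*$) leaves $\sum_{i\in N^*}u_i(R')\ge\bigl(t-1+\tfrac{t}{\alpha}\bigr)|N^*|>(t-1)|N^*|$, the desired strict bound. The point where strictness is gained essentially for free is the extra $+1$ in the factor $\alpha+1$, which appears precisely because $\sigma$ contributes one further object's worth of marginal value beyond the $\alpha$ objects already in $R'$. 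The main obstacle I anticipate is the continuous accounting in the upper bound: one must integrate the single-object swap inequality over all unit cake intervals, verify that the fractional harmonic-number estimate of \Cref{lem:harmonic-growth} assembles into exactly the measure-$\alpha$ average, and carefully track the $N^*$-versus-rest split so that the term $\sum_{i\in N^*}1/u_i(R'')$ is subtracted rather than merely discarded.
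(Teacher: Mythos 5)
Your proposal is essentially correct and takes a genuinely different route through the same machinery: the paper proves the theorem by contradiction (assume the average satisfaction of $N'$ is at most $t-1$, extract a commonly approved unit outside $R'$, and rerun the chain of inequalities of \Cref{thm:GPAV} verbatim, noting that AM--HM only needs the average bound $\sum_{i\in N'}u_i(R')\le |N'|(t-1)$), whereas you run the swap analysis unconditionally and convert it into the quantitative estimate $(\alpha+1)\sum_{i\in N^*}1/u_i(R'')\le n$, which via AM--HM gives the average satisfaction bound $t-1+\frac{t}{\alpha}$ --- in fact slightly stronger than the stated claim, and consistent with the near-tightness example in the appendix. Your preliminary trichotomy on $s(R^*\setminus R')$, the lower bound $\Delta\ge\sum_{i\in N^*}1/u_i(R'')$, the swap inequality $\Delta\le H(R'')-H(R''\setminus\{\tau\})$ for single-unit removals (each $R''\setminus\{\tau\}$ has size exactly $\alpha$, hence is feasible), and the split $\sum_{i\in N_+}u_i(R')/u_i(R'')\le n-\sum_{i\in N^*}1/u_i(R'')$ are all sound.

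There is, however, one genuine gap: your ``object-measure exactly $|G'|+c'=\alpha$'' accounting silently assumes $c'\ge 1$. If $0<c'<1$ --- entirely possible, e.g.\ when $\alpha$ is non-integral and Generalized PAV selects $\floor{\alpha}$ indivisible goods, so that even after padding $c'=\alpha-\floor{\alpha}<1$ --- then no unit interval $[x,x+1]$ fits inside $C'$: the cyclic wrap-around is only meaningful when $c'\ge1$, and ``removing $[x,x+1]$'' would in fact remove only measure $c'$, so $R''\setminus[x,x+1]$ has size $\alpha+1-c'>\alpha$ and is infeasible, meaning optimality of $R'$ gives you nothing for the cake terms (the identity $\int_{C'}u_i([x,x+1])\diff{x}=u_i(C')$ also fails). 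The paper's proof of \Cref{thm:GPAV} dodges exactly this issue by its two-stage structure: it first runs the goods-only swap argument, deduces $|G''|\le\alpha$, and only then concludes $c'=(\alpha+1)-|G''|\ge1$ before introducing the cake integral; your unified one-shot integration loses that safeguard. Fortunately the fix is easy and preserves your direct style: when $c'<1$, keep only the goods in the accounting, obtaining $\Delta\cdot|G'|\le\sum_{i\in N_+}u_i(G')/u_i(R'')\le n-\sum_{i\in N^*}1/u_i(R'')$ and hence $(|G'|+1)\sum_{i\in N^*}1/u_i(R'')\le n$; since $|G'|=\alpha-c'>\alpha-1$, the factor $|G'|+1$ strictly exceeds $\alpha$, and the same AM--HM step still yields average satisfaction strictly greater than $t-1$. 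With this case split added, your argument goes through.
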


\begin{proof}
The proof is almost identical to that of \Cref{thm:GPAV}.
Let $R'$ be a Generalized PAV allocation, and assume for contradiction that a $t$-cohesive group $N'$ has $\frac{1}{\lvert N'\rvert}\cdot \sum_{i \in N'} u_i(R') \le t-1$.
Since the agents in~$N'$ commonly approve a resource of size at least $t$, if the subset of this resource included in $R'$ has size larger than $t-1$, we would have $\frac{1}{\lvert N'\rvert}\cdot \sum_{i \in N'} u_i(R') > t-1$, a contradiction.
Hence, there exists a resource of size~$1$ approved by all agents in $N'$ but not included in $R'$.
The rest of the proof proceeds in the same way as that of \Cref{thm:GPAV}; in particular, the initial chain of inequalities starting with $H(R'')-H(R')$ still holds because $\sum_{i\in N'}u_i(R') \le |N'|\cdot(t-1)$.
\end{proof}

We also demonstrate in \Cref{app:tightness} that the bound $t-1$ is almost tight.

\section{Conclusion}

In this work, we have initiated the study of approval-based voting with mixed divisible and indivisible goods, which allows us to unify both the well-studied setting of multiwinner voting and the recently introduced setting of cake sharing.
We generalized three important rules from multiwinner voting to our setting, determined their relations to our proposed extensions of the EJR axiom, and investigated their proportionality degree.
In particular, we found that each of the three rules is superior in a certain way: \mbox{GreedyEJR-M} satisfies EJR-M, Generalized MES satisfies strong EJR-1 and can be computed in polynomial time, and Generalized PAV has a high proportionality degree.
Since we do not know whether GreedyEJR-M can be implemented in polynomial time, an intriguing open question is whether there exists a polynomial-time algorithm for computing an EJR-M allocation.
Further directions for future work include exploring other axioms such as \emph{proportional justified representation (PJR)} \citep{SanchezFernandezElLa17} and the \emph{core}.
In fact, one can define \mbox{PJR-M} and \mbox{PJR-1} analogously to our \mbox{EJR-M} and \mbox{EJR-1}---since these PJR axioms are weaker than the respective EJR axioms, our positive results on the EJR variants directly carry over to the PJR variants.
In addition, it could be interesting to generalize our results to a \emph{participatory budgeting} setting, where different parts of the resource may have different \emph{costs}.
This may entail, for instance, extending the EJR notions of \cite{PetersPiSk21} for discrete goods to accommodate mixed goods.

\section*{Acknowledgments}

This work was supported by ARC Laureate Project FL200100204 on ``Trustworthy AI'', by the Singapore Ministry of Education under grant number MOE-T2EP20221-0001, by the Deutsche Forschungsgemeinschaft under
grant BR 4744/2-1 and the Graduiertenkolleg ``Facets of
Complexity'' (GRK 2434),  and by an NUS Start-up Grant.
We would like to thank the anonymous reviewers for their valuable comments.

\bibliographystyle{plainnat}
\bibliography{aaai23}

\appendix

\section{Proof of \Cref{prop:EJR-M-strong-EJR-1-no-logic-relation}}
\label{app:no-logic-relation}

We prove that neither \EJRM nor \strongEJROne implies each other.
First, the allocation in \Cref{ex:Gen-MES} satisfies \strongEJROne but not \EJRM, which means that \strongEJROne does not imply \EJRM.

Next, we show that GreedyEJR-M does not satisfy \strongEJROne; it immediately follows that \EJRM does not imply \strongEJROne.
Consider an instance with $n = 5$ agents, four indivisible goods $g_1, \dots, g_4$, a cake of length~$1$, and $\alpha = 3$.
The agents' valuations over the resources are as specified in the following figure.
\begin{center}
\tikzstyle{vertex}=[circle,draw,minimum size=0.6cm,inner sep=0pt]
\tikzstyle{approvedVertex}=[circle,draw,minimum size=0.6cm,inner sep=0pt,fill=gray!50]
\tikzstyle{approvedPiece}=[fill=gray!50]
\begin{tikzpicture}
\node at (2.5,1) {$G$};
\node at (8,1) {$C$};
\draw (1,0) node[vertex] {$g_1$}
(2,0) node[vertex] {$g_2$}
(3,0) node[vertex] {$g_3$}
(4,0) node[vertex] {$g_4$};
\draw (6,-.1) rectangle ++(4,.2);
\node[label=above:{$0$}] at (6,0) {};
\node[label=above:{$0.2$}] at (6.8,0) {};
\node[label=above:{$0.5$}] at (8,0) {};
\node[label=above:{$1$}] at (10,0) {};

\node at (-1,-1) {$R_1, R_2$};
\draw (1,-1) node[approvedVertex] {}
(2,-1) node[approvedVertex] {}
(3,-1) node[approvedVertex] {};
\draw[approvedPiece] (6,-1.1) rectangle ++(2,.2);
\draw[dashed,color=gray!50] (8,0) to (8,-3);

\node at (-1,-2) {$R_3$};
\draw (2,-2) node[approvedVertex] {}
(3,-2) node[approvedVertex] {}
(4,-2) node[approvedVertex] {};
\draw[approvedPiece] (6.8,-2.1) rectangle ++(3.2,.2);
\draw[dashed,color=gray!50] (6.8,0) to (6.8,-2);

\node at (-1,-3) {$R_4, R_5$};
\draw (4,-3) node[approvedVertex] {};
\draw[approvedPiece] (8,-3.1) rectangle ++(2,.2);
\end{tikzpicture}
\end{center}

We start by stating several facts in relation to cohesive groups:
\begin{itemize}
\item First, agents~$1, 2, 3$ form a $1.8$-cohesive group, because $|\{1, 2, 3\}| = 3 = 1.8 \cdot \frac{5}{3}$ and the agents commonly approve a resource of size at least~$1.8$, that is, 
\[
s(\cap_{i \in \{1, 2, 3\}} R_i) = |\{g_2, g_3\}| + \ell([0.2, 0.5]) = 2.3 \geq 1.8.
\]
However, it is worth noting that the EJR-M condition on this group does \emph{not} work for any $t > 1.3$, as the agents in this group do not commonly approve a resource of size \emph{exactly} $t$ for any $t\in (1.3,1.8]$.

\item Similarly, agents~$3, 4, 5$ form a $1.5$-cohesive group, because $|\{3,4,5\}| = 3 \geq 1.5 \cdot \frac{5}{3}$ and $s(\bigcap_{i \in \{3, 4, 5\}} R_i) = |\{g_4\}| + \ell([0.5, 1]) = 1.5$.
In addition, the EJR-M condition with $t = 1.5$ works for this group.

\item Finally, agents~$1, 2$ form a $1.2$-cohesive group, because $|\{1,2\}| = 2 = 1.2 \cdot \frac{5}{3}$ and the agents commonly approve a resource of size at least~$1.2$.
Moreover, the EJR-M condition with $t = 1.2$ works for this group.
\end{itemize}

Taking as input the above instance, GreedyEJR-M first chooses $t^* = 1.5$, along with $N^* = \{3, 4, 5\}$ and $R^* = \{g_4\} \cup [0.5, 1]$.
We are left with $N' = \{1, 2\}$, and the next iteration of GreedyEJR-M chooses $t^* = 1.2$, $N^* = \{1, 2\}$, and, for example, $R^* = \{g_1\} \cup [0, 0.2]$.
Hence, GreedyEJR-M terminates with the allocation $A = \{g_1, g_4\} \cup [0, 0.2] \cup [0.5, 1]$.

Recall that agents~$1, 2, 3$ form a $1.8$-cohesive group, and note that $u_1(A) = u_2(A) = 1.2 < 1.8$ and $u_3(A) = 1.5 < 1.8$.
In particular, all agents in this $1.8$-cohesive group receive utility strictly less than~$1.8$.
Moreover, there is still a cake commonly approved by this group that is not included in $A$ (i.e., $[0.2, 0.5]$).
Hence, GreedyEJR-M does not satisfy \strongEJROne.
As a result, \EJRM does not imply \strongEJROne.

\section{(Almost) Tightness of \Cref{thm:prop-degree-GPAV}}
\label{app:tightness}

In their work, \citet{AzizElHu18} showed that, for indivisible-goods instances, the bound $t-1$ on the proportionality degree of PAV is tight for every positive integer $t$.
In fact, they proved a stronger statement that for any integer $t \ge 1$ and real number $\varepsilon > 0$, there exists an instance such that no allocation provides an average satisfaction of at least $t - 1 + \varepsilon$ to every $t$-cohesive group.
We extend their result to our setting by showing that for real numbers $t \ge 1$, the bound $t-1$ is essentially tight for large~$t$; our formal statement can be found below.

\begin{theorem}\label{thm:AS-upper-bound}
For any real number $t \geq 1$ and $\varepsilon >0$, let $z = t - \floor{t}$ denote the fractional part of~$t$.
There exists an indivisible-goods instance in which no allocation provides an average satisfaction of at least $t-1+\frac{z(1-z)}{t} + \varepsilon$ to all $t$-cohesive groups.
\end{theorem}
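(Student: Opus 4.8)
The plan is to adapt the tightness construction of \citet{AzizElHu18} from integer~$t$ to an arbitrary real $t \ge 1$, and to quantify the extra slack that the fractional part $c = t - \floor{t}$ creates. Throughout, write $k \coloneqq \floor{t}$, so that $t = k + c$ with $c \in [0,1)$, and observe that the target bound can be rewritten as $t - 1 + \frac{c(1-c)}{t} = k\bigl(1 - \frac{1-c}{t}\bigr)$. The high-level strategy is to exhibit an indivisible-goods instance together with a rich family of $t$-cohesive groups, and then argue that no allocation of size at most $\alpha$ can keep the average satisfaction of every one of these groups above the target value. Since the statement quantifies over all allocations, it suffices to show that for every allocation~$A$, at least one $t$-cohesive group is served poorly; I would establish this by a counting/aggregation argument rather than by analysing any particular rule.

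Concretely, I would first build the instance so that each relevant $t$-cohesive group has size (approximately) $t n / \alpha$ and commonly approves exactly $\ceiling{t} = k+1$ indivisible goods, choosing $n$ to be a large multiple of $\alpha$ so that the ceilings $\ceiling{t n/\alpha}$ round cleanly. The groups should overlap---both in their agents and in their approved goods---in a symmetric, combinatorially regular pattern, and $\alpha$ should be set to the smallest value compatible with $t$-cohesiveness. The point of the overlap together with the tight budget is that fully serving every group simultaneously would require strictly more than $\alpha$ selected goods, so some group must be left short. I would then assume for contradiction that some allocation~$A$ gives every $t$-cohesive group average satisfaction greater than $B \coloneqq t-1+\frac{c(1-c)}{t}$, translate each such inequality into a lower bound on how much of that group's commonly approved resource~$A$ must contain, and aggregate these local demands---summing over the group--good incidences with appropriate weights---into the global inequality $s(A) > \alpha$, contradicting $s(A) \le \alpha$.

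The fractional part~$c$ is what distinguishes the bound from the integer case~$t-1$. When $t$ is an integer, each group commonly approves exactly~$t$ goods and the deficit forced by the budget is exactly one good per group, giving the clean value~$t-1$. When $c > 0$, each group instead commonly approves $k+1$ goods while the cohesiveness constraint only scales the budget like $t = k+c$; the most economical way for~$A$ to spread its budget then leaves a $\frac{1-c}{t}$ fraction of each group's members effectively unserved while the rest receive utility~$k$, and this split averages to exactly $k\bigl(1 - \frac{1-c}{t}\bigr) = B$. I would make this precise by designing the instance so that the optimal (by symmetry, uniform) way to distribute~$\alpha$ reproduces this $k$-versus-$0$ pattern, and so that the induced average is~$B$ up to a term vanishing as $n \to \infty$; this term is absorbed into the additive~$\varepsilon$.

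The main obstacle I anticipate is that the hypothesis ``average satisfaction $> B$'' is considerably weaker than ``fully served'': a group can exceed~$B$ while receiving fewer than~$\ceiling{t}$ of its common goods, so the aggregation step cannot simply demand that~$A$ contain all of each~$R^*$. The counting therefore has to be done fractionally and tuned so that the resulting budget lower bound is \emph{exactly}~$\alpha$ (not merely $\Omega(\alpha)$), which is precisely where the constant $\frac{c(1-c)}{t}$ must surface and where the regularity of the overlap pattern is essential. A secondary, purely arithmetic difficulty is controlling the ceilings $\ceiling{t n/\alpha}$ and $\ceiling{t}$ simultaneously, which I expect to dispatch exactly as in the proofs of \Cref{thm:EJR-M-UB-average-satisfaction} and \Cref{thm:prop-degree-GreedyEJR-M}, by taking~$n$ large and folding the lower-order terms into~$\varepsilon$.
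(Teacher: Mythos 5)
There is a genuine gap: the decisive quantitative step of your plan---converting ``every $t$-cohesive group has average satisfaction $> B$'' into the budget violation $s(A) > \alpha$ by a fractional, weighted aggregation over group--good incidences---is exactly the step you defer to future work in your ``main obstacle'' paragraph, and you never exhibit an instance, a weighting, or an inequality that makes it go through. Since the entire content of the theorem lives there (it is where the constant $\frac{c(1-c)}{t}$ must emerge), what you have is a plan rather than a proof. Your instance is also underspecified in a way that matters: you ask the groups to overlap ``both in their agents and in their approved goods,'' which makes any aggregation argument harder, and you are silent on irrational $t$, for which no choice of large $n$ makes the arithmetic exact.

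It is instructive to compare with the paper's proof, which sidesteps your obstacle entirely. For rational $t$ it takes $k+1$ \emph{pairwise disjoint} good sets $G_1,\dots,G_{k+1}$, each of size $k+1$, agent blocks $N_0,N_1,\dots,N_{k+1}$ with sizes chosen so that each group $M_i \coloneqq N\setminus N_i$ has exactly $t\cdot n/\alpha$ members and approves exactly $G_i$, and $\alpha = k+1-\gamma$ for a small rational $\gamma>0$. Because $\alpha < k+1$ and the goods are indivisible, every feasible allocation contains at most $k$ goods, so by pigeonhole over the $k+1$ disjoint sets some $G_i$, say $G_1$, is entirely excluded---no contradiction hypothesis and no aggregation over groups is needed. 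The bound then comes from a direct computation for $M_1$ alone: every selected good lies in some $G_j$ with $j\ne 1$ and is therefore approved only by $M_1\cap M_j$, a $\frac{t-(1-c)+\gamma}{t}$ fraction of $M_1$, so $k$ goods cap the average satisfaction at $\frac{k\left(t-(1-c)+\gamma\right)}{t} \le t-1+\frac{c(1-c)}{t}+\gamma$. In other words, the constant surfaces not from a globally tuned counting argument but from the \emph{dilution} of each surviving good inside the one starved group---your rewriting $B = \floor{t}\left(1-\frac{1-c}{t}\right)$ is correct, but the mechanism realizing it is per-good dilution, not a ``$k$-versus-$0$'' split of agents. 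Finally, the paper needs a separate perturbation step for irrational $t$ (choose rational $t' > t$ with $\floor{t'}=\floor{t}$ and note that $t'$-cohesive groups are $t$-cohesive); your proposal would need an analogous step, since its exact integrality requirements cannot be met for irrational $t$ by taking $n$ large.
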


\begin{proof}
  We first assume that $t$ (and therefore $z$) is a rational number.
  Let $\gamma$ be a rational number in $(0, \varepsilon)$ such that $\gamma < 1-z$.
  Let $z = p/q$ and $\gamma = p'/q$ for some $p', p, q \in \mathbb{Z}$, and let $k = \floor{t}$ (so $t = k+z$).
  Our instance is given as follows.
  \begin{itemize}
      \item The set of agents $N$ can be partitioned into the following pairwise disjoint sets:
          \begin{itemize}
              \item $N_0$ contains $q\cdot ((k+1)z+k\gamma)$ agents;
              \item Each of $N_1, \ldots, N_{k+1}$ contains $q\cdot (1-z-\gamma)$ agents.
          \end{itemize}
          Therefore, the total number of agents is
          \begin{align*}
          n &= q\cdot ((k+1)z+k\gamma + (k+1)(1-z-\gamma))
          = q\cdot (k+1-\gamma).
          \end{align*}
      \item The resource consists of $(k+1)^2$ indivisible goods, which can be partitioned into $k+1$ disjoint sets $G_1, \ldots, G_{k+1}$, each with $k+1$ goods.
      \item The agents' preferences are such that for each $i \in \{1, \ldots, k+1\}$, the agents in $M_i \coloneqq N\setminus N_i$ commonly approve the goods in $G_i$.
      Note that each set $M_i$ consists of $q\cdot((k+1)z+k\gamma + k(1-z-\gamma)) = q\cdot (k+z) = q\cdot t$ agents.
      \item We set $\alpha = k+1-\gamma$.
      Since each set $M_i$ consists of $q\cdot t = t\cdot n/\alpha$ agents and these agents commonly approve $|G_i| = k+1 \ge t$ goods, $M_i$ is a $t$-cohesive group.
  \end{itemize}

  Because $\alpha = k+1-\gamma < k+1$ and our instance consists only of indivisible goods, any feasible allocation~$A$ can include at most $k$ goods.
  Since we have $k+1$ sets of goods $G_1, \ldots, G_{k+1}$, at least one of these sets is entirely excluded from~$A$.
  Consider an arbitrary allocation~$A$, and assume without loss of generality that $G_1$ is such a set.

  We now analyze the average satisfaction of the $t$-cohesive group~$M_1$ with respect to~$A$.
  Note that for each $i \in \{2,\dots,k+1\}$, any good selected from $G_i$ contributes a total utility of
  \begin{align*}
  |M_1 \cap M_i|
  &= |N \setminus (N_1 \cup N_i)| \\
  &= q\cdot (k+1-\gamma - 2(1-z-\gamma)) \\
  &= q\cdot(t-(1-z)+\gamma)
  \end{align*}
  to agents in $M_1$. Therefore, all goods in $A$ combined give a total utility of at most
  \[q\cdot k \cdot(t-(1-z)+\gamma) = q[(t-z)(t-(1-z))+k\gamma]\]
  to all agents in $M_1$.
  It follows that the average satisfaction of $M_1$ with respect to~$A$ is at most
  \begin{align*}
  \frac{q[(t-z)(t-(1-z))+k\gamma]}{qt}
  &= \frac{(t-z)(t-(1-z))+k\gamma}{t} \\
  &= \frac{(t^2-zt) - (t-z)(1-z)+k\gamma}{t}  \\
  &= \frac{t^2-zt + zt - t + z(1-z)+k\gamma}{t}\\
  &\leq t-1+\frac{z(1-z)}{t}+\gamma \\
  &< t-1+\frac{z(1-z)}{t}+\varepsilon.
\end{align*}
This completes the proof for the case where $t$ is a rational number.

Finally, assume that $t$ is irrational.
Let $t' > t$ and $\varepsilon' < \varepsilon$ be rational numbers such that $\floor{t'} = \floor{t}$ and $t'+ \frac{z'(1-z')}{t'}+ \varepsilon' < t+\frac{z(1-z)}{t}+ \varepsilon$, where $z' = t' - \floor{t'}$; the existence of such a pair $(t',\varepsilon')$ is guaranteed by the fact that, as we increase $t$ slightly, the value of $t + \frac{z(1-z)}{t}$ changes continuously.
From our previous argument, there exists an instance in which no allocation provides an average satisfaction of at least $t'-1 + \frac{z'(1-z')}{t'} + \varepsilon'$ to all $t'$-cohesive groups.
Suppose for contradiction that there is an allocation that provides an average satisfaction of at least $t-1 + \frac{z(1-z)}{t} + \varepsilon$ to all $t$-cohesive groups in this instance.
Because any $t'$-cohesive group is also $t$-cohesive, this allocation would also provide an average satisfaction of at least $t-1 + \frac{z(1-z)}{t} + \varepsilon > t'-1+\frac{z'(1-z')}{t'}+\varepsilon'$ to all $t'$-cohesive groups, a contradiction.
\end{proof}

We also show that the bound proved in \Cref{thm:AS-upper-bound} is---perhaps surprisingly---tight.

\begin{theorem}\label{thm:AS-lower-bound}
For any real number $t \geq 1$, let $z = t - \floor{t}$ denote the fractional part of~$t$.
Given any instance, there exists an allocation that provides an average satisfaction of at least $t - 1 + \frac{z (1 - z)}{t}$ to all $t$-cohesive groups.
\end{theorem}

\begin{proof}
Let $\mathcal{I} = \langle N, R, (R_i)_{i \in N}, \alpha \rangle$ be an instance, and let $k = \floor{t}$ (so $t = k + z$).
Let $\mathcal{T} = \{T_1, T_2, \dots, T_p\}$ be the set of all $t$-cohesive groups in instance~$\mathcal{I}$.
By definition, for each~$i \in \{1,2,\dots,p\}$, it holds that $|T_i| \geq t \cdot n/\alpha$ and $s(\bigcap_{j \in T_i} R_j) \geq t$.

We create another instance~$\widehat{\mathcal{I}} = \langle N, \widehat{R}, (\widehat{R}_i)_{i \in N}, \alpha \rangle$ with the same set of agents~$N$ and a modified resource $\widehat{R} = \bigcup_{i=1}^p \widehat{R}^{T_i}$ such that
\begin{itemize}
\item for any pair of distinct~$i, j \in \{1, \dots, p\}$, $\widehat{R}^{T_i} \cap \widehat{R}^{T_j} = \emptyset$;

\item for each~$i \in \{1,\dots,p\}$, $\widehat{R}^{T_i}$ consists of~$k$ \emph{indivisible} goods that are commonly approved (only) by the agents in~$T_i$.
\end{itemize}
Put differently, the resource~$\widehat{R}^{T_i}$ in instance~$\widehat{\mathcal{I}}$ corresponds to the resource~$\bigcap_{j \in T_i} R_j$ in instance~$\mathcal{I}$, but has a weakly smaller size.
The idea here is that we are ``worsening'' the instance in terms of the average satisfaction for all $t$-cohesive groups.
More specifically, given an allocation~$\widehat{R}'$ for instance~$\widehat{\mathcal{I}}$, we can select an allocation~$R'$ for instance~$\mathcal{I}$ as follows: for each $\widehat{R}^{T_i}$, if $\ell \le k$ (indivisible) goods from $\widehat{R}^{T_i}$ are included in~$\widehat{R}'$, then we include a resource of size $\ell$ from~$\bigcap_{j \in T_i} R_j$ in $R'$.
(Even if some resource gets included in~$R'$ multiple times during this process, it is effectively only included once.)
Clearly, $s(R') \leq s(\widehat{R}') \leq \alpha$.
Moreover, it can be seen that for each~$i \in \{1,\dots,p\}$,
\[
\frac{1}{|T_i|} \cdot \sum_{j \in T_i} u_j(R') \geq \frac{1}{|T_i|} \cdot \sum_{j \in T_i} u_j(\widehat{R}').
\]
It is worth noting that in instance~$\widehat{\mathcal{I}}$, each agent group~$T_i$ may no longer be a $t$-cohesive group, because their set of commonly approved goods is now $\widehat{R}^{T_i}$, whose size is only $k = \floor{t}$.
Nevertheless, our goal is to find an allocation~$\widehat{R}'$ for instance~$\widehat{\mathcal{I}}$ such that for every~$i \in \{1,\dots,p\}$, the average satisfaction of~$T_i$ with respect to~$\widehat{R}'$ in instance~$\widehat{\mathcal{I}}$ is at least $t - 1 + \frac{z (1-z)}{t}$, even if $T_i$ is not a $t$-cohesive group in instance~$\widehat{\mathcal{I}}$.
As a result, as we argued above, there exists a corresponding allocation~$R'$ for instance~$\mathcal{I}$ that has the same or better average satisfaction for all $t$-cohesive groups in instance~$\mathcal{I}$.

We apply the classic PAV rule to find the allocation~$\widehat{R}'$ for instance~$\widehat{\mathcal{I}}$ (which is an indivisible-goods instance).
That is, we choose~$\widehat{R}'$ that maximizes $H(\widehat{R}') = \sum_{i \in N} H_{u_i(\widehat{R}')}$, where $H_x \coloneqq 1 + \frac12 + \dots + \frac1x$ is the $x$-th harmonic number.
The following analysis is similar to that of \Cref{thm:GPAV}, but more refined.
By adding an indivisible good approved by no agent to the resource~$\widehat{R}$ as well as~$\widehat{R}'$ if necessary, we may assume without loss of generality that $s(\widehat{R}) = \floor{\alpha}$.

To show that~$\widehat{R}'$ satisfies the target average satisfaction value for each $T_i \in \mathcal{T}$, we assume for contradiction that there exists $T_* \in \mathcal{T}$ such that
\begin{equation}\label{eq:contradiction-assumption}
\frac{1}{|T_*|} \cdot \sum_{i \in T_*} u_i(\widehat{R}') < t - 1 + \frac{z (1-z)}{t} = t - z - 1 + \frac{z (t - z + 1)}{t} = k - 1 + \frac{(k+1) z}{t}.
\end{equation}
Since $\frac{(k+1) z}{t} < \frac{k + z}{t} = 1$, we have
\begin{equation}\label{eq:range}
k - 1 \leq k - 1 + \frac{(k+1) z}{t} < k,
\end{equation}
which means that there exists a good~$g^* \in \widehat{R}^{T_*}$ that is not selected in~$\widehat{R}'$.
Let $\widehat{R}'' \coloneqq \widehat{R}' \cup \{g^*\}$; so $|\widehat{R}''| = \floor{\alpha} + 1$.
We have
\begin{align*}
H(\widehat{R}'') - H(\widehat{R}') \geq \sum_{i \in T_*} \left( H_{u_i(\widehat{R}') + 1} - H_{u_i(\widehat{R}')} \right) = \sum_{i \in T_*} \frac{1}{u_i(\widehat{R}') + 1}.
\end{align*}

In what follows, we provide a lower bound on $\sum_{i \in T_*} \frac{1}{u_i(\widehat{R}') + 1}$.
First, for each~$i \in T_*$, let $y_i = u_i(\widehat{R}')$.
We thus have $\sum_{i \in T_*} \frac{1}{u_i(\widehat{R}') + 1} = \sum_{i \in T_*} \frac{1}{y_i + 1}$.
If there exists a pair $i, j \in T_*$ with $y_i \geq y_j + 2$, let $y_i' = y_i - 1$ and $y_j' = y_j + 1$.
It is easy to verify that
\[
\frac{1}{y_i + 1} + \frac{1}{y_j + 1} > \frac{1}{y_i' + 1} + \frac{1}{y_j' + 1}.
\]
By replacing~$y_i$ (resp.,~$y_j$) with~$y_i'$ (resp.,~$y_j'$), we decrease $\sum_{b \in T_*} \frac{1}{y_b + 1}$.
Repeat this process until, for every pair $i, j \in T_*$, it holds that $|y_i - y_j| \leq 1$.
We now have
\[
\sum_{i \in T_*} \frac{1}{u_i(\widehat{R}') + 1} \geq \sum_{i \in T_*} \frac{1}{y_i + 1}.
\]
Note that all $y_i$'s are integers and the following equation still holds:
\begin{equation}\label{eq:transform}
\sum_{i \in T_*} u_i(\widehat{R}') = \sum_{i \in T_*} y_i.
\end{equation}
Together with \eqref{eq:contradiction-assumption} and \eqref{eq:range}, we have that $y_i \leq k-1$ for some~$i \in T_*$, and therefore $y_i \leq k$ for all~$i \in T_*$.

If $\sum_{i \in T_*} y_i \leq |T_*| \cdot (k - 1)$, then
\[
\sum_{i \in T_*} \frac{1}{y_i + 1} \geq \frac{|T_*|^2}{\sum_{i \in T_*} (y_i + 1)} = \frac{|T_*|^2}{\sum_{i \in T_*} y_i + |T_*|} \geq \frac{|T_*|^2}{|T_*| \cdot (k-1) + |T_*|} \geq \frac{|T_*|}{t} \geq \frac{n}{\alpha},
\]
where the first transition follows from the inequality of arithmetic and harmonic means and the second-to-last transition from the fact that $t\ge k$.

Else, $\sum_{i \in T_*} y_i > |T_*| \cdot (k-1)$.
This means that there exists $i \in T_*$ such that $y_i = k$.
Let $\widetilde{T}_* \coloneqq \{i \in T_* \mid y_i = k\}$; we have $|\widetilde{T}_*| > 0$.
As argued earlier, for all~$i \in T_* \setminus \widetilde{T}_*$, it holds that $y_i = k-1$.
Recall from \eqref{eq:contradiction-assumption} and \eqref{eq:transform} that
\begin{align*}
|\widetilde{T}_*| \cdot k + (|T_*| - |\widetilde{T}_*|) \cdot (k-1) 
&= \sum_{i \in \widetilde{T}_*} k + \sum_{i \in T_* \setminus \widetilde{T}_*} (k-1)\\
&=\sum_{i \in T_*} y_i\\
&= \sum_{i \in T_*} u_i(\widehat{R}')
< |T_*| \cdot \left( k - 1 + \frac{(k+1) z}{t} \right),
\end{align*}
which implies that $|\widetilde{T}_*| < |T_*| \cdot \frac{(k+1) z}{t}$.
As a result, we have
\begin{align*}
\sum_{i \in T_*} \frac{1}{y_i + 1} = \sum_{i \in \widetilde{T}_*} \frac{1}{k+1} + \sum_{i \in T_* \setminus \widetilde{T}_*} \frac{1}{k} &= \frac{|\widetilde{T}_*|}{k+1} + \frac{|T_*| - |\widetilde{T}_*|}{k} \\
&= \frac{(k+1) \cdot |T_*| - |\widetilde{T}_*|}{k (k+1)} \\
&> \frac{(k+1) \cdot |T_*| - |T_*| \cdot \frac{(k+1) z}{t}}{k (k+1)} \\
&= \frac{|T_*| \cdot \left( 1 - \frac{z}{t} \right)}{k} = \frac{|T_*|}{t} \geq \frac{n}{\alpha}.
\end{align*}
Therefore, in either case, adding~$g^*$ increases the PAV-score of~$\widehat{R}'$ by at least~$n/\alpha$.

Finally, the rest of the proof proceeds in the same way as that of \Cref{thm:GPAV} when arguing about the marginal contribution of an indivisible good in $R''$ (which corresponds to $\widehat{R}''$ in our proof here).
For each good~$g \in \widehat{R}$, denote by $N_g \subseteq N$ the set of agents who approve it.
For each~$g \in \widehat{R}''$, we have
\[
H(\widehat{R}'') - H(\widehat{R}'' \setminus \{g\}) = \sum_{i \in N_g} \left( H_{u_i(\widehat{R}'')} - H_{u_i(\widehat{R}'') - 1} \right) = \sum_{i \in N_g} \frac{1}{u_i(\widehat{R}'')}.
\]
Letting $N_+$ consist of the agents~$i\in N$ with $u_i(\widehat{R}'') > 0$, we get
\begin{align*}
\sum_{g \in \widehat{R}''} (H(\widehat{R}'') - H(\widehat{R}'' \setminus \{g\})) = \sum_{g \in \widehat{R}''} \sum_{i \in N_g} \frac{1}{u_i(\widehat{R}'')} = \sum_{i \in N_+} \sum_{g \in \widehat{R}'' \cap \widehat{R}_i} \frac{1}{u_i(\widehat{R}'')} = |N_+| \leq n.
\end{align*}
If there is a good~$g \in \widehat{R}''$ such that $H(\widehat{R}'') - H(\widehat{R}'' \setminus \{g\}) < n/\alpha$ (clearly, $g \neq g^*$), we can replace~$g$ with~$g^*$ in~$\widehat{R}'$ and obtain a higher PAV-score, contradicting the definition of~$\widehat{R}'$.
Hence, we may assume that $H(\widehat{R}'') - H(\widehat{R}'' \setminus \{g\}) \geq n/\alpha$ for every~$g \in \widehat{R}''$.
It follows that
\[
n \geq \sum_{g \in \widehat{R}''} (H(\widehat{R}'') - H(\widehat{R}'' \setminus \{g\})) \geq |\widehat{R}''| \cdot \frac{n}{\alpha}.
\]
Therefore, we have that $|\widehat{R}''| \leq \alpha$, contradicting the fact that $|\widehat{R}''| = \floor{\alpha} + 1$.
This completes the proof.
\end{proof}

The proof of \Cref{thm:AS-lower-bound}, however, does not show that there exists an allocation with the claimed average satisfaction guarantee for all~$t$ \emph{simultaneously}, as the creation of the new instance~$\widehat{\mathcal{I}}$ depends on~$t$.
While it is conceivable that Generalized PAV achieves this simultaneous guarantee, proving (or disproving) this seems to be a challenging task.

\end{document}